\documentclass[
    aps,
    prb,
    reprint,
    superscriptaddress,
    longbibliography
]{revtex4-2}

\usepackage{mathtools}
\usepackage{amssymb}
\usepackage{amsthm}

\usepackage{newtxtext}
\usepackage{microtype}

\usepackage{xcolor}
    \definecolor{ARegion}{HTML}{87CEEB}
    \definecolor{BRegion}{HTML}{F08080}
    \definecolor{CRegion}{HTML}{FFFF80}
\usepackage{tikz}
\usetikzlibrary{shapes.geometric}

\usepackage{enumitem}

\usepackage[bookmarks=false]{hyperref}
\hypersetup{
    colorlinks=true,
    linktoc=all,
    allcolors=blue
}

\DeclarePairedDelimiter{\abs}{\lvert}{\rvert}
\DeclarePairedDelimiter{\norm}{\lVert}{\rVert}
\DeclarePairedDelimiter{\ket}{\lvert}{\rangle}

\DeclarePairedDelimiterX{\braket}[2]{\langle}{\rangle}{#1\delimsize\vert\mathopen{}#2}
\DeclarePairedDelimiterX{\ketbra}[2]{\lvert}{\rvert}{#1\delimsize\rangle\!\delimsize\langle\mathopen{}#2}
\DeclarePairedDelimiterX{\mel}[3]{\langle}{\rangle}{#1\delimsize\vert\mathopen{}#2\delimsize\vert\mathopen{}#3}
\DeclarePairedDelimiterX{\expval}[1]{\langle}{\rangle}{#1}

\newtheorem{theorem}{Theorem}
\newtheorem{lemma}[theorem]{Lemma}
\newtheorem{proposition}[theorem]{Proposition}
\newtheorem*{restatedtheorem}{Theorem~\ref*{thm:interval-spectrum}}

\DeclareMathOperator{\Ad}{Ad}
\DeclareMathOperator{\Tr}{Tr}
\DeclareMathOperator{\sgn}{sgn}
\DeclareMathOperator{\rank}{rank}
\DeclareMathOperator{\Arg}{Arg}
\renewcommand{\Im}{\operatorname{Im}}

\newcommand{\dd}[1]{\mathop{}\!\mathrm{d}#1}
\newcommand{\defeq}{\equiv}
\NewDocumentCommand{\CMI}{o m m m}{%
    I\IfValueT{#1}{_{#1}}(#2\,\mathord{:}\,#3\vert#4)%
}

\begin{document}

\title{An Instability of the Quantum Markov Length}

\author{David C. Newsom}
\email{dcn@uchicago.edu}
\author{Michael Levin}
\affiliation{Leinweber Institute for Theoretical Physics, University of Chicago, Chicago, Illinois 60637,  USA}

\begin{abstract}
    The Markov length, the length scale governing the exponential decay of conditional mutual information (CMI) in a quantum many-body state, has proven useful for characterizing mixed-state phases of matter.
    Nevertheless, we show that a state's Markov length is not generally stable under weak local channels.
    The failure of stability occurs already at a pure product state.
    We demonstrate this explicitly in a one-dimensional fermionic chain, where a fine-tuned depth-two circuit of nearest-neighbor Gaussian channels maps the vacuum to a state whose CMI decays \emph{algebraically} and whose Markov length is therefore infinite, even when the channels are taken arbitrarily close to the identity.
    The algebraic decay is rooted in the nonanalyticity of the von Neumann entropy at rank-deficient density operators.
    We exhibit an analogous instability on an ordinary qubit chain.
    More generally, we show that the Markov length is unstable in any translation-invariant state with even a single rank-deficient reduced density operator.
    These results pose a challenge to recent definitions of mixed-state phase equivalence based on the Markov length.
\end{abstract}

\maketitle

\textbf{\emph{Introduction}}---Recent efforts to define and characterize mixed-state phases of matter have drawn on analogies with zero-temperature quantum phases.
In the study of zero-temperature phases, the central question is to understand how properties of the ground state or low-energy excitations change as parameters in the underlying Hamiltonian are varied.
In the mixed-state setting, one instead begins with a ground state or a Gibbs state of a many-body Hamiltonian and subjects it to noise modeled by a finite-depth circuit of local channels.
One then asks how the properties of the resulting state change as the parameters of these channels, particularly their strength, are varied~\cite{Fan:DiagnosticsMixedState:2024,Bao:MixedStateTopological:2026,Lee:QuantumCriticality:2023,Zou:ChannelingQuantum:2023,Sang:MixedStateQuantum:2024,Sang:StabilityMixedState:2025,Yang:TopologicalMixed:2026,Sang:MixedStatePhases:2026}.

Within this analogy, a state's ``Markov length''~\cite{Sang:StabilityMixedState:2025} has been proposed to play a role similar to that of a Hamiltonian's energy gap in zero-temperature phases.
To define this quantity, let \(\rho\) be a density operator of a lattice many-body system.
For a simply connected region \(A\) shielded from an exterior region \(C\) by a buffer \(B\) (Fig.~\ref{fig:markov-length-geometry}), consider the conditional mutual information (CMI)
\begin{equation}\label{eq:cmi-definition}
    \CMI{A}{C}{B}_{\rho} \defeq S(\rho_{AB}) + S(\rho_{BC}) - S(\rho_{B}) - S(\rho_{ABC}),
\end{equation}
where \(S(\rho_{R}) \defeq -\Tr [\rho_{R} \log(\rho_{R})]\) is the von Neumann entropy associated with region \(R\).
The Markov length of \(\rho\) is the smallest length scale \(\xi\) for which the CMI admits an exponential bound in the buffer width (up to polynomial factors in the region sizes), i.e.,
\begin{equation}
    \CMI{A}{C}{B}_{\rho} \leq \operatorname{poly}(\abs{A}, \abs{B}, \abs{C}) e^{-\operatorname{dist}(A, C)/\xi},
\end{equation}
uniformly over all such choices of \(A\), \(B\), and \(C\).
If no finite \(\xi\) satisfies this bound, indicating slower-than-exponential decay of the CMI, the state \(\rho\) is said to have infinite Markov length.

The operational significance of the Markov length is that it provides a criterion for when local-channel evolution on a state \(\rho\) can be \emph{locally} reversed.
More precisely, let \(\mathcal{E}\) be a finite-depth circuit of local channels.
Ref.~\cite{Sang:StabilityMixedState:2025} showed that if the Markov length remains bounded throughout the application of \(\mathcal{E}\) \footnote{More precisely, the Markov length must remain bounded after applying any number of complete layers of \(\mathcal{E}\), followed by an arbitrary subset of channels in the next layer.}, then there exists another finite-depth circuit of local channels that undoes the effect of \(\mathcal{E}\) on \(\rho\), one gate at a time.
Subsequent work~\cite{Sang:MixedStatePhases:2026} showed that states connected by such ``locally reversible'' circuits share many long-distance qualitative properties, including topological degeneracy and the structure of anyon-like excitations.
Therefore, just as a nonvanishing energy gap ensures that the long-distance properties of ground states are preserved along a path of Hamiltonians~\cite{Hastings:QuasiadiabaticContinuation:2005}, a bounded Markov length ensures that long-distance properties are preserved along a path of states.

\begin{figure}
    \centering
    \input{Figures/CMI}
    \caption{\label{fig:markov-length-geometry}
        Geometry used to define the Markov length in two dimensions.
        Here, \(C\) is any finite region encircling \(A\), with \(B\) filling the space between them.
    }
\end{figure}

This analogy naturally raises the question of whether the Markov length is stable under small perturbations: if \(\rho_{0}\) has finite Markov length, does \(\mathcal{E}(\rho_{0})\) also have finite Markov length for any finite-depth circuit \(\mathcal{E}\) of sufficiently weak local channels \footnote{A variant of this question was originally posed in Ref.~\cite{Sang:StabilityMixedState:2025} in the form of a conjecture.}?
An affirmative answer would provide a simple stability principle for mixed-state phases, implying that the long-distance qualitative properties of any finite-Markov-length state are robust to weak local noise and decoherence.
Several results support such a conclusion.
For example, the Markov length remains finite up to a nonzero noise strength in toric-code models subject to Pauli noise~\cite{Sang:StabilityMixedState:2025,Yang:TopologicalMixed:2026,Vijay:InformationCritical:2026}.
Analytical results have also established stability for certain finite-temperature Gibbs states of commuting Hamiltonians under restricted classes of local channels~\cite{Zhang:StabilityMixedstate:2025,Zhang:ApproximateMarkovianity:2026}.

In this work, we show that the answer to this question is negative in general.
The failure of stability occurs already at a pure product state \(\rho_{0}\), which has identically vanishing CMI and zero Markov length.
More specifically, we construct circuits of fixed finite depth whose constituent local channels can be made arbitrarily close to the identity, yet for which \(\mathcal{E}(\rho_{0})\) has CMI that decays \emph{algebraically} with the buffer width and hence has \emph{infinite} Markov length.

\textbf{\emph{Fermionic example}}---The simplest example exhibiting this instability is a one-dimensional chain of spinless \emph{fermions} with annihilation operators \(\{c_{j}\}_{j \in \mathbb{Z}}\) (we will discuss bosonic variants of this example below).

We take the initial state \(\rho_{0}\) to be the unoccupied many-body vacuum \(\ketbra{\Omega}{\Omega}\) and consider a depth-two circuit \(\mathcal{E}\) of nearest-neighbor Gaussian channels \(\mathcal{E}_{j}\) (Fig.~\ref{fig:depth-two-channel}).
That is, \(\mathcal{E} = \mathcal{E}^{(1)} \circ \mathcal{E}^{(0)}\), where
\begin{equation}\label{eq:channel-layers}
    \mathcal{E}^{(0)} = \prod_{j \in 2\mathbb{Z}} \mathcal{E}_{j}, \qquad \mathcal{E}^{(1)} = \prod_{j \in 2\mathbb{Z} + 1} \mathcal{E}_{j}.
\end{equation}
We define each channel \(\mathcal{E}_{j}\), acting on sites \(j\) and \(j + 1\), in three steps.
First, we introduce a fermionic ancilla \(d_{j}\), initialized in the maximally mixed state.
Second, we couple the ancilla to the two-site mode
\begin{equation}\label{eq:two-site-mode}
    s_{j} \defeq \frac{1}{\sqrt{2}} (c_{j} + c_{j+1}),
\end{equation}
by applying the Gaussian unitary
\begin{equation}\label{eq:system-ancilla-unitary}
    U_{j} = e^{\epsilon(s_{j}^{\dagger} d_{j} - d_{j}^{\dagger} s_{j})}.
\end{equation}
Finally, we trace out the ancilla.
Altogether, the channel \(\mathcal{E}_{j}\) is given by
\begin{equation}\label{eq:local-channel}
    \mathcal{E}_{j}(\rho) = \Tr_{d_{j}}\Bigl[U_{j} \bigl(\rho \otimes I_{d_{j}}/2\bigr) U_{j}^{\dagger}\Bigr].
\end{equation}
Here, the real parameter \(\epsilon\) controls the strength of \(\mathcal{E}_{j}\).
In what follows, we take \(0 \leq \epsilon < \frac{\pi}{2}\) \footnote{
    The restriction \(0 \leq \epsilon < \frac{\pi}{2}\) is imposed only for simplicity.
    Relaxing it requires several technical steps in the appendix to be modified depending on the sign of \(\cos(\epsilon)\), without changing the qualitative behavior of the CMI.
}.

Let \(\rho = \mathcal{E}(\rho_{0})\) be the state produced by this circuit.
The essential feature of \(\rho\), as we will establish below, is that the conditional mutual information \(\CMI{A}{C}{B}_{\rho}\) decays \emph{algebraically} with the width of the buffer \(B\) for every nonzero \(\epsilon\), with \(A\), \(B\), and \(C\) arranged as in Fig.~\ref{fig:depth-two-channel}.
The Markov length is therefore zero at \(\epsilon = 0\) and infinite for every \(\epsilon > 0\) \footnote{
    Strictly speaking, infinite Markov length is defined using the two-sided geometry \(C\!-\!B\!-\!A\!-\!B\!-\!C\), the direct one-dimensional analog of Fig.~\ref{fig:markov-length-geometry}, rather than the one-sided geometry \(A\!-\!B\!-\!C\) used here.
    For our state, however, this distinction is immaterial: finite-depth factorization and data processing show that algebraic decay in the one-sided geometry is sufficient to imply algebraic decay in the two-sided geometry.
}, thereby exhibiting the claimed instability.

To compute \(\CMI{A}{C}{B}_{\rho}\), we use the fact that \(\rho\) is \emph{Gaussian}~\cite{Bravyi:LagrangianRepresentation:2005}: the initial state \(\rho_{0}\) is Gaussian, and Gaussianity is preserved by each channel \(\mathcal{E}_{j}\).
The state \(\rho\) is therefore completely described by its correlation matrix,
\begin{equation}\label{eq:correlation-matrix}
    C_{ij} \defeq \expval{c_{i}^{\dagger} c_{j}}_{\rho} = \Tr\bigl[\rho \, c_{i}^{\dagger} c_{j}\bigr].
\end{equation}
In particular, the von Neumann entropy associated with a region \(R\) is given by~\cite{Dierckx:FermionicQuasifree:2008}
\begin{equation}\label{eq:gaussian-entropy}
    S(\rho_{R}) = \Tr h\bigl(C_{R}\bigr),
\end{equation}
where \(C_{R} = \bigl(C_{ij}\bigr)_{i,j \in R}\) denotes the restricted correlation matrix, and \(h(x) \defeq -x \log(x) - (1 - x) \log(1 - x)\).

\begin{figure}
    \centering
    \input{Figures/FDLC}
    \caption{\label{fig:depth-two-channel}
        Gate structure of the depth-two channel \(\mathcal{E}\) (gray boxes).
        Individual sites (black dots) are grouped into two-site unit cells (dotted lines).
        The CMI regions \(A\), \(B\), and \(C\) each consist of a whole number of unit cells.
    }
\end{figure}

Let us compute the correlation matrix \eqref{eq:correlation-matrix}.
The two-site translation invariance of the circuit is inherited by \(\rho\) and its correlation matrix, so that \(C_{ij} = C_{i+2\ell,j+2\ell}\).
We thus group the chain into two-site unit cells indexed by \(x \in \mathbb{Z}\) (Fig.~\ref{fig:depth-two-channel}), with unit cell \(x\) consisting of the neighboring sites \((2x, 2x+1)\).
This partitions the correlation matrix into \(2 \times 2\) blocks \(C_{x-y}\), whose components are \((C_{x-y})_{\mu\nu} = C_{2x+\mu,2y+\nu}\) for \(\mu, \nu \in \{0, 1\}\).
These blocks are compactly encoded by the \(k\)-space matrix \(C(k) \defeq \sum_{x \in \mathbb{Z}} e^{i k x} C_{x}\).
A straightforward calculation gives (see Appendix~\ref{app:correlation-matrix})
\begin{equation}\label{eq:correlation-symbol}
    C(k) = \begin{pmatrix}
        a(k) & b(k) \\[1ex]
        \bar{b}(k) & a(k)
    \end{pmatrix},
\end{equation}
where
\begin{align}
    \label{eq:correlation-symbol-diagonal} a(k) &= \frac{1}{2} \sin^{2}(\epsilon) - \frac{1}{4} \sin^{4}(\epsilon) \cos^{2}\Bigl(\frac{k}{2}\Bigr), \\[0.5ex]
    \label{eq:correlation-symbol-offdiagonal} b(k) &= \frac{1}{4} \sin^{2}(\epsilon) (1 + e^{i k}) \Bigl(\cos^{4}\Bigl(\frac{\epsilon}{2}\Bigr) + \sin^{4}\Bigl(\frac{\epsilon}{2}\Bigr) e^{i k}\Bigr).
\end{align}
Fourier transforming back to real space shows that \(C_{x-y} = 0\) whenever \(\abs{x-y} > 2\).
The correlation matrix is therefore \emph{finite}-range, as expected for a state obtained from a product state by a finite-depth circuit of local channels.

The spectrum of the correlation matrix consists of two bands obtained by diagonalizing \(C(k)\),
\begin{equation}\label{eq:correlation-eigenvalue-bands}
    \lambda_{\pm}(k) = a(k) \pm \abs{b(k)}.
\end{equation}
The crucial property of our example is that the lower band \(\lambda_{-}\) vanishes at \(k = 0\):
\begin{equation}\label{eq:lower-band-quadratic-zero}
    \lambda_{-}(k) = \gamma k^{2} + O(k^{4}), \qquad \gamma \defeq \frac{1}{8} \frac{\sin^{2}(\epsilon)}{1 + \cos^{2}(\epsilon)}.
\end{equation}

This zero has a simple physical origin: the channel \(\mathcal{E}\) has a dark mode,
\begin{equation}
    c_{\pi} = \sum_{j} (-1)^{j} c_{j}.
\end{equation}
Indeed, \(c_{\pi}\) is orthogonal to each of the modes \(s_{j}\) in \eqref{eq:two-site-mode} to which the ancillas are coupled.
It is therefore unaffected by \(\mathcal{E}\) and remains unoccupied in the state \(\rho\).
Consequently, the weights \(v_{j} = (-1)^{j}\) of this dark mode form a null vector of the correlation matrix \(C\).
In \(k\)-space, this null vector corresponds precisely to the \(k = 0\) point of the lower band \(\lambda_{-}\).
We note that a closely related free-fermion model with an analogous dark mode was studied in Ref.~\cite{Pocklington:EfficientSimulation:2025}.

We now show how the vanishing of \(\lambda_{-}\) at \(k = 0\) gives rise to algebraically decaying \(\CMI{A}{C}{B}_{\rho}\).
At the heart of our computation is an asymptotic formula for the von Neumann entropy of a large interval.
Specifically, for an interval containing \(N\) unit cells, we will show that
\begin{equation}\label{eq:interval-entropy-asymptotics}
    S(\rho_{\mathrm{int}}) = N \bar{s} + (\text{const.}) - \frac{c}{N^{2}} + O(N^{-3}),
\end{equation}
where \(\bar{s}\) and \(c\) are constants defined below.
The first term describes the usual volume-law scaling of the entropy, while the second is a constant contribution from the endpoints of the interval.
We will see that the distinctive \(N^{-2}\) term arises directly from the zero of \(\lambda_{-}\).

Substituting \eqref{eq:interval-entropy-asymptotics} into the definition of the CMI \eqref{eq:cmi-definition}, the volume-law and constant terms cancel.
Therefore, for large intervals \(A\), \(B\), and \(C\) containing \(n_{A}\), \(n_{B}\), and \(n_{C}\) unit cells, respectively (Fig.~\ref{fig:depth-two-channel}), the leading term in the CMI is
\begin{equation}\label{eq:cmi-asymptotics}
    \CMI{A}{C}{B}_{\rho} = c \Bigl(\frac{1}{n_{B}^{2}} + \frac{1}{n_{ABC}^{2}} - \frac{1}{n_{AB}^{2}} - \frac{1}{n_{BC}^{2}}\Bigr),
\end{equation}
where \(n_{AB} \defeq n_{A} + n_{B}\), \(n_{BC} \defeq n_{B} + n_{C}\), and \(n_{ABC} \defeq n_{A} + n_{B} + n_{C}\).
Equation~\eqref{eq:cmi-asymptotics} establishes our claim: \(\CMI{A}{C}{B}_{\rho}\) decays \emph{algebraically} with the buffer width \(n_{B}\), so in particular \(\rho\) has infinite Markov length.

It remains to derive the entropy formula \eqref{eq:interval-entropy-asymptotics}.
As a warm-up, we first compute the entropy \(S(\rho_{\mathrm{ring}})\) for a \emph{ring} of \(N\) unit cells.
The ring geometry is illuminating because its entropy contains a similar \(N^{-2}\) term to that in \eqref{eq:interval-entropy-asymptotics}, but the calculation is simpler.

By \eqref{eq:gaussian-entropy}, the computation of \(S(\rho_{\mathrm{ring}})\) reduces to summing \(h\) over the eigenvalues of \(C_{\mathrm{ring}}\).
These eigenvalues are given by the same bands \eqref{eq:correlation-eigenvalue-bands} as on the infinite line, but with \(k\) quantized in multiples of \(\frac{2\pi}{N}\).
The entropy can therefore be written as
\begin{equation}\label{eq:ring-entropy-momentum-sum}
    S(\rho_{\mathrm{ring}}) = \sum_{j = 0}^{N-1} g\Bigl(\frac{2\pi j}{N}\Bigr),
\end{equation}
where \(g(k) \defeq h(\lambda_{-}(k)) + h(\lambda_{+}(k))\).

Since \(g\) is \(2\pi\)-periodic, it has the Fourier expansion \(g(k) = \sum_{m \in \mathbb{Z}} \hat{g}(m) e^{i m k}\), where \(\hat{g}(m) = \int_{-\pi}^{\pi} \frac{\dd{k}}{2\pi} g(k) e^{-i m k}\).
Substituting this expansion into \eqref{eq:ring-entropy-momentum-sum}, only Fourier modes whose mode number \(m\) is a multiple of \(N\) survive:
\begin{equation}\label{eq:ring-entropy-fourier-sum}
    S(\rho_{\mathrm{ring}}) = N \hat{g}(0) + N \sum_{j \neq 0} \hat{g}(jN).
\end{equation}
Here, we have separated out the \(m = 0\) mode because it gives the leading extensive contribution to the entropy.

The asymptotic behavior of the sum in \eqref{eq:ring-entropy-fourier-sum} is determined by the decay of the Fourier coefficients \(\hat{g}(m)\), which in turn is controlled by the differentiability of \(g\).
While the Fourier coefficients of smooth functions necessarily decay superpolynomially with \(\abs{m}\), the function \(g\) is \emph{not} smooth: its second derivative has a logarithmic singularity arising from the zero of \(\lambda_{-}\).
Specifically, using \eqref{eq:lower-band-quadratic-zero}, we find
\begin{equation}
    g^{\prime \prime}(k) = -2\gamma \log(k^{2}) + O(1) \to \infty
\end{equation}
as \(k \to 0\).
This singularity causes the Fourier coefficients to decay only \emph{algebraically} (see Appendix~\ref{app:fourier-asymptotics}):
\begin{equation}\label{eq:entropy-fourier-asymptotics}
    \hat{g}(m) = -\frac{2\gamma}{\abs{m}^{3}} + O(\abs{m}^{-5}).
\end{equation}
Substituting \eqref{eq:entropy-fourier-asymptotics} into \eqref{eq:ring-entropy-fourier-sum} gives
\begin{equation}\label{eq:ring-entropy-asymptotics}
    S(\rho_{\mathrm{ring}}) = N \bar{s} - \frac{4\gamma \zeta(3)}{N^{2}} + O(N^{-4}),
\end{equation}
where \(\bar{s} \defeq \hat{g}(0) = \int_{-\pi}^{\pi} \frac{\dd{k}}{2\pi} g(k)\) and \(\zeta(l) \defeq \sum_{n = 1}^{\infty} \frac{1}{n^{l}}\) is the Riemann zeta function.

With our warm-up complete, we return to the interval entropy \eqref{eq:interval-entropy-asymptotics}.
Here, we give a heuristic derivation of this result, leaving a fully rigorous treatment to Appendix~\ref{app:interval-entropy-asymptotics}.

On physical grounds, we expect a large interval and a large ring to have the same limiting entropy \emph{density}.
The ring calculation therefore fixes the leading extensive term in \eqref{eq:interval-entropy-asymptotics} to be \(N \bar{s}\).
Because the correlation matrix is finite range, we also expect any entropy contributions arising from the endpoints to be local and therefore independent of the interval length.
This justifies the constant term in \eqref{eq:interval-entropy-asymptotics}.

Any remaining terms should be finite-size corrections that vanish as \(N \to \infty\).
Based on the ring calculation, we expect the leading correction to arise from summing \(h\) over the low-lying eigenvalues of \(C_{\mathrm{int}}\).
For large \(N\), these eigenvalues are asymptotically distributed as (see Appendix~\ref{app:interval-eigenvalues})
\begin{equation}\label{eq:interval-low-lying-spectrum}
    \biggl\{\lambda_{-}\Bigl(\frac{\pi j}{N}\Bigr), \quad j = 1, 2, \ldots \biggr\}.
\end{equation}
This motivates introducing the sum
\begin{equation}\label{eq:interval-entropy-proxy-sum}
    G_{N} \defeq \frac{1}{2} \sum_{j = 0}^{2N-1} g\Bigl(\frac{\pi j}{N}\Bigr).
\end{equation}
Using the evenness and periodicity of \(g(k)\), we can equivalently write \(G_{N} = \sum_{j = 1}^{N-1} g(\frac{\pi j}{N}) + \frac{g(0) + g(\pi)}{2}\).
This expression shows that \(G_{N}\) reproduces the sum of \(h\) over the low-lying eigenvalues in \eqref{eq:interval-low-lying-spectrum}, together with additional nonsingular contributions such as \(\sum_{j = 1}^{N-1} h(\lambda_{+}(\frac{\pi j}{N}))\).
We therefore expect \(G_{N}\) to have the same leading algebraic correction as \(S(\rho_{\mathrm{int}})\).
We can read off this correction directly from the ring calculation: comparing \eqref{eq:interval-entropy-proxy-sum} with \eqref{eq:ring-entropy-momentum-sum} shows that \(G_{N}\) is half the entropy of a ring with \(2N\) unit cells.
Applying \eqref{eq:ring-entropy-asymptotics} with \(N\) replaced by \(2N\), we find that the leading algebraic correction to \(G_{N}\) is \(-\frac{\gamma \zeta(3)}{2N^{2}}\).
This gives \eqref{eq:interval-entropy-asymptotics} with \(c = \frac{\gamma \zeta(3)}{2}\), completing the derivation.

\textbf{\emph{Fine-tuning}}---It is worth emphasizing that the algebraic decay of \(\CMI{A}{C}{B}_{\rho}\) in our example requires fine-tuning of both the local channels \(\mathcal{E}_{j}\) and the initial state \(\rho_{0}\).
For example, consider varying the relative weights in \eqref{eq:two-site-mode} by taking \(s_{j} = \cos(\theta) c_{j} + \sin(\theta) c_{j+1}\) with \(0 < \theta < \frac{\pi}{2}\).
The resulting state \(\rho = \mathcal{E}(\rho_{0})\) is again described by a \(2 \times 2\) momentum-space correlation matrix, but for \(\theta \neq \frac{\pi}{4}\), both eigenvalue bands \(\lambda_{\pm}(k)\) are bounded away from zero (see Appendix~\ref{app:fine-tuning}).

The same behavior occurs if we instead leave the channels unchanged but replace the initial vacuum by the finite-temperature Gibbs state \(\rho_{\beta} \propto \exp(-\beta \sum_{j \in \mathbb{Z}} c_{j}^{\dagger} c_{j})\).
This is still a Gaussian product state, but each fermionic mode now has mean occupancy \((1 + e^{\beta})^{-1}\).
A direct calculation of the correlation matrix of the state \(\mathcal{E}(\rho_{\beta})\) shows that its eigenvalue bands are shifted from the zero-temperature bands \eqref{eq:correlation-eigenvalue-bands} according to \(\lambda_{\pm,\beta}(k) = (1 + e^{\beta})^{-1} + \tanh(\frac{\beta}{2}) \lambda_{\pm}(k)\).
This shift leaves both bands bounded away from zero for any finite initial temperature.

In either case, lifting the zero of \(\lambda_{-}\) makes the function \(g(k) = h(\lambda_{-}(k)) + h(\lambda_{+}(k))\) \emph{analytic}, so its Fourier coefficients decay \emph{exponentially} with the mode number.
Consequently, by \eqref{eq:ring-entropy-fourier-sum}, the von Neumann entropy of a ring takes the form \(S(\rho_{\mathrm{ring}}) = N \bar{s} + O(e^{-(\text{const.}) N})\), with no algebraic correction.
Similarly, the interval entropy takes the form \(S(\rho_{\mathrm{int}}) = N \bar{s} + (\text{const.}) + O(e^{-(\text{const.}) N})\), leading to exponentially decaying \(\CMI{A}{C}{B}_{\rho}\).

\textbf{\emph{R\'enyi CMI}}---The role of nonanalyticity in producing algebraic decay is further illustrated by the order-\(n\) R\'enyi CMI \(\CMI[n]{A}{C}{B}_{\rho}\), defined by replacing each von Neumann entropy in \eqref{eq:cmi-definition} with its R\'enyi analog, \(S_{n}(\rho) \defeq \frac{1}{1 - n} \log(\Tr(\rho^{n}))\), where \(n > 0\) and \(n \neq 1\).
The case \(n = 2\), in particular, has recently been studied in Refs.~\cite{Zhang:StrongtoweakSpontaneous:2025,Kuno:RenyiMarkov:2026}.
In our example, the R\'enyi CMI \(\CMI[n]{A}{C}{B}_{\rho}\) decays \emph{exponentially} with \(n_{B}\) for every \emph{integer} \(n \geq 2\), in contrast with the von Neumann CMI.
To see this, recall that the R\'enyi entropy of a fermionic Gaussian state can be computed in the same way as the von Neumann entropy, with the function \(h(x)\) in \eqref{eq:gaussian-entropy} replaced by \(h_{n}(x) \defeq \frac{1}{1 - n} \log(x^{n} + (1 - x)^{n})\)~\cite{Dierckx:FermionicQuasifree:2008}.
Crucially, for integer \(n \geq 2\), \(h_{n}(x)\) is analytic in a neighborhood of \([0, 1]\), including the endpoints.
Thus, despite the zero of \(\lambda_{-}\), the function \(g_{n}(k) \defeq h_{n}(\lambda_{-}(k)) + h_{n}(\lambda_{+}(k))\) remains analytic.
Its Fourier coefficients \(\hat{g}_{n}(m)\) therefore decay exponentially as \(\abs{m} \to \infty\), so by \eqref{eq:ring-entropy-fourier-sum} the R\'enyi entropy of a ring takes the form \(S_{n}(\rho_{\mathrm{ring}}) = N \bar{s}_{n} + O(e^{-(\text{const.}) N})\), with no algebraic correction.
The entropy of an interval has a similar form, but with an additional constant term, leading to exponentially decaying R\'enyi CMI \(\CMI[n]{A}{C}{B}_{\rho}\).

For noninteger \(n\), by contrast, \(h_{n}(x)\) contains a nonanalytic term proportional to \(x^{n}\) near \(x = 0\), so the algebraic correction associated with the zero of \(\lambda_{-}\) reappears (see Appendix~\ref{app:interval-entropy-asymptotics}):
\begin{equation}\label{eq:interval-renyi-entropy-asymptotics}
    S_{n}(\rho_{\mathrm{int}}) = N \bar{s}_{n} + (\text{const.}) - \frac{c(n)}{N^{2n}} + o(N^{-2n}),
\end{equation}
where \(c(n) = \frac{(\pi^{2} \gamma)^{n} \zeta(-2n)}{n - 1}\).
Consequently, the R\'enyi CMI has the same form as in \eqref{eq:cmi-asymptotics}, but with exponent \(2n\) instead of \(2\) and prefactor \(c(n)\) instead of \(c\).
For integer \(n \geq 2\), \(\zeta(-2n) = 0\), so \(c(n)\) vanishes, consistent with the analyticity argument above.
Moreover, the limit \(n \to 1\) gives \(\lim_{n \to 1} c(n) = -2\pi^{2} \gamma \, \zeta'(-2) = \frac{\gamma \zeta(3)}{2}\), recovering the von Neumann result.

We now show that the instability of the Markov length persists well beyond the fermionic finite-depth setting considered above.

\textbf{\emph{Lindbladian example}}---First, the same phenomenon occurs under evolution generated by a time-independent local Lindbladian.
To construct such an example, note that each local gate \eqref{eq:local-channel} can be written exactly as \(\mathcal{E}_{j} = e^{t \mathcal{L}_{j}}\), where \(\mathcal{L}_{j}(\rho) = \frac{1}{2} (s_{j}^{\dagger} \rho s_{j} + s_{j} \rho s_{j}^{\dagger} - \rho)\) and \(t\) is defined by \(e^{-t} = \cos^{2}(\epsilon)\) (see Appendix~\ref{app:channel-representations}).
This motivates considering the channel \(\mathcal{N} = e^{t \mathcal{L}}\), where \(\mathcal{L} = \sum_{j \in \mathbb{Z}} \mathcal{L}_{j}\).
The state \(\rho = \mathcal{N}(\rho_{0})\) is now invariant under translations by a \emph{single} site, and a straightforward calculation shows that it has the \emph{scalar} momentum-space correlation matrix \(C(k) = \frac{1}{2} [1 - \exp(-2t \cos^{2}(\frac{k}{2}))]\).
As in our original model, \(C(k)\) touches zero, now at \(k = \pi\).
A similar analysis gives the CMI in the form \eqref{eq:cmi-asymptotics}, with \(c = \frac{t \zeta(3)}{32}\) (see Appendix~\ref{app:lindbladian-model}).

\textbf{\emph{Bosonic example}}---The mechanism underlying the instability is not specific to fermionic statistics: a closely analogous construction exists for canonical bosons.
In this variant, the \(\{c_{j}\}\) are bosonic annihilation operators obeying the canonical commutation relations \([c_{i}, c_{j}^{\dagger}] = \delta_{ij}\) and \([c_{i}, c_{j}] = 0\).
Each \(\mathcal{E}_{j}\) is a two-site Gaussian bosonic channel defined in the same way as in \eqref{eq:two-site-mode}--\eqref{eq:local-channel}, with the only difference being that the ancillas \(d_{j}\) are now bosons initialized in a Gaussian Gibbs state \(\tau_{d_{j}} \propto e^{-\beta d_{j}^{\dagger} d_{j}}\).
Choosing \(\beta\) so that \(\expval{d_{j}^{\dagger} d_{j}}_{\tau_{d_{j}}} = \frac{1}{2}\), the correlation matrix \eqref{eq:correlation-matrix} is unchanged from the fermionic case \eqref{eq:correlation-symbol}.
We can therefore compute the von Neumann entropy in the same way as in the fermionic case, except that the function \(h(x)\) in \eqref{eq:gaussian-entropy} is replaced by \(h_{\mathrm{b}}(x) \defeq (x + 1) \log(x + 1) - x \log(x)\)~\cite{Holevo:EvaluatingCapacities:2001}.
This modification does not affect the logarithmic singularity at \(k = 0\), since \(h_{\mathrm{b}}(\lambda_{-}(k)) + h_{\mathrm{b}}(\lambda_{+}(k))\) has the same leading nonanalytic behavior as in the fermionic case.
Consequently, the CMI has the same algebraic decay as in \eqref{eq:cmi-asymptotics}, with the same constant \(c\).

\textbf{\emph{Qubit example}}---The examples considered so far are all Gaussian models of canonical bosons or fermions, so it is natural to ask whether the same instability can occur on an ordinary spin chain.
A naive approach is to use the Jordan--Wigner transformation to map the fermionic circuit onto a chain of \emph{qubits}.
This works at the level of states: the Jordan--Wigner image of the fermionic state \(\rho=\mathcal{E}(\rho_{0})\) is a valid qubit state \(\rho^{\mathrm{q}}\) on the infinite chain.
Moreover, for any contiguous interval, the Jordan--Wigner equivalence descends to the corresponding reduced states \(\rho_{\mathrm{int}}\) and \(\rho_{\mathrm{int}}^{\mathrm{q}}\) \footnote{
In general, equivalence under Jordan--Wigner does \emph{not} descend to arbitrary reduced states \(\rho_{R}\) and \(\rho_{R}^{\mathrm{q}}\); see Appendix C of Ref.~\cite{Vidal:QuantumOperations:2021} for a discussion.
It does descend when \(R\) is contiguous with respect to the chosen Jordan--Wigner ordering.},
so they have identical spectra and hence identical entropies.
Since our calculation of the CMI involves only contiguous intervals, \(\rho^{\mathrm{q}}\) has exactly the same algebraically decaying CMI as \(\rho\).

The main subtlety is that while \(\rho^{\mathrm{q}}\) can be prepared from the corresponding qubit product state \(\rho_{0}^{\mathrm{q}}\) by the Jordan--Wigner image of the channel \(\mathcal{E}\), this channel is \emph{nonlocal}, since fermion-parity-odd Kraus operators acquire extended parity strings.
For the particular channel \(\mathcal{E}\), however, one can show that these parity strings do not affect the action of the circuit on the initial product state.
After dropping the strings, the resulting Kraus operators define a local depth-two qubit channel \(\mathcal{E}^{\mathrm{q}}\) whose action on \(\rho_{0}^{\mathrm{q}}\) produces \(\rho^{\mathrm{q}}\).
Since \(\mathcal{E}^{\mathrm{q}}\) becomes arbitrarily weak as \(\epsilon\to0\), this establishes the same instability on an ordinary spin chain.
See Appendix~\ref{app:qubit-model} for the explicit form of \(\mathcal{E}^{\mathrm{q}}\) and further discussion.

\textbf{\emph{Instability beyond product states}}---Finally, this instability is not limited to product states: it extends to any translation-invariant state \(\rho\) with a ``local rank deficiency'', i.e., for which the reduced density operator \(\rho_{R}\) fails to be full rank on at least one finite region \(R\) \footnote{
    The essential requirement is not translation invariance itself, but rather that rank-deficient reduced density operators occur on sufficiently many finite regions throughout the bulk.
    We assume translation invariance as a convenient way to ensure this.
}.
This class includes, for example, the toric-code ground state.

To see this, consider such a state \(\rho\) on a \(d\)-dimensional lattice of \emph{qubits} (analogous arguments apply to qudits and fermions).
By grouping sites into sufficiently large unit cells, we may take the rank-deficient region \(R\) to be a single unit cell satisfying \(\rank \rho_{R} \leq \frac{1}{2} \dim \mathcal{H}_{R}\).
The key idea is that this bound allows us to identify a ``virtual'' qubit within \(R\) that is in a fixed pure state and completely uncorrelated with the rest of the system.
More precisely, there exists a decomposition \(\mathcal{H}_{R} \cong \widetilde{\mathcal{H}}_{R} \otimes \mathbb{C}^{2}\) such that \(\rho_{R} \cong \sigma_{R} \otimes \ketbra{0}{0}\).
Making the same identification in every unit cell, the full state factorizes as \(\rho \cong \sigma \otimes \ketbra{0}{0}^{\otimes \mathbb{Z}^{d}}\).
We can then apply the one-dimensional qubit channel \(\mathcal{E}^{\mathrm{q}}\) independently along parallel chains in this virtual product layer, with each two-qubit gate corresponding to a physical channel acting on two neighboring unit cells.
The resulting state acquires the same algebraically decaying CMI contribution as in the product-state example and therefore has infinite Markov length.

\textbf{\emph{Discussion}}---Our results have implications for the problem of defining a precise notion of mixed-state phase equivalence---that is, specifying when two states \(\rho_{1}\) and \(\rho_{2}\) should be regarded as belonging to the same phase or to distinct phases.
Recently, several definitions have been proposed based on the Markov length and/or local reversibility~\cite{Sang:MixedStatePhases:2026,Yi:UniversalDecay:2026}.
Our results pose a challenge to these definitions.

To see why, consider the definition proposed in Ref.~\cite{Yi:UniversalDecay:2026}, according to which two density operators \(\rho_{1}\) and \(\rho_{2}\) are in the same phase if they are connected in both directions by locally reversible circuits of local channels.
Under this definition, \(\rho = \mathcal{E}(\rho_{0})\) is not in the same phase as the product state \(\rho_{0}\) for any \(\epsilon > 0\).
Indeed, locally reversible circuits preserve exponential decay of the CMI~\cite{Yi:UniversalDecay:2026}, whereas \(\rho_{0}\) has vanishing CMI and \(\rho\) has algebraically decaying CMI.

The definition proposed in Ref.~\cite{Sang:MixedStatePhases:2026} leads to the same conclusion.
There, two density operators are in the same phase if they can be (one-way) connected by a circuit of local channels while the Markov length remains bounded throughout application of the circuit.
Since \(\rho\) itself has infinite Markov length for every \(\epsilon > 0\), no such circuit can connect \(\rho_{0}\) to \(\rho\).

Excluding \(\rho\) from the phase containing \(\rho_{0}\) conflicts with what we regard as a natural requirement for a definition of mixed-state phases: the trivial phase should be a \emph{stable} phase of matter, that is, the product state should remain in the same phase under sufficiently weak local channels.
If one insists on this stability property, then one must conclude that the definitions proposed in Refs.~\cite{Sang:MixedStatePhases:2026,Yi:UniversalDecay:2026} are too restrictive.

We note that the original definition of mixed-state phase equivalence~\cite{Coser:ClassificationPhases:2019,Sang:MixedStateQuantum:2024}, based only on two-way channel connectivity, avoids this stability issue: any state obtained from \(\rho_{0}\) by a local channel circuit remains in the same phase as \(\rho_{0}\).
However, this definition has the shortcoming that it identifies states with distinct topological structure.
For example, it places the product state and the classical loop state in the same phase, even though the latter supports nontrivial topological degeneracy and classical topological memory~\cite{Sang:MixedStatePhases:2026}.
These observations suggest that an appropriate notion of phase equivalence should lie between these extremes: coarser than the criteria of Refs.~\cite{Sang:MixedStatePhases:2026,Yi:UniversalDecay:2026}, but finer than bare two-way connectivity.

One question raised by our work is whether the Markov length can be stable for other choices of the initial state \(\rho_{0}\).
The instability we have demonstrated here relies on \(\rho_{0}\) having at least one rank-deficient reduced density operator.
We therefore cannot rule out the possibility that the Markov length is stable for initial states whose reduced density operators are all full rank.
In particular, the Markov length may be stable for the maximally mixed state or, more generally, for finite-temperature Gibbs states.
The partial stability result of Ref.~\cite{Zhang:StabilityMixedstate:2025} (see also Ref.~\cite{Zhang:ApproximateMarkovianity:2026}) provides evidence in this direction.

Another question is whether the instability of the Markov length found here is intrinsically quantum.
Specifically, consider a lattice of classical spins with finite local state spaces, initially described by a product probability distribution.
Must the distribution produced by any finite-depth circuit of sufficiently weak local stochastic maps have exponentially decaying CMI?
More generally, understanding the stability of the Markov length in classical many-body systems is an interesting direction for future work.

\textbf{\emph{Note added}}---Related results concerning the instability of the Markov length have been obtained independently by Negari, Zou, and Hsieh~\cite{Negari:InstabilityMarkov:2026}.
See also Chen and Grover~\cite{Chen:MarkovLength:2026}, who highlight a tension between RG universality and Markov-length-based definitions of phase equivalence.

\textbf{\emph{Acknowledgments}}---We thank Liang Jiang and Su-un Lee for insightful discussions.
D.C.N. and M.L. acknowledge the support of the Leinweber Institute for Theoretical Physics at the University of Chicago.
This work was supported in part by a Simons Investigator grant (M.L.) and by the Simons Collaboration on Ultra-Quantum Matter, which is a grant from the Simons Foundation (651442, M.L.).

\bibliography{Bibliography}

\clearpage

\onecolumngrid
\appendix

\makeatletter
\renewcommand{\p@subsection}{\thesection.}
\makeatother

\section{Derivation of the correlation matrix}\label{app:correlation-matrix}

In this appendix, we derive the momentum-space correlation matrix for \(\rho\) given in \eqref{eq:correlation-symbol} in the main text.
We begin by determining how the correlation matrix transforms under each two-site channel \(\mathcal{E}_{j}\).
To do so, we need the action of \(U_{j}\) on the fermionic modes \(\{c_{i}\}_{i \in \mathbb{Z}}\).
Recall that, aside from the ancilla \(d_{j}\), the unitary \(U_{j}\) acts nontrivially only on the neighboring modes \(c_{j}\) and \(c_{j+1}\).
With \(\tilde{c}_{i}^{\dagger} = U_{j}^{\dagger} c_{i}^{\dagger} U_{j}\), this action is given by
\begin{equation}\label{eq:unitary-mode-transformation}
    \begin{pmatrix}
        \tilde{c}_{j}^{\dagger} \\[1ex]
        \tilde{c}_{j+1}^{\dagger}
    \end{pmatrix} = X \begin{pmatrix}
        c_{j}^{\dagger} \\[1ex]
        c_{j+1}^{\dagger}
    \end{pmatrix} + L d_{j}^{\dagger},
\end{equation}
where
\begin{equation}
    X = \begin{pmatrix}
        \cos^{2}(\frac{\epsilon}{2}) & -\sin^{2}(\frac{\epsilon}{2}) \\[2ex]
        -\sin^{2}(\frac{\epsilon}{2}) & \cos^{2}(\frac{\epsilon}{2})
    \end{pmatrix}, \qquad L = \frac{1}{\sqrt{2}} \sin(\epsilon) \begin{pmatrix}
        1 \\[1ex]
        1
    \end{pmatrix}.
\end{equation}
Now consider the restriction of the correlation matrix to the sites \((j, j+1)\), with components \(C^{(j,j+1)}_{\mu\nu} = \expval{c_{j+\mu}^{\dagger} c_{j+\nu}}_{\rho}\) for \(\mu, \nu \in \{0, 1\}\).
Under \(\mathcal{E}_{j}\), these components transform as
\begin{equation}
    \tilde{C}^{(j,j+1)}_{\mu\nu} = \expval{c_{j+\mu}^{\dagger} c_{j+\nu}}_{\mathcal{E}_{j}(\rho)} = \Tr\Bigl[\bigl(\rho \otimes I_{d_{j}}/2\bigr) \tilde{c}_{j+\mu}^{\dagger} \tilde{c}_{j+\nu}\Bigr].
\end{equation}
Using \eqref{eq:unitary-mode-transformation} and the fact that \(\expval{d_{j}^{\dagger} d_{j}}_{I_{d_{j}}/2} = \frac{1}{2}\), we obtain
\begin{equation}
    \tilde{C}^{(j,j+1)} = X C^{(j,j+1)} X^{\dagger} + Y, \qquad Y = \frac{1}{2} L L^{\dagger}.
\end{equation}

Taking the product of all two-site channels in a given layer, we find that the full correlation matrix transforms under \(\mathcal{E}^{(l)}\), with \(l \in \{0, 1\}\), as
\begin{equation}\label{eq:layer-correlation-update-real-space}
    C \xrightarrow{\mathcal{E}^{(l)}} X^{(l)} C \bigl(X^{(l)}\bigr)^{\dagger} + Y^{(l)},
\end{equation}
where \(X^{(l)}\) and \(Y^{(l)}\) are the block-diagonal matrices
\begin{equation}
    X^{(l)} = \bigoplus_{j \in 2\mathbb{Z} + l} X^{(j,j+1)}, \qquad Y^{(l)} = \bigoplus_{j \in 2\mathbb{Z} + l} Y^{(j,j+1)}.
\end{equation}
Here, \(X^{(j,j+1)}\) and \(Y^{(j,j+1)}\) denote copies of the \(2 \times 2\) matrices \(X\) and \(Y\) acting on sites \((j, j+1)\).

The matrices \(X^{(l)}\) and \(Y^{(l)}\) are invariant under translations by two sites.
We therefore use two-site unit cells indexed by \(x \in \mathbb{Z}\), with unit cell \(x\) consisting of the neighboring sites \((2x, 2x+1)\).
For any matrix \(M\) with this translation symmetry, the unit-cell decomposition gives \(2 \times 2\) blocks \(M_{x-y}\) with components \((M_{x-y})_{\mu\nu} = M_{2x+\mu,2y+\nu}\) for \(\mu, \nu \in \{0, 1\}\).
These blocks are compactly encoded by the \(k\)-space matrix \(M(k) = \sum_{x \in \mathbb{Z}} e^{i k x} M_{x}\).
Applying this construction to \(X^{(l)}\) and \(Y^{(l)}\) gives
\begin{equation}
    X^{(l)}(k) = \begin{pmatrix}
        \cos^{2}(\frac{\epsilon}{2}) & -\sin^{2}(\frac{\epsilon}{2}) e^{i l k} \\[2ex]
        -\sin^{2}(\frac{\epsilon}{2}) e^{-i l k} & \cos^{2}(\frac{\epsilon}{2})
    \end{pmatrix}, \qquad Y^{(l)}(k) = \frac{1}{4} \sin^{2}(\epsilon) \begin{pmatrix}
        1 & e^{i l k} \\[1ex]
        e^{-i l k} & 1
    \end{pmatrix}.
\end{equation}
The factors \(e^{\pm i k}\) appear in the \(l = 1\) layer because its two-site channels couple neighboring unit cells.

Representing the correlation matrix in \(k\)-space in the same way as \(X^{(l)}\) and \(Y^{(l)}\), the transformation law \eqref{eq:layer-correlation-update-real-space} becomes
\begin{equation}\label{eq:layer-correlation-update-momentum-space}
    C(k) \xrightarrow{\mathcal{E}^{(l)}} X^{(l)}(k) C(k) \bigl(X^{(l)}(k)\bigr)^{\dagger} + Y^{(l)}(k).
\end{equation}
The full channel applies \(\mathcal{E}^{(0)}\) followed by \(\mathcal{E}^{(1)}\).
Since the correlation matrix vanishes for the initial state \(\rho_{0}\), the first layer gives \(C(k) = Y^{(0)}(k)\), while the second gives \(C(k) = X^{(1)}(k) Y^{(0)}(k) (X^{(1)}(k))^{\dagger} + Y^{(1)}(k)\).
Evaluating this expression yields
\begin{equation}\label{eq:correlation-symbol-appendix}
    C(k) = \begin{pmatrix}
        a(k) & b(k) \\[1ex]
        \bar{b}(k) & a(k)
    \end{pmatrix},
\end{equation}
where
\begin{align}
    \label{eq:correlation-symbol-diagonal-appendix} a(k) &= \frac{1}{2} \sin^{2}(\epsilon) - \frac{1}{4} \sin^{4}(\epsilon) \cos^{2}\Bigl(\frac{k}{2}\Bigr), \\[0.5ex]
    \label{eq:correlation-symbol-offdiagonal-appendix} b(k) &= \frac{1}{4} \sin^{2}(\epsilon) (1 + e^{i k}) \Bigl(\cos^{4}\Bigl(\frac{\epsilon}{2}\Bigr) + \sin^{4}\Bigl(\frac{\epsilon}{2}\Bigr) e^{i k}\Bigr).
\end{align}
This is precisely \eqref{eq:correlation-symbol} from the main text.

For completeness, we also write out the correlation matrix in real space.
Taking the Fourier transform of \eqref{eq:correlation-symbol-appendix}, we find that the nonzero real-space correlation blocks are given by
\begin{equation}\label{eq:real-space-correlation-blocks}
    \begin{gathered}
        C_{0} = \frac{1}{2} \sin^{2}(\epsilon) \begin{pmatrix}
            1 - \frac{1}{4} \sin^{2}(\epsilon) & \frac{1}{2} \cos^{4}(\frac{\epsilon}{2}) \\[2ex]
            \frac{1}{2} \cos^{4}(\frac{\epsilon}{2}) & 1 - \frac{1}{4} \sin^{2}(\epsilon)
        \end{pmatrix}, \qquad C_{1} = \frac{1}{8} \sin^{2}(\epsilon) \begin{pmatrix}
            -\frac{1}{2} \sin^{2}(\epsilon) & 1 + \cos^{2}(\epsilon) \\[2ex]
            0 & -\frac{1}{2} \sin^{2}(\epsilon)
        \end{pmatrix}, \\[2ex]
        C_{2} = \frac{1}{4} \sin^{2}(\epsilon) \begin{pmatrix}
            0 & \sin^{4}(\frac{\epsilon}{2}) \\[2ex]
            0 & 0
        \end{pmatrix}.
    \end{gathered}
\end{equation}
The blocks at negative separations are fixed by Hermiticity, \(C_{-x} = C_{x}^{\dagger}\) for \(x = 1, 2\), and all remaining blocks vanish.

An important feature of the real-space formula \eqref{eq:real-space-correlation-blocks} is that \(C\) is \emph{finite-range}.
In particular, using \(C_{2x+\mu,2y+\nu} = (C_{x-y})_{\mu\nu}\), we can see that \(C_{ij} = 0\) for \(\abs{i-j} \geq 4\).
This is expected because \(\rho\) is obtained from a product state by a finite-depth circuit of local channels.

\section{Asymptotic behavior of Fourier coefficients}\label{app:fourier-asymptotics}

In this appendix, we derive the asymptotic formula \eqref{eq:entropy-fourier-asymptotics} for the decay of the Fourier coefficients \(\hat{g}(m)\) as \(\abs{m} \to \infty\).
As discussed in the main text, this result leads immediately to a corresponding asymptotic formula \eqref{eq:ring-entropy-asymptotics} for the von Neumann entropy of a ring.
We also discuss the analogous asymptotic formula related to the order-\(n\) R\'enyi entropy.

\subsection{Review of Fourier asymptotics}\label{sec:fourier-asymptotics-review}

We begin with a brief review of the relationship between the regularity of a periodic function and the asymptotic behavior of its Fourier coefficients.

In general, the asymptotic behavior of a function's Fourier coefficients is closely related to its degree of differentiability.
If \(f\) is a \(p\)-times continuously differentiable periodic function, then repeated integration by parts gives
\begin{equation}\label{eq:fourier-integration-by-parts}
    \hat{f}(m) \defeq \frac{1}{2\pi} \int_{-\pi}^{\pi} f(k) e^{-i m k} \dd{k} = \frac{1}{(i m)^{p}} \, \frac{1}{2\pi} \int_{-\pi}^{\pi} f^{(p)}(k) e^{-i m k} \dd{k}.
\end{equation}
Here the boundary terms vanish because \(f\) and its derivatives are periodic.
Since \(\int_{-\pi}^{\pi} f^{(p)}(k) e^{-i m k} \dd{k} \to 0\) as \(\abs{m} \to \infty\) by the Riemann--Lebesgue lemma, it follows that \(\hat{f}(m) = o(\abs{m}^{-p})\).
If \(f\) is smooth, this holds for every \(p \geq 0\), so its Fourier coefficients decay faster than any power of \(\abs{m}\).
We refer to this as superpolynomial decay and write \(\hat{f}(m) = o(\abs{m}^{-\infty})\).

The decay can be strengthened from superpolynomial to exponential when \(f\) is analytic.
In this case, \(f\) can be analytically continued off the real line to a complex strip \(\{z \in \mathbb{C} : \abs{\Im z} < \delta\}\), whose maximal width is determined by the nearest complex singularities of \(f\).
For any fixed \(0 < \mu < \delta\), periodicity allows the integration contour in \eqref{eq:fourier-integration-by-parts} to be shifted by \(-i \sgn(m) \mu\).
Thus, for \(m \neq 0\),
\begin{equation}
    \hat{f}(m) = \frac{e^{-\mu \abs{m}}}{2\pi} \int_{-\pi}^{\pi} f\bigl(k - i \sgn(m) \mu\bigr) e^{-i m k} \dd{k}.
\end{equation}
The remaining integral tends to zero as \(\abs{m} \to \infty\) by the Riemann--Lebesgue lemma, so \(\hat{f}(m) = o(e^{-\mu \abs{m}})\) for every \(0 < \mu < \delta\).

The entropy functions considered below are not smooth everywhere.
Instead, their algebraic Fourier tails arise from an isolated nonsmooth point.
The following localization lemma shows that only a neighborhood of this point contributes to the power-law asymptotics, while the remainder is superpolynomially small.

\begin{lemma}\label{lem:fourier-localization}
    Let \(f(k)\) be a periodic function that is smooth away from \(k = 0\), and let \(\chi(k)\) be any smooth function compactly supported in \((-\pi, \pi)\) that equals \(1\) in a neighborhood of \(k = 0\).
    Then
    \begin{equation}
        \hat{f}(m) = \frac{1}{2\pi} \int_{-\pi}^{\pi} \chi(k) f(k) e^{-i m k} \dd{k} + o(\abs{m}^{-\infty}).
    \end{equation}
\end{lemma}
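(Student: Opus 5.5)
The plan is to split \(f = \chi f + (1-\chi) f\) and show that the Fourier coefficients of the second piece decay superpolynomially. By linearity,
\begin{equation}
    \hat{f}(m) = \frac{1}{2\pi} \int_{-\pi}^{\pi} \chi(k) f(k) e^{-i m k} \dd{k} + \widehat{(1-\chi) f}(m),
\end{equation}
so it suffices to prove \(\widehat{(1-\chi) f}(m) = o(\abs{m}^{-\infty})\). Here \(\chi f\) means the function that is nonzero only on \((-\pi,\pi)\), and \((1-\chi)f\) is understood as a \(2\pi\)-periodic function.

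The key step is to check that \(F \defeq (1-\chi) f\), extended periodically, is smooth on the whole circle. There are two kinds of points to consider.

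\begin{itemize}
    \item Near \(k = 0\) (and hence near every point of \(2\pi\mathbb{Z}\) after periodic extension), \(\chi \equiv 1\) on some neighborhood \(\abs{k} < \eta\). So \(F\) vanishes identically there and is trivially smooth.
    \item On the rest of the circle, \(f\) is smooth by hypothesis and \(1-\chi\) is smooth, so \(F\) is smooth. Because \(\chi\) is compactly supported in \((-\pi,\pi)\), it vanishes near \(\pm\pi\). Hence \(1-\chi \equiv 1\) near \(\pm\pi\), and the periodic extension of \(1-\chi\) (and of \(F\)) is smooth across \(k = \pm\pi\), where \(F = f\) and \(f\) is smooth because \(\pm\pi \not\equiv 0\).
\end{itemize}

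Thus \(F\) is a smooth \(2\pi\)-periodic function.

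Next, apply the integration-by-parts argument of \eqref{eq:fourier-integration-by-parts} to \(F\) for each \(p\). Periodicity kills the boundary terms, and the Riemann--Lebesgue lemma then gives \(\hat{F}(m) = o(\abs{m}^{-p})\) for every \(p \geq 0\), i.e., \(\hat{F}(m) = o(\abs{m}^{-\infty})\). Combined with the decomposition above, this yields the lemma.

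The only potential subtlety is bookkeeping at the endpoints \(\pm\pi\), where one must make sure that the periodic extensions of \(\chi\) and \(1-\chi\) introduce no kinks. The hypothesis that \(\chi\) is compactly supported in the open interval \((-\pi,\pi)\) handles this directly. I would also note that \(f\) only needs to be integrable near \(0\) for \(\hat{f}(m)\) to be defined; this holds in all applications, since \(f\) is continuous there. No quantitative estimates are needed.
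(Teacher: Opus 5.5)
Your proposal is correct and follows the paper's proof: you split off the remainder \((1-\chi)f\), observe that its periodic extension is smooth because \(1-\chi\) vanishes near \(k=0\), and then invoke the superpolynomial decay of Fourier coefficients of smooth periodic functions. Your extra check at \(k=\pm\pi\) (where \(\chi\) vanishes, so the remainder coincides with the smooth function \(f\)) makes explicit a point the paper leaves implicit.
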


\begin{proof}
    Define the remainder \(f_{\mathrm{rem}}(k) \defeq (1 - \chi(k)) f(k)\) for \(k \in [-\pi, \pi]\) and extend it periodically.
    Since \(1 - \chi(k)\) vanishes in a neighborhood of \(k = 0\), the remainder \(f_{\mathrm{rem}}(k)\) is smooth.
    Its Fourier coefficients therefore decay superpolynomially, and hence
    \begin{equation}
        \hat{f}(m) - \frac{1}{2\pi} \int_{-\pi}^{\pi} \chi(k) f(k) e^{-i m k} \dd{k} = \hat{f}_{\mathrm{rem}}(m) = o(\abs{m}^{-\infty}).
    \end{equation}
\end{proof}

\subsection{Asymptotics of \(\hat{g}(m)\)}

We now derive the formula \eqref{eq:entropy-fourier-asymptotics} for the asymptotic behavior of \(\hat{g}(m)\).
Recall that \(g(k)\) is defined by
\begin{equation}\label{eq:entropy-spectral-function}
    g(k) = h\bigl(\lambda_{-}(k)\bigr) + h\bigl(\lambda_{+}(k)\bigr), \qquad h(x) \defeq -x \log(x) - (1 - x) \log(1 - x),
\end{equation}
where \(\lambda_{\pm}(k)\) are given in \eqref{eq:correlation-eigenvalue-bands}.

The first step is to use Lemma~\ref{lem:fourier-localization} to isolate the contribution coming from the logarithmic singularity in \(g(k)\) at \(k = 0\).
Let \(\chi(k)\) be a smooth even function compactly supported in \((-\pi, \pi)\) that equals \(1\) in a neighborhood of \(k = 0\).
Applying Lemma~\ref{lem:fourier-localization} gives
\begin{equation}\label{eq:entropy-fourier-localization}
    \hat{g}(m) = \frac{1}{2\pi} \int_{-\pi}^{\pi} \chi(k) g(k) e^{-i m k} \dd{k} + o(\abs{m}^{-\infty}).
\end{equation}

Next, we use \eqref{eq:entropy-spectral-function} to write \(g(k)\) as a sum of four terms:
\begin{equation}
    g(k) = -\lambda_{-}(k) \log\bigl(\lambda_{-}(k)\bigr) - \bigl(1 - \lambda_{-}(k)\bigr) \log\bigl(1 - \lambda_{-}(k)\bigr) - \lambda_{+}(k) \log\bigl(\lambda_{+}(k)\bigr) - \bigl(1 - \lambda_{+}(k)\bigr) \log\bigl(1 - \lambda_{+}(k)\bigr).
\end{equation}
The last three terms are smooth in the region of support of \(\chi(k)\), and repeated integration by parts shows they only make a superpolynomially small contribution to \(\hat{g}(m)\).
Dropping these terms, we obtain
\begin{equation}
    \hat{g}(m) = -\frac{1}{2\pi} \int_{-\pi}^{\pi} \chi(k) \lambda_{-}(k) \log\bigl(\lambda_{-}(k)\bigr) e^{-i m k} \dd{k} + o(\abs{m}^{-\infty}).
\end{equation}

The function \(\lambda_{-}(k)\) has a quadratic zero at \(k = 0\), with \(\lambda_{-}(k) = \gamma k^{2}+O(k^4)\), so we may write
\begin{equation}
    \chi(k) \lambda_{-}(k) \log\bigl(\lambda_{-}(k)\bigr) = \chi(k) \lambda_{-}(k) \log(k^{2}) + \chi(k) \lambda_{-}(k) \log\Bigl(\frac{\lambda_{-}(k)}{k^{2}}\Bigr).
\end{equation}
The second term is again smooth and can be dropped.
We thus obtain
\begin{align}
    \hat{g}(m) &= -\frac{1}{2\pi} \int_{-\pi}^{\pi} \chi(k) \lambda_{-}(k) \log(k^{2}) e^{-i m k} \dd{k} + o(\abs{m}^{-\infty}) \nonumber \\
    \label{eq:entropy-fourier-logarithmic-integral} &= -\frac{2}{\pi} \int_{0}^{\pi} \chi(k) \lambda_{-}(k) \log(k) \cos(m k) \dd{k} + o(\abs{m}^{-\infty}),
\end{align}
where the second equality follows from the evenness of \(\chi(k)\) and \(\lambda_{-}(k)\).
This integral is of the form considered in Theorem~4 of Ref.~\cite{Erdelyi:AsymptoticExpansions:1956}, which gives
\begin{equation}\label{eq:logarithmic-cosine-integral-asymptotics}
    \int_{0}^{\pi} \chi(k) \lambda_{-}(k) \log(k) \cos(m k) \dd{k} = \frac{\pi}{2} \frac{\lambda^{(2)}_{-}(0)}{\abs{m}^{3}} + O(\abs{m}^{-5}).
\end{equation}
Substituting \eqref{eq:logarithmic-cosine-integral-asymptotics} into \eqref{eq:entropy-fourier-logarithmic-integral} and using \(\lambda^{(2)}_{-}(0) = 2\gamma\), we obtain \eqref{eq:entropy-fourier-asymptotics} from the main text:
\begin{equation}
    \hat{g}(m) = -\frac{2\gamma}{\abs{m}^{3}} + O(\abs{m}^{-5}).
\end{equation}

\subsection{Asymptotics of R\'enyi analog \(\hat{g}_{n}(m)\)}

We now derive the R\'enyi analog of the formula \eqref{eq:entropy-fourier-asymptotics}.
Let \(n\) denote the R\'enyi parameter, which for our purposes can be any positive real number other than \(1\).
Recall that \(g_{n}(k)\) is defined by
\begin{equation}\label{eq:renyi-spectral-function}
    g_{n}(k) = h_{n}\bigl(\lambda_{-}(k)\bigr) + h_{n}\bigl(\lambda_{+}(k)\bigr), \qquad h_{n}(x) \defeq \frac{1}{1 - n} \log\bigl(x^{n} + (1 - x)^{n}\bigr).
\end{equation}
Our goal is to determine the asymptotic decay of the Fourier coefficients \(\hat{g}_{n}(m)\) as \(\abs{m} \to \infty\).
Similarly to the von Neumann case, the decay rate of these Fourier coefficients controls the behavior of the order-\(n\) R\'enyi entropy in a ring geometry, \(S_{n}(\rho_{\mathrm{ring}})\).

First suppose that the R\'enyi parameter \(n\) is an \emph{integer}, i.e.,~\(n \in \{2, 3, \ldots\}\).
In this case, the asymptotic behavior of \(\hat{g}_{n}(m)\) is particularly simple: the function \(g_{n}(k)\) is analytic, so \(\hat{g}_{n}(m)\) decays \emph{exponentially} in \(\abs{m}\), as reviewed in Appendix~\ref{sec:fourier-asymptotics-review}.

Next, consider noninteger \(n\).
In this case \(g_{n}(k)\) has a nonanalyticity at \(k = 0\), so we use Lemma~\ref{lem:fourier-localization} to isolate the \(k = 0\) contribution.
Let \(\chi(k)\) be a smooth even function compactly supported in \((-\pi, \pi)\) that equals \(1\) in a neighborhood of \(k = 0\).
Applying Lemma~\ref{lem:fourier-localization} gives
\begin{align}
    \hat{g}_{n}(m) &= \frac{1}{2\pi} \int_{-\pi}^{\pi} \chi(k) g_{n}(k) e^{-i m k} \dd{k} + o(\abs{m}^{-\infty}) \nonumber \\
    \label{eq:renyi-fourier-localization} &= \frac{1}{2\pi} \int_{-\pi}^{\pi} \chi(k) h_{n}\bigl(\lambda_{-}(k)\bigr) e^{-i m k} \dd{k} + o(\abs{m}^{-\infty}).
\end{align}
Here, the second equality follows from the fact that \(\chi(k) h_{n}(\lambda_{+}(k))\) is a smooth function of \(k\) and therefore makes only a superpolynomially small contribution to the integral.

Next, we use the expansion
\begin{equation}\label{eq:renyi-entropy-series}
    h_{n}(x) = \frac{n}{1 - n} \log(1 - x) + \frac{1}{1 - n} \sum_{p = 1}^{\infty} \frac{(-1)^{p+1}}{p} \biggl(\frac{x}{1 - x}\biggr)^{np}.
\end{equation}
This series converges uniformly when \(x\) is restricted to any compact subset of \([0, \frac{1}{2})\).
We can therefore substitute \eqref{eq:renyi-entropy-series} into the integral \eqref{eq:renyi-fourier-localization} provided that we choose \(\chi(k)\) with sufficiently small support, e.g.,~such that \(\lambda_{-}(k) < \frac{1}{4}\) whenever \(\chi(k) \neq 0\).
Dropping the first term in the expansion, since it is smooth and therefore contributes only a superpolynomially small term, and writing \(\lambda_{-}(k) = k^{2} \xi(k)\), we use the evenness of \(\chi(k)\) and \(\lambda_{-}(k)\) to obtain
\begin{equation}\label{eq:renyi-fourier-integral-series}
    \hat{g}_{n}(m) = \frac{1}{\pi(1 - n)} \sum_{p = 1}^{\infty} \frac{(-1)^{p+1}}{p} \int_{0}^{\pi} \chi(k) \biggl(\frac{\xi(k)}{1 - \lambda_{-}(k)}\biggr)^{np} \abs{k}^{2np} \cos(m k) \dd{k} + o(\abs{m}^{-\infty}).
\end{equation}

The above integrals are of the form
\begin{equation}
    I_{m,t}[f] \defeq \int_{0}^{\pi} \chi(k) f(k) \abs{k}^{t} \cos(m k) \dd{k},
\end{equation}
where \(f(k)\) is a smooth even function with \(f(0) \neq 0\), and \(t > 0\).
These integrals are of the form considered in Ref.~\cite{Erdelyi:AsymptoticExpansions:1956} (Theorem~1 for \(t \in \mathbb{Z}\) and Theorem~2 for \(t \notin \mathbb{Z}\)).
Their leading asymptotic behavior is
\begin{equation}
    I_{m,t}[f] = -\Gamma(t + 1) \sin\Bigl(\frac{\pi t}{2}\Bigr) \frac{f(0)}{\abs{m}^{t+1}} + O(\abs{m}^{-t-3}).
\end{equation}

For noninteger \(n\), the \(p = 1\) term in \eqref{eq:renyi-fourier-integral-series} is the least regular and therefore determines the leading asymptotic behavior.
Since \(\xi(0) = \gamma\), we find
\begin{equation}\label{eq:renyi-fourier-asymptotics}
    \hat{g}_{n}(m) = -\frac{\Gamma(2n + 1) \sin(\pi n) \gamma^{n}}{\pi(1 - n)} \frac{1}{\abs{m}^{2n+1}} + o(\abs{m}^{-2n-1}).
\end{equation}
We conclude that the Fourier coefficients \(\hat{g}_{n}(m)\) decay \emph{algebraically} with \(\abs{m}\) when the R\'enyi parameter is not an integer.

Using the above result \eqref{eq:renyi-fourier-asymptotics}, we can now work out the system-size dependence of the order-\(n\) R\'enyi entropy in a ring with \(N\) unit cells.
To do this, we substitute \eqref{eq:renyi-fourier-asymptotics} into \eqref{eq:ring-entropy-fourier-sum}.
Summing over \(j \neq 0\) gives
\begin{equation}
    S_{n}(\rho_{\mathrm{ring}}) = N \bar{s}_{n} - \frac{2\Gamma(2n + 1) \sin(\pi n) \zeta(2n + 1) \gamma^{n}}{\pi(1 - n)} \frac{1}{N^{2n}} + o(N^{-2n}),
\end{equation}
where \(\bar{s}_{n} \defeq \hat{g}_{n}(0)\).
Using Riemann's functional equation, \(\zeta(s) = 2^{s} \pi^{s-1} \sin(\frac{\pi s}{2}) \Gamma(1 - s) \zeta(1 - s)\), with \(s = -2n\), this may be written more compactly as
\begin{equation}
    S_{n}(\rho_{\mathrm{ring}}) = N \bar{s}_{n} - \frac{2\zeta(-2n)}{n - 1} \frac{(4\pi^{2}\gamma)^{n}}{N^{2n}} + o(N^{-2n}).
\end{equation}
This calculation provides two useful consistency checks.
First, taking the limit \(n \to 1\) and using \(\zeta'(-2) = -\frac{\zeta(3)}{4\pi^{2}}\) recovers the von Neumann result \eqref{eq:ring-entropy-asymptotics}.
Also, for every integer \(n \geq 2\), \(\zeta(-2n) = 0\), consistent with the analyticity of \(h_{n}(x)\), which requires that the algebraic term vanish and the leading subextensive term be exponentially small.

\section{Alternative parametrization of eigenvalue bands}

In this appendix, we describe an alternative way to parametrize the two eigenvalue bands \eqref{eq:correlation-eigenvalue-bands} which we will find convenient for some of our calculations.

To explain this parametrization, we first write out the formula \(\lambda_{\pm}(k) = a(k) \pm \abs{b(k)}\) as an explicit function of \(k\):
\begin{equation}\label{eq:explicit-eigenvalue-bands}
    \lambda_{\pm}(k) = \frac{1}{2} \sin^{2}(\epsilon) \Biggl(1 - \frac{1}{2} \sin^{2}(\epsilon) \cos^{2}\Bigl(\frac{k}{2}\Bigr) \pm \abs[\Big]{\cos\Bigl(\frac{k}{2}\Bigr)} \sqrt{\cos^{2}(\epsilon) + \frac{1}{4} \sin^{4}(\epsilon) \cos^{2}\Bigl(\frac{k}{2}\Bigr)}\Biggr).
\end{equation}
Notice the absolute value, which introduces cusps into the individual functions \(\lambda_{\pm}(k)\) whenever the two eigenvalues coincide, even though the unordered pair \(\{\lambda_{-}(k), \lambda_{+}(k)\}\) admits an analytic parametrization across these values of \(k\).
This obscures the regularity of symmetric functions of the spectrum, such as \(g(k) = h(\lambda_{-}(k)) + h(\lambda_{+}(k))\), where the cusps in each summand cancel.

It is therefore more convenient to parametrize the spectrum using a single analytic branch.
Define
\begin{equation}\label{eq:analytic-eigenbranch}
    \lambda(k) \defeq \frac{1}{2} \sin^{2}(\epsilon) \Biggl(1 - \frac{1}{2} \sin^{2}(\epsilon) \cos^{2}\Bigl(\frac{k}{2}\Bigr) - \cos\Bigl(\frac{k}{2}\Bigr) \sqrt{\cos^{2}(\epsilon) + \frac{1}{4} \sin^{4}(\epsilon) \cos^{2}\Bigl(\frac{k}{2}\Bigr)}\Biggr).
\end{equation}
For each \(k\), \(\lambda(k)\) coincides with one of the two ordered eigenvalues \(\lambda_{-}(k)\) and \(\lambda_{+}(k)\), while \(\lambda(k + 2\pi)\) coincides with the other.
Thus, \(\lambda(k)\) is \(4\pi\)-periodic and \(\{\lambda_{-}(k), \lambda_{+}(k)\} = \{\lambda(k), \lambda(k + 2\pi)\}\).
Consequently,
\begin{equation}\label{eq:entropy-analytic-eigenbranch-form}
    g(k) = h\bigl(\lambda(k)\bigr) + h\bigl(\lambda(k + 2\pi)\bigr),
\end{equation}
which makes its analyticity at values of \(k\) where the two eigenvalues coincide manifest.

\section{Eigenvalues of \(C_{\mathrm{int}}\)}\label{app:interval-eigenvalues}

In this appendix, we derive the formula \eqref{eq:interval-low-lying-spectrum} for the low-lying eigenvalues of the correlation matrix \(C_{\mathrm{int}}\).
We present two independent derivations: (1) a physical argument based on general properties of reflection coefficients, and (2) a direct computation.

\subsection{Physical derivation of low-lying spectrum}

Consider an interval consisting of \(N\) unit cells, or equivalently, \(2N\) sites.
Let \(C_{\mathrm{int}}\) denote the correlation matrix \(C_{ij} = \expval{c_{i}^{\dagger} c_{j}}_{\rho}\), restricted to this interval.
Our goal is to derive the approximate low-lying spectrum of \(C_{\mathrm{int}}\) in the limit of large \(N\).

Before tackling the interval case, we examine the low-lying spectrum of \(C\) in some simpler geometries.
We begin with the low-lying eigenvectors of the unrestricted infinite-chain correlation matrix \(C\).
These eigenvectors take the form
\begin{equation}\label{eq:infinite-chain-eigenvectors}
    v_{2x+\mu} = v_{\mu}(k) e^{-i k x},
\end{equation}
where \(\mu \in \{0, 1\}\) indexes the site within each unit cell and \(v(k)\) is an eigenvector of the \(2 \times 2\) matrix \(C(k)\) defined in \eqref{eq:correlation-symbol}, with eigenvalue \(\lambda_{-}(k)\).
We will find it convenient to use the following convention for the phase and normalization of \(v(k)\):
\begin{equation}
    v(k) = \frac{1}{\sqrt{2 \abs{b(k)}}} \begin{pmatrix}
        \sqrt{b(k)} \\[1ex]
        -\sqrt{b^{*}(k)}
    \end{pmatrix},
\end{equation}
where \(b(k)\) is defined in \eqref{eq:correlation-symbol-offdiagonal}.
This convention has the useful properties that \(v(k)\) is normalized, is smooth near \(k = 0\), and satisfies \(\sigma_{x} v(k) = -v(-k)\).
The last property will be important below when we make use of reflection symmetry.

Next, consider the low-lying eigenvectors of \(C_{\text{half-chain}}\), the correlation matrix restricted to the \emph{half}-infinite chain of sites \(i \in \mathbb{Z}_{\geq 0}\).
We claim that these eigenvectors are all \emph{scattering} states rather than bound states.
Bound states have a discrete spectrum, and since \(C_{\text{half-chain}}\) is manifestly positive semidefinite, a bound state could contribute to the low-lying spectrum only if its eigenvalue were exactly \(0\).
This possibility can be ruled out using the same argument as in Lemma~\ref{lem:interval-eigenvalue-bounds} below.
We can therefore restrict our attention to scattering states.
Far from the boundary, these states approach superpositions of left- and right-moving eigenstates of the infinite chain given in \eqref{eq:infinite-chain-eigenvectors}:
\begin{equation}\label{eq:half-chain-scattering-state}
    v^{\text{half-chain}}_{2x+\mu} \sim v_{\mu}(k) e^{-i k x} + R(k) v_{\mu}(-k) e^{i k x},
\end{equation}
where \(R(k) \in \mathbb{C}\) and the limit \(x \to \infty\) is understood.
The corresponding eigenvalue is \(\lambda_{-}(\abs{k})\).

We can interpret \(R(k)\) as a reflection coefficient by considering a half-infinite chain of masses and springs with the Hamiltonian \(H = \frac{1}{2} \sum_{i = 0}^{\infty} p_{i}^{2} + \frac{1}{2} \sum_{i,j = 0}^{\infty} C_{ij} x_{i} x_{j}\).
The normal modes of this phonon system are in direct correspondence with the eigenvectors of \(C_{\text{half-chain}}\).
Likewise, \(R(k)\) is precisely the reflection coefficient associated with the end of the chain.

The reflection coefficient \(R(k)\) has two general properties that we will need below:
\begin{align}
    \label{eq:reflection-unit-modulus} \abs{R(k)} &= 1, \\
    \label{eq:reflection-reciprocity} R(k) R(-k) &= 1.
\end{align}
The first property, \eqref{eq:reflection-unit-modulus}, is easy to understand from the phonon analog discussed above: conservation of energy requires the reflected wave to carry the same energy flux as the incoming wave, so \(R(k)\) must have unit modulus.
For the second property, \eqref{eq:reflection-reciprocity}, recall that \(R(k)\) is the asymptotic ratio of the coefficients of \(v_{\mu}(-k) e^{i k x}\) and \(v_{\mu}(k) e^{-i k x}\) in the unique scattering state of \(C_{\text{half-chain}}\) with eigenvalue \(\lambda_{-}(\abs{k})\).
Replacing \(k\) by \(-k\) interchanges these coefficients and therefore inverts their ratio.

As a final warm-up, we consider the low-lying spectrum of \(C_{\text{ref-half-chain}}\), the correlation matrix restricted to a \emph{reflected} half-infinite chain, consisting of sites \(i \in \mathbb{Z}_{\leq -1}\).
This half-infinite chain is related to \(C_{\text{half-chain}}\) by reflection symmetry, so we can obtain its scattering states by applying the reflection map \(i \to -i - 1\) to \eqref{eq:half-chain-scattering-state}.
Using the fact that this map acts on the index \(\mu\) like \(\sigma_{x}\) and sends the unit-cell index \(x \to -x - 1\), we obtain
\begin{equation}\label{eq:reflected-half-chain-scattering-state}
    v^{\text{ref-half-chain}}_{2x+\mu} \sim v_{\mu}(-k) e^{i k (x+1)} + R(k) v_{\mu}(k) e^{-i k (x+1)}.
\end{equation}
Here we are using the fact that \(\sigma_{x} v(k) = -v(-k)\), as mentioned earlier, and we are dropping an overall factor of \(-1\).

We are now ready to consider the problem of interest, namely the low-lying eigenvectors of \(C_{\mathrm{int}}\) for a large interval.
Specifically, we consider an interval consisting of \(2N\) sites \(\{0, 1, \ldots, 2N - 1\}\) (i.e.,~\(N\) unit cells).
Like the half-infinite chain discussed above, we expect that all low-lying eigenvectors must approach a linear combination of left- and right-moving states \eqref{eq:infinite-chain-eigenvectors} in the interior of the interval:
\begin{equation}\label{eq:interval-scattering-ansatz}
    v^{\mathrm{int}}_{2x+\mu} \approx A v_{\mu}(k) e^{-i k x} + B v_{\mu}(-k) e^{i k x},
\end{equation}
where \(A, B \in \mathbb{C}\).
Furthermore, by \eqref{eq:half-chain-scattering-state} above, the boundary condition at the left end of the interval requires that
\begin{equation}\label{eq:interval-left-boundary-condition}
    \frac{B}{A} \approx R(k)
\end{equation}
in the limit of a large interval.
As for the boundary condition on the right end of the interval, we can read this off from \eqref{eq:reflected-half-chain-scattering-state} if we replace \(x \to x - N\) in \eqref{eq:reflected-half-chain-scattering-state} to take into account the fact that the right endpoint is at \(2N - 1\) rather than \(-1\).
In this way, we obtain the constraint
\begin{equation}\label{eq:interval-right-boundary-condition}
    \frac{A e^{-i k (N-1)}}{B e^{i k (N-1)}} = R(k).
\end{equation}

Combining \eqref{eq:interval-left-boundary-condition} and \eqref{eq:interval-right-boundary-condition} gives
\begin{equation}\label{eq:reflection-quantization}
    R(k)^{2} e^{2i k (N-1)} = 1.
\end{equation}
To determine the allowed values of \(k\), note that \eqref{eq:reflection-reciprocity} implies \(R(0)^{2} = 1\).
Assuming that \(R(k)\) is smooth, it follows that
\begin{equation}\label{eq:reflection-small-momentum}
    R(k)^{2} = 1 + O(k).
\end{equation}
Substituting \eqref{eq:reflection-small-momentum} into \eqref{eq:reflection-quantization}, we deduce that the allowed values of \(k\) are of the form
\begin{equation}\label{eq:low-lying-momentum-asymptotics}
    k = \frac{\pi j}{N} + O\Bigl(\frac{j}{N^{2}}\Bigr), \qquad j \in \{1, 2, \ldots\}.
\end{equation}
The low-lying eigenvalues are therefore of the form
\begin{equation}\label{eq:interval-low-lying-spectrum-error}
    \lambda_{-}\Bigl(\frac{\pi j}{N}\Bigr) + O\Bigl(\frac{j^{2}}{N^{3}}\Bigr), \qquad j \in \{1, 2, \ldots\},
\end{equation}
as we wished to show.

Strictly speaking, the analysis above applies only for \(k \neq 0\), since it assumes that the left- and right-moving modes in \eqref{eq:half-chain-scattering-state} are linearly independent.
We therefore need to determine separately whether \(C_{\mathrm{int}}\) has an eigenvector corresponding to \(k = 0\), or equivalently, an eigenvalue \(\lambda = 0\).
To answer this question, we rely on Lemma~\ref{lem:interval-eigenvalue-bounds} below, which shows that all eigenvalues of \(C_{\mathrm{int}}\) are \emph{strictly} larger than \(0\).
This rules out the possibility of an eigenvector with \(k = 0\) and explains why we do not include \(j = 0\) in \eqref{eq:low-lying-momentum-asymptotics} or \eqref{eq:interval-low-lying-spectrum-error}.

\subsection{Direct calculation of eigenvalues}\label{sec:direct-eigenvalue-calculation}

We now present a second derivation of the formula \eqref{eq:interval-low-lying-spectrum} for the low-lying eigenvalues of \(C_{\mathrm{int}}\).
In this derivation we will go a step further than \eqref{eq:interval-low-lying-spectrum} and characterize the \emph{full} eigenvalue spectrum of \(C_{\mathrm{int}}\).
While this full spectrum is not necessary for the argument given in the main text, we will use it in Appendix~\ref{app:interval-entropy-asymptotics}, in our rigorous derivation of the asymptotic formula \eqref{eq:interval-entropy-asymptotics} for the von Neumann entropy of an interval.

Our results are summarized by the following theorem (cf. Theorem~2.1 of Ref.~\cite{Bogoya:EigenvaluesHermitian:2015}), which characterizes the eigenvalues of the restricted correlation matrix \(C_{\mathrm{int}}\) for an interval of \(N\) unit cells.

\begin{theorem}\label{thm:interval-spectrum}
    For \(N\) sufficiently large, the eigenvalues of \(C_{\mathrm{int}}\) are all distinct and can be written as \(\sigma_{j} = \lambda(k_{j})\) for \({j \in \{1, 2, \ldots, 2N\}}\), where \(\lambda(k)\) is the analytic eigenbranch defined in \eqref{eq:analytic-eigenbranch} and \(k_{j} \in (0, 2\pi)\) is the unique solution of
    \begin{equation}\label{eq:interval-quantization}
        (N + 1) k_{j} + \eta(k_{j}) = \pi j.
    \end{equation}
    Here \(\eta(k)\) is an odd real-analytic function satisfying \(\eta(k + 4\pi) = \eta(k) - 2\pi\).
\end{theorem}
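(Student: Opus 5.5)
The plan is to treat \(C_{\mathrm{int}}\) as a finite block-Toeplitz (banded) matrix with symbol \(C(k)\) and solve its eigenvalue problem exactly, in the manner of Ref.~\cite{Bogoya:EigenvaluesHermitian:2015}. Since \(C_{ij}=0\) for \(\abs{i-j}\geq 4\), an eigenvector of \(C_{\mathrm{int}}\) with eigenvalue \(\sigma\) satisfies the infinite-chain recurrence in the bulk, so the key steps are, in order:
\begin{enumerate}
    \item Write the general bulk solution of this recurrence as a combination of exponentials \(z^{x}\).
    \item Impose finitely many boundary equations at each end.
    \item Reduce these to a single scalar quantization condition.
\end{enumerate}
Concretely, the bulk exponentials \(z=e^{-ik}\) are the roots of \(\det(C(z)-\sigma)=0\), with \(C(z)\) the Laurent-polynomial continuation of the symbol. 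Because \(C(k)\) has entries with Fourier modes in \(\{-2,\dots,2\}\), this is a polynomial equation of fixed degree in \(z\).

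Using the analytic branch \(\lambda\) of \eqref{eq:analytic-eigenbranch}, it is convenient to pass to the variable \(w=e^{-ik/2}\). The spectral curve \(\lambda(k)=\sigma\) then has, for \(\sigma\) in the range of the bands, exactly two roots \(w^{\pm 1}\) on the unit circle. These are the propagating modes, corresponding to \(\pm k\) with \(k\in(0,2\pi)\). The remaining roots lie off the circle, inside or outside it, and are the evanescent modes decaying from each end.

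The boundary conditions at the two ends give a linear system for the coefficients. Solving out the evanescent coefficients, which is exponentially accurate away from the opposite end, produces a determinant of the form
\[
e^{i(N+1)k}\,\Theta(k)-e^{-i(N+1)k}\,\Theta(-k)=0 .
\]
Here \(\Theta\) is built from the symbol and the evanescent roots. Defining \(e^{2i\eta(k)}=\Theta(k)/\Theta(-k)\) turns this into \eqref{eq:interval-quantization}. Oddness of \(\eta\) follows from the \(k\to-k\) symmetry (reflection symmetry, \(\sigma_{x}v(k)=-v(-k)\)). Reality follows from \(\abs{\Theta(k)}=\abs{\Theta(-k)}\), which is the \(\abs{R}=1\) property. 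The shift \(\eta(k+4\pi)=\eta(k)-2\pi\) comes from tracking the branch of the logarithm as \(w\) winds once around the circle. Analyticity follows because the evanescent roots stay bounded away from the unit circle and depend analytically on \(k\), given that the bands are analytic in the branch parametrization. The shift also fixes the range of \(j\): as \(k\) runs over \((0,2\pi)\), the left side of \eqref{eq:interval-quantization} increases from \(0\) to \(2\pi(N+1)-2\pi=2\pi N\), giving exactly \(2N\) solutions \(j=1,\dots,2N\).

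For uniqueness and distinctness I would show that for large \(N\) the map \(k\mapsto(N+1)k+\eta(k)\) is strictly increasing, because \(\eta'\) is bounded. Each \(j\) then has exactly one solution \(k_j\), and \(\lambda\) is strictly monotone along the branch on \((0,2\pi)\). The latter needs a check at the band-touching points, using the explicit \eqref{eq:analytic-eigenbranch}. This yields \(2N\) distinct candidate eigenvalues. A counting argument, matching \(\dim C_{\mathrm{int}}=2N\), then shows these are all the eigenvalues. The solutions \(k=0,2\pi\) must be excluded as spurious, for instance because the eigenvector ansatz degenerates there, consistent with Lemma~\ref{lem:interval-eigenvalue-bounds}.

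The main obstacle is the exact reduction in the boundary-condition step. The exponentially small cross terms between the two ends must either be absorbed into an \(N\)-dependent correction, which the statement does not allow since \(\eta\) is \(N\)-independent, or shown to vanish exactly. I would first try to show the banded structure makes the elimination exact. If the symbol factors so that the evanescent modes at the left end are annihilated identically by the right-end boundary equations, then \(\Theta\) involves only one end. A second delicate point is proving that \(\Theta(k)\neq0\) on \((0,2\pi)\), so that \(\eta\) is analytic, including near \(k=0\) where \(\lambda\) vanishes quadratically. I would verify this by explicit computation of the boundary block with the real-space blocks \eqref{eq:real-space-correlation-blocks}.
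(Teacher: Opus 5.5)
The step you flag as the main obstacle is the heart of the theorem. Your setup cannot close it, because its starting picture is wrong for this symbol: there are no evanescent modes at all. From \eqref{eq:explicit-eigenvalue-bands} one gets \(a(k)^{2}-\abs{b(k)}^{2}=\lambda_{+}(k)\lambda_{-}(k)=\frac{1}{4}\sin^{4}(\epsilon)\sin^{2}(\frac{k}{2})\), so
\[
    \det\bigl(C(k)-\sigma\bigr)=\sigma^{2}-2a(k)\sigma+\tfrac{1}{4}\sin^{4}(\epsilon)\sin^{2}\bigl(\tfrac{k}{2}\bigr)
\]
is affine in \(\cos k\) for every \(\sigma\). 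In \(z=e^{ik}\) the characteristic equation is therefore \(D_{0}+D_{1}(z+z^{-1})=0\) with \(D_{1}\propto 2\sigma-1\neq0\). For \(\sigma=\lambda(k)\) its only roots are \(e^{\pm ik}\), and the ``remaining roots'' you plan to eliminate sit at \(z=0,\infty\).

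Correspondingly, the other two solutions of the interior equations are not exponentials. They are vectors supported on the first or last two unit cells (the paper's \(\Psi^{(L)}\), \(\Psi^{(R)}\), built from \(\ker C_{\pm2}\)). The plane waves need correcting only in the first and last cell, \(\delta\psi_{1}=C_{1}^{-1}C_{2}\psi_{0}\) and \(\delta\psi_{N}=C_{-1}^{-1}C_{-2}\psi_{N+1}\), which is consistent because \(C_{2}C_{1}^{-1}C_{2}=C_{-2}C_{-1}^{-1}C_{-2}=0\). Since these edge pieces never reach the opposite end, the endpoint equations reduce exactly to the \(2\times2\) system \eqref{eq:boundary-coefficient-system}, giving \(R(k)^{2}e^{2ik(N+1)}=1\) with \(R\) independent of \(N\). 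Your instinct that the elimination must be exact is right, but the mechanism is not that left evanescent modes are annihilated by the right-end equations. With genuine roots at \(0<\abs{z}<1\), your elimination would only give a quantization condition with an \(N\)-dependent, exponentially small correction (compare \(\delta_{N}(j)\) in Lemma~\ref{lem:lindbladian-spectrum}), which is not the statement.

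Two smaller points. First, nonvanishing of \(\Theta\) does not by itself give the stated shift of \(\eta\); you also need to control its winding. The paper does this by proving \(\mathcal{R}(k)\notin(-\infty,0]\) (Lemma~\ref{lem:boundary-residual-branch-cut}), so that \(\eta=-\varphi+2\Arg\mathcal{R}\) is analytic and the shift comes entirely from \(\varphi(k+2\pi)=\varphi(k)+\pi\). Second, your count has a slip. Oddness plus the shift force \(\eta(2\pi)=-\pi\), so \((N+1)k+\eta(k)\) runs from \(0\) to \((2N+1)\pi\) on \([0,2\pi]\). With your endpoint value \(2\pi N\), the \(j=2N\) root would sit at \(k=2\pi\) and only \(2N-1\) roots would lie in \((0,2\pi)\).

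Once these are repaired, your closing counting argument is sound and somewhat more economical than the paper's. The exact eigenvectors \(\kappa_{1}\Psi^{(+)}+\kappa_{2}\Psi^{(-)}\) furnish \(2N\) distinct eigenvalues \(\lambda(k_{j})\) of a \(2N\times2N\) matrix. That settles exhaustiveness and simplicity without Lemma~\ref{lem:interval-eigenvalue-bounds} and Lemma~\ref{lem:interior-solution-dimension}, which the paper uses for the converse direction.
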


To derive \eqref{eq:interval-low-lying-spectrum} from this theorem, note that \(\eta'(k)\) is \(4\pi\)-periodic and therefore bounded.
Since \(\eta(0) = 0\), it follows that \(\abs{\eta(k)} \leq (\text{const.}) \abs{k}\).
Substituting this bound into \eqref{eq:interval-quantization} gives \(k_{j} = \frac{\pi j}{N} + O(\frac{j}{N^{2}})\), establishing \eqref{eq:interval-low-lying-spectrum}.

\begin{proof}
    We establish Theorem~\ref{thm:interval-spectrum} by proving four claims:
    \begin{enumerate}[label=(\roman*),ref=\roman*]
        \item \label{item:interval-spectrum-simplicity} The eigenvalues of \(C_{\mathrm{int}}\) are all distinct; equivalently, every eigenvalue has multiplicity one.

        \item \label{item:interval-spectrum-parametrization} Every eigenvalue \(\sigma\) of \(C_{\mathrm{int}}\) can be written as \(\sigma = \lambda(k)\) for a unique \(k \in (0, 2\pi)\).

        \item \label{item:interval-spectrum-quantization} There exists an odd real-analytic function \(\eta(k)\), satisfying \(\eta(k + 4\pi) = \eta(k) - 2\pi\), such that, for \(k \in (0, 2\pi)\), the value \(\lambda(k)\) is an eigenvalue of \(C_{\mathrm{int}}\) if and only if
        \begin{equation}\label{eq:interval-quantization-proof}
            (N + 1) k + \eta(k) = \pi j
        \end{equation}
        for some \(j \in \mathbb{Z}\).

        \item \label{item:interval-spectrum-root-count} For \(N\) sufficiently large, \eqref{eq:interval-quantization-proof} has a unique solution \(k_{j}\) for each \(j \in \mathbb{Z}\).
        Moreover, \(k_{j} \in (0, 2\pi)\) if and only if \(j \in \{1, 2, \ldots, 2N\}\).
    \end{enumerate}

    We begin with some notation.
    Recall that \(C_{\mathrm{int}}\) is a \(2N \times 2N\) matrix.
    The partition of the interval into \(N\) two-site unit cells naturally organizes it into an \(N \times N\) block matrix, with \(2 \times 2\) blocks given by
    \begin{equation}\label{eq:interval-correlation-blocks}
        (C_{\mathrm{int}})_{xy} = C_{x-y} = \frac{1}{2\pi} \int_{-\pi}^{\pi} C(k) e^{-i k (x-y)} \dd{k}, \qquad 1 \leq x, y \leq N.
    \end{equation}
    Here, \(C(k)\) is the \(2 \times 2\) momentum-space correlation matrix defined in \eqref{eq:correlation-symbol}.
    Likewise, we write a \(2N\)-component column vector as \(\Psi = (\psi_{1}, \psi_{2}, \ldots, \psi_{N})^{\mathsf{T}}\), where each \(\psi_{x} \in \mathbb{C}^{2}\) is a two-component vector.
    Finally, throughout the proof, we use \(\Arg\) to denote the principal branch of the complex argument on \(\mathbb{C} \setminus (-\infty, 0]\), taking values in \((-\pi, \pi)\).

    With this notation in place, we turn to the four claims~(\ref{item:interval-spectrum-simplicity})--(\ref{item:interval-spectrum-root-count}), beginning with (\ref{item:interval-spectrum-parametrization}).
    By Lemma~\ref{lem:interval-eigenvalue-bounds}, every eigenvalue \(\sigma\) of \(C_{\mathrm{int}}\) lies strictly between \(\lambda_{-}(0)\) and \(\lambda_{+}(0)\).
    At the same time, one can check that the analytic eigenbranch \(\lambda(k)\) defined in \eqref{eq:analytic-eigenbranch} is strictly increasing on \([0, 2\pi]\), with \(\lambda(0) = \lambda_{-}(0)\) and \(\lambda(2\pi) = \lambda_{+}(0)\).
    It follows that each eigenvalue \(\sigma\) can be written uniquely as \(\sigma = \lambda(k)\) for some \(k \in (0, 2\pi)\), establishing (\ref{item:interval-spectrum-parametrization}).

    To prove the other claims, we fix \(k \in (0, 2\pi)\) and define \(\Lambda \defeq C_{\mathrm{int}} - \lambda(k)\).
    Our task is to count the number of independent solutions to \(\Lambda \Psi = 0\).
    In components, this equation reads
    \begin{align}
        \label{eq:interval-eigenvalue-left-endpoint} \bigl(C_{0} - \lambda(k)\bigr) \psi_{1} + C_{-1} \psi_{2} + C_{-2} \psi_{3} &= 0 \qquad (x = 1), \\[1ex]
        \label{eq:interval-eigenvalue-left-interior} C_{1} \psi_{1} + \bigl(C_{0} - \lambda(k)\bigr) \psi_{2} + C_{-1} \psi_{3} + C_{-2} \psi_{4} &= 0 \qquad (x = 2), \\[1ex]
        \label{eq:interval-eigenvalue-bulk} C_{2} \psi_{x-2} + C_{1} \psi_{x-1} + \bigl(C_{0} - \lambda(k)\bigr) \psi_{x} + C_{-1} \psi_{x+1} + C_{-2} \psi_{x+2} &= 0 \qquad (3 \leq x \leq N - 2), \\[1ex]
        \label{eq:interval-eigenvalue-right-interior} C_{2} \psi_{N-3} + C_{1} \psi_{N-2} + \bigl(C_{0} - \lambda(k)\bigr) \psi_{N-1} + C_{-1} \psi_{N} &= 0 \qquad (x = N - 1), \\[1ex]
        \label{eq:interval-eigenvalue-right-endpoint} C_{2} \psi_{N-2} + C_{1} \psi_{N-1} + \bigl(C_{0} - \lambda(k)\bigr) \psi_{N} &= 0 \qquad (x = N).
    \end{align}
    For the moment, we set aside the two endpoint equations \eqref{eq:interval-eigenvalue-left-endpoint} and \eqref{eq:interval-eigenvalue-right-endpoint} and look for vectors satisfying only \eqref{eq:interval-eigenvalue-left-interior}--\eqref{eq:interval-eigenvalue-right-interior}.
    We call such vectors ``interior.''
    Our immediate goal is to construct an explicit basis for the subspace of interior vectors \(\mathcal{I} \subset \mathbb{C}^{2N}\).

    We start by considering \eqref{eq:interval-eigenvalue-bulk}.
    Away from the endpoints, this equation is identical to the eigenvalue equation for the infinite chain.
    Two independent plane-wave solutions with eigenvalue \(\lambda(k)\) are
    \begin{equation}\label{eq:interior-plane-waves}
        \psi_{x}^{(+)} = v_{+} e^{-i k x}, \qquad \psi_{x}^{(-)} = v_{-} e^{i k x},
    \end{equation}
    where \(v_{+}, v_{-} \in \mathbb{C}^{2}\) are fixed up to normalization by \(C(\pm k) v_{\pm} = \lambda(k) v_{\pm}\).
    We use the following phase convention for \(v_{\pm}\):
    \begin{equation}
        v_{\pm} = 2 \cot^{2}\Bigl(\frac{\epsilon}{2}\Bigr) \begin{pmatrix}
            1 \\[1ex]
            -e^{\mp i \varphi(k)}
        \end{pmatrix}, \qquad \varphi(k) \defeq \frac{k}{2} + \Arg\Bigl(\cos^{4}\Bigl(\frac{\epsilon}{2}\Bigr) + \sin^{4}\Bigl(\frac{\epsilon}{2}\Bigr) e^{i k}\Bigr).
    \end{equation}
    The phase \(\varphi(k)\) is chosen so that \(b(k) = e^{i \varphi(k)} \bigl(a(k) - \lambda(k)\bigr)\).
    Note that \(\varphi(k)\) is odd and satisfies \(\varphi(k + 2\pi) = \varphi(k) + \pi\).

    Restricting these plane waves to \(1 \leq x \leq N\) automatically satisfies \eqref{eq:interval-eigenvalue-bulk}.
    Near the ends of the interval, however, \eqref{eq:interval-eigenvalue-left-interior} and \eqref{eq:interval-eigenvalue-right-interior} differ from the infinite-chain equation because the cells \(0\) and \(N + 1\) are absent.
    We compensate for these missing terms by adjusting the first and last components:
    \begin{equation}\label{eq:boundary-adjusted-plane-waves}
        \Psi^{(\pm)} = \bigl(\psi_{1}^{(\pm)} + \delta\psi_{1}^{(\pm)}, \psi_{2}^{(\pm)}, \ldots, \psi_{N-1}^{(\pm)}, \psi_{N}^{(\pm)} + \delta\psi_{N}^{(\pm)}\bigr)^{\mathsf{T}}.
    \end{equation}
    Substituting this ansatz into the interior equations gives
    \begin{align}
        -C_{2} \psi_{0}^{(\pm)} + C_{1} \delta\psi_{1}^{(\pm)} &= 0 \qquad (x = 2), \\[1ex]
        C_{2} \delta\psi_{1}^{(\pm)} &= 0 \qquad (x = 3), \\[1ex]
        C_{-2} \delta\psi_{N}^{(\pm)} &= 0 \qquad (x = N - 2), \\[1ex]
        C_{-1} \delta\psi_{N}^{(\pm)} - C_{-2} \psi_{N+1}^{(\pm)} &= 0 \qquad (x = N - 1).
    \end{align}
    Since \(C_{1}\) and \(C_{-1}\) are invertible, the first and last equations determine the endpoint adjustments:
    \begin{equation}
        \delta\psi_{1}^{(\pm)} = C_{1}^{-1} C_{2} \psi_{0}^{(\pm)}, \qquad \delta\psi_{N}^{(\pm)} = C_{-1}^{-1} C_{-2} \psi_{N+1}^{(\pm)}.
    \end{equation}
    The two middle equations are then automatically satisfied because \(C_{2} C_{1}^{-1} C_{2} = 0\) and \(C_{-2} C_{-1}^{-1} C_{-2} = 0\).
    Thus, the vectors \(\Psi^{(+)}\) and \(\Psi^{(-)}\) are both interior and provide two basis vectors for the subspace \(\mathcal{I}\).

    We next construct two more interior vectors that are localized near the endpoints of the interval.
    We take
    \begin{equation}
        \Psi^{(L)} = \bigl(\psi^{(L)}_{1}, \psi^{(L)}_{2}, 0, \ldots, 0\bigr)^{\mathsf{T}}, \qquad \Psi^{(R)} = \bigl(0, \ldots, 0, \psi^{(R)}_{N-1}, \psi^{(R)}_{N}\bigr)^{\mathsf{T}}.
    \end{equation}
    Substituting these vectors into the interior equations gives
    \begin{alignat}{2}
        C_{1} \psi^{(L)}_{1} + \bigl(C_{0} - \lambda(k)\bigr) \psi^{(L)}_{2} &= 0 \qquad (x = 2), &\qquad \bigl(C_{0} - \lambda(k)\bigr) \psi^{(R)}_{N-1} + C_{-1} \psi^{(R)}_{N} &= 0 \qquad (x = N - 1), \\[1ex]
        C_{2} \psi^{(L)}_{1} + C_{1} \psi^{(L)}_{2} &= 0 \qquad (x = 3), &\qquad C_{-1} \psi^{(R)}_{N-1} + C_{-2} \psi^{(R)}_{N} &= 0 \qquad (x = N - 2), \\[1ex]
        C_{2} \psi^{(L)}_{2} &= 0 \qquad (x = 4), &\qquad C_{-2} \psi^{(R)}_{N-1} &= 0 \qquad (x = N - 3).
    \end{alignat}
    The first equation in each column determines \(\psi^{(L)}_{1}\) and \(\psi^{(R)}_{N}\) in terms of \(\psi^{(L)}_{2}\) and \(\psi^{(R)}_{N-1}\), respectively.
    The last equation in each column requires \(\psi^{(L)}_{2}\) and \(\psi^{(R)}_{N-1}\) to lie in the kernels of \(C_{2}\) and \(C_{-2}\), respectively.
    A convenient choice is
    \begin{alignat}{2}
        \psi^{(L)}_{1} &= -C_{1}^{-1} \bigl(C_{0} - \lambda(k)\bigr) \psi^{(L)}_{2}, &\qquad \psi^{(R)}_{N} &= -C_{-1}^{-1} \bigl(C_{0} - \lambda(k)\bigr) \psi^{(R)}_{N-1}, \\[1ex]
        \psi^{(L)}_{2} &= \tan^{2}\Bigl(\frac{\epsilon}{2}\Bigr) \begin{pmatrix}
            1 \\[1ex]
            0
        \end{pmatrix}, &\qquad \psi^{(R)}_{N-1} &= \tan^{2}\Bigl(\frac{\epsilon}{2}\Bigr) \begin{pmatrix}
            0 \\[1ex]
            1
        \end{pmatrix}.
    \end{alignat}
    With these choices, direct substitution shows that the middle equation in each column is also satisfied.
    Thus, \(\Psi^{(L)}\) and \(\Psi^{(R)}\) are interior and provide two more basis vectors for \(\mathcal{I}\).

    We now invoke Lemma~\ref{lem:interior-solution-dimension}, which tells us that \(\mathcal{I}\) is in fact four-dimensional.
    It follows that \(\{\Psi^{(+)}, \Psi^{(-)}, \Psi^{(L)}, \Psi^{(R)}\}\) forms a \emph{complete} basis for \(\mathcal{I}\), so every interior vector can be written uniquely in the form \(\kappa_{1} \Psi^{(+)} + \kappa_{2} \Psi^{(-)} + \kappa_{3} \Psi^{(L)} + \kappa_{4} \Psi^{(R)}\).

    Our next task is to determine which of these linear combinations satisfy the two endpoint equations, \eqref{eq:interval-eigenvalue-left-endpoint} and \eqref{eq:interval-eigenvalue-right-endpoint}.
    Upon substituting the four basis vectors, the only nonzero terms in the endpoint equations are
    \begin{alignat}{2}
        \bigl(\Lambda \Psi^{(\pm)}\bigr)_{1} &= \begin{pmatrix}
            e^{\mp i \varphi(k)} \mathcal{R}(\pm k) \\[1ex]
            0
        \end{pmatrix}, &\qquad \bigl(\Lambda \Psi^{(\pm)}\bigr)_{N} &= \begin{pmatrix}
            0 \\[1ex]
            -e^{\mp i k (N+1)} \mathcal{R}(\mp k)
        \end{pmatrix}, \\[2ex]
        \bigl(\Lambda \Psi^{(L)}\bigr)_{1} &= \begin{pmatrix}
            \alpha(k) \\[1ex]
            1 - 2\lambda(k)
        \end{pmatrix}, &\qquad \bigl(\Lambda \Psi^{(R)}\bigr)_{N} &= \begin{pmatrix}
            1 - 2\lambda(k) \\[1ex]
            \alpha(k)
        \end{pmatrix}.
    \end{alignat}
    Here, \(\alpha(k)\) is a function whose explicit form will not be needed, while
    \begin{equation}\label{eq:boundary-residual-function}
        \mathcal{R}(k) \defeq 1 - 2\lambda(k) + \frac{1}{2} \cos(\epsilon) \bigl(1 + \cos^{2}(\epsilon)\bigr) + \frac{1}{8} \sin^{4}(\epsilon) e^{i k} + \frac{1}{2} \sin^{2}(\epsilon) \cos^{4}\Bigl(\frac{\epsilon}{2}\Bigr) e^{i \varphi(k)}.
    \end{equation}
    We note the conjugate symmetry \(\mathcal{R}(-k) = \overline{\mathcal{R}(k)}\).

    Using these expressions, the second component of \eqref{eq:interval-eigenvalue-left-endpoint} and the first component of \eqref{eq:interval-eigenvalue-right-endpoint} force \(\kappa_{3} = \kappa_{4} = 0\), since \(1 - 2\lambda(k) \neq 0\).
    The two remaining conditions are therefore
    \begin{equation}\label{eq:boundary-coefficient-system}
        \begin{pmatrix}
            e^{-i \varphi(k)} \mathcal{R}(k) & e^{i \varphi(k)} \mathcal{R}(-k) \\[2ex]
            -e^{-i k (N+1)} \mathcal{R}(-k) & -e^{i k (N+1)} \mathcal{R}(k)
        \end{pmatrix} \begin{pmatrix}
            \kappa_{1} \\[2ex]
            \kappa_{2}
        \end{pmatrix} = 0.
    \end{equation}
    A nonzero solution exists if and only if the determinant vanishes, which gives
    \begin{equation}\label{eq:boundary-determinant-condition}
        R(k)^{2} e^{2i k (N+1)} = 1, \qquad R(k) \defeq e^{-i \varphi(k)} \frac{\mathcal{R}(k)}{\mathcal{R}(-k)}.
    \end{equation}
    Lemma~\ref{lem:boundary-residual-branch-cut} shows that \(\mathcal{R}(k) \notin (-\infty, 0]\) for all \(k\), so we may define the analytic function \(\eta(k) \defeq -\varphi(k) + 2 \Arg(\mathcal{R}(k))\).
    The symmetry and periodicity properties of \(\varphi\) and \(\mathcal{R}\) imply that \(\eta\) is odd and satisfies \(\eta(k + 4\pi) = \eta(k) - 2\pi\).
    The above determinant condition \eqref{eq:boundary-determinant-condition} is then equivalent to \eqref{eq:interval-quantization-proof}, establishing (\ref{item:interval-spectrum-quantization}).

    To establish (\ref{item:interval-spectrum-simplicity}), suppose \eqref{eq:boundary-coefficient-system} had two linearly independent solutions.
    These would span all possible choices of \((\kappa_{1}, \kappa_{2})\), so the matrix in \eqref{eq:boundary-coefficient-system} would have to vanish, which is impossible because \(\mathcal{R}(k) \neq 0\).
    Hence, whenever \(\lambda(k)\) is an eigenvalue, it has multiplicity one.

    Finally, to establish (\ref{item:interval-spectrum-root-count}), we rewrite \eqref{eq:interval-quantization-proof} as
    \begin{equation}\label{eq:interval-quantization-map}
        \phi_{N}(k) = \frac{\pi j}{N + 1}, \qquad \phi_{N}(k) \defeq k + \frac{\eta(k)}{N + 1}.
    \end{equation}
    The symmetries of \(\eta\) imply that \(\eta'\) is periodic and bounded.
    Therefore, for \(N\) sufficiently large, \(\phi_{N}'(k) > \frac{1}{2}\) for all \(k\), and \(\phi_{N}\) is a bijection from \(\mathbb{R}\) to \(\mathbb{R}\).
    Equation~\eqref{eq:interval-quantization-map} thus has a unique solution \(k_{j}\) for every \(j \in \mathbb{Z}\).

    Moreover, \(k_{j} \in (0, 2\pi)\) if and only if \(\phi_{N}(0) < \frac{\pi j}{N + 1} < \phi_{N}(2\pi)\).
    Since \(\eta(2\pi) = -\pi\), this reduces to \(0 < j < 2N + 1\), or equivalently \(j \in \{1, 2, \ldots, 2N\}\), establishing (\ref{item:interval-spectrum-root-count}) and completing the proof.
\end{proof}

\subsection{Supporting lemmas}

\begin{lemma}\label{lem:interval-eigenvalue-bounds}
    Every eigenvalue \(\sigma\) of \(C_{\mathrm{int}}\) satisfies \(\lambda_{-}(0) < \sigma < \lambda_{+}(0)\).
\end{lemma}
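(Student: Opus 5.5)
The plan is to compare the quadratic form of \(C_{\mathrm{int}}\) with that of the infinite-chain correlation matrix, using the momentum-space representation. Let \(\Psi = (\psi_{1}, \ldots, \psi_{N})^{\mathsf{T}}\) be a normalized eigenvector of \(C_{\mathrm{int}}\) with eigenvalue \(\sigma\). Extend it by zero to the infinite chain and define the \(\mathbb{C}^{2}\)-valued trigonometric polynomial \(\hat{\psi}(k) \defeq \sum_{x=1}^{N} \psi_{x} e^{i k x}\). By \eqref{eq:interval-correlation-blocks} and Parseval, up to sign conventions in the Fourier transform,
\begin{equation*}
    \sigma = \Psi^{\dagger} C_{\mathrm{int}} \Psi = \frac{1}{2\pi} \int_{-\pi}^{\pi} \hat{\psi}(k)^{\dagger} C(k) \hat{\psi}(k) \dd{k}, \qquad 1 = \frac{1}{2\pi} \int_{-\pi}^{\pi} \abs{\hat{\psi}(k)}^{2} \dd{k}.
\end{equation*}
Since \(\lambda_{-}(k) \le C(k) \le \lambda_{+}(k)\) as \(2\times 2\) matrices, this gives
\begin{equation*}
    \frac{1}{2\pi}\int \lambda_{-}(k)\abs{\hat{\psi}}^{2} \le \sigma \le \frac{1}{2\pi}\int \lambda_{+}(k)\abs{\hat{\psi}}^{2}.
\end{equation*}

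The first key step is to establish, from the explicit formula \eqref{eq:explicit-eigenvalue-bands}, the following facts:
\begin{itemize}
    \item \(\min_{k} \lambda_{-}(k) = \lambda_{-}(0) = 0\), attained only at \(k = 0\);
    \item \(\max_{k} \lambda_{+}(k) = \lambda_{+}(0)\), attained only at \(k = 0\).
\end{itemize}
Writing \(u = \cos^{2}(k/2) \in [0,1]\) and \(s = \sin^{2}(\epsilon)\), we have \(\lambda_{\pm} = \frac{s}{2}\bigl(1 - \frac{s}{2}u \pm \sqrt{u}\sqrt{1 - s + \frac{s^{2}}{4}u}\bigr)\). At \(u = 1\) the square root equals \(1 - \frac{s}{2}\), so \(\lambda_{-}(0) = 0\) and \(\lambda_{+}(0) = s(1 - \frac{s}{2})\).

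I would then show strict monotonicity in \(u\) on \([0,1]\): \(\lambda_{+}\) increasing and \(\lambda_{-}\) decreasing. For \(\lambda_{+}\), differentiate in \(u\) and compare \(\frac{s}{2}\) with the derivative of the square-root term, which is at least its value at \(u = 1\). For \(\lambda_{-}\), note \(\lambda_{-}\lambda_{+} = a^{2} - \abs{b}^{2}\) and positivity can be checked directly for \(u < 1\). Since \(u = 1\) only at \(k = 0\) in \((-\pi,\pi]\), this gives the uniqueness claims. I expect this elementary but fiddly inequality check, for all \(0 < \epsilon < \frac{\pi}{2}\), to be the main obstacle. The case \(\epsilon = 0\) is degenerate, since then \(C = 0\) and both bounds are zero, so it is excluded implicitly.

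Given these facts, strictness follows. Suppose \(\sigma = \lambda_{-}(0) = 0\). Then \(\int (\lambda_{-}(k) - 0)\abs{\hat{\psi}}^{2} \dd{k} \le 0\), and since \(\lambda_{-}(k) > 0\) for \(k \ne 0\), we get \(\hat{\psi}(k) = 0\) for almost every \(k\). Because \(\hat{\psi}\) is a trigonometric polynomial, \(\hat{\psi} \equiv 0\), contradicting normalization.

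Likewise, if \(\sigma = \lambda_{+}(0)\), then \(\int(\lambda_{+}(0) - \lambda_{+}(k))\abs{\hat{\psi}}^{2} \le 0\) with a strictly positive weight away from \(k = 0\), which again forces \(\hat{\psi} \equiv 0\). The non-strict inequalities \(\lambda_{-}(0) \le \sigma \le \lambda_{+}(0)\) follow from the same displayed bounds. Combining both gives \(\lambda_{-}(0) < \sigma < \lambda_{+}(0)\).

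The physical intuition matches this argument: a zero eigenvalue would require a finitely supported vector parallel to the dark mode \((-1)^{j}\), which is impossible.
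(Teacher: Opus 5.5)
Your proposal is correct and follows essentially the same route as the paper: the Parseval representation of \(\Psi^{\dagger} C_{\mathrm{int}} \Psi\) as a momentum-space integral, the pointwise bounds \(\lambda_{-}(k)\norm{\psi(k)}^{2} \le \psi^{\dagger}(k) C(k) \psi(k) \le \lambda_{+}(k)\norm{\psi(k)}^{2}\), the fact that the band extrema are attained only at \(k = 0\), and the almost-everywhere positivity of \(\norm{\psi(k)}^{2}\) for a nonzero trigonometric polynomial. Your explicit check of the band extrema is sound; the paper only asserts this from the explicit formula. So is your remark that the strict inequalities require \(\epsilon > 0\), since \(C \equiv 0\) at \(\epsilon = 0\).
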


\begin{proof}
    Let \(\Psi = (\psi_{1}, \ldots, \psi_{N})^{\mathsf{T}}\) be a normalized eigenvector of \(C_{\mathrm{int}}\) with eigenvalue \(\sigma\).
    Define \(\psi(k) \defeq \sum_{x = 1}^{N} \psi_{x} e^{i k x}\).
    By Parseval's identity, \(\frac{1}{2\pi} \int_{-\pi}^{\pi} \norm{\psi(k)}^{2} \dd{k} = 1\).
    Using the Fourier representation of the blocks \(C_{x-y}\), we obtain
    \begin{equation}
        \sigma = \Psi^{\dagger} C_{\mathrm{int}} \Psi = \sum_{x,y = 1}^{N} \psi_{x}^{\dagger} C_{x-y} \psi_{y} = \frac{1}{2\pi} \int_{-\pi}^{\pi} \psi^{\dagger}(k) C(k) \psi(k) \dd{k}.
    \end{equation}
    Since \(C(k)\) is Hermitian, \(\lambda_{-}(k) \norm{\psi(k)}^{2} \leq \psi^{\dagger}(k) C(k) \psi(k) \leq \lambda_{+}(k) \norm{\psi(k)}^{2}\).
    Moreover, the explicit expression \eqref{eq:explicit-eigenvalue-bands} for the eigenvalue bands shows that \(\lambda_{-}(k) \geq \lambda_{-}(0)\) and \(\lambda_{+}(k) \leq \lambda_{+}(0)\), with equality only at \(k = 0\).
    It follows that
    \begin{align}
        \sigma - \lambda_{-}(0) &\geq \frac{1}{2\pi} \int_{-\pi}^{\pi} \bigl(\lambda_{-}(k) - \lambda_{-}(0)\bigr) \norm{\psi(k)}^{2} \dd{k}, \\[1ex]
        \lambda_{+}(0) - \sigma &\geq \frac{1}{2\pi} \int_{-\pi}^{\pi} \bigl(\lambda_{+}(0) - \lambda_{+}(k)\bigr) \norm{\psi(k)}^{2} \dd{k}.
    \end{align}
    Since \(\Psi \neq 0\), the function \(\psi(k)\) is a nonzero vector-valued trigonometric polynomial, so \(\norm{\psi(k)}^{2}\) is positive almost everywhere.
    The factors in parentheses are strictly positive for \(k \neq 0\), and hence both integrals are strictly positive.
    Therefore, \(\lambda_{-}(0) < \sigma < \lambda_{+}(0)\).
\end{proof}

\begin{lemma}\label{lem:interior-solution-dimension}
    Let \(k \in (0, 2\pi)\) and define \(\Lambda \defeq C_{\mathrm{int}} - \lambda(k)\).
    The space
    \begin{equation}\label{eq:interior-solution-space}
        \mathcal{I} \defeq \Bigl\{\Psi = (\psi_{1}, \ldots, \psi_{N})^{\mathsf{T}} : (\Lambda \Psi)_{x} = 0 \text{ for } 2 \leq x \leq N - 1 \Bigr\}
    \end{equation}
    is four-dimensional.
\end{lemma}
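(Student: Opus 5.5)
The plan is to prove the two inequalities \(\dim \mathcal{I} \geq 4\) and \(\dim \mathcal{I} \leq 4\) separately.

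\emph{Lower bound.} The space \(\mathcal{I}\) is the kernel of the \(2(N-2) \times 2N\) matrix \(\Lambda'\) consisting of the rows of \(\Lambda\) belonging to cells \(2 \leq x \leq N-1\). By rank--nullity, \(\dim \mathcal{I} = 2N - \rank \Lambda' \geq 4\). The whole problem therefore reduces to showing that \(\Lambda'\) has full row rank \(2(N-2)\).

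\emph{Upper bound, main route.} I will use the sparsity of the real-space blocks in \eqref{eq:real-space-correlation-blocks}. The only nonzero entry of \(C_{-2} = C_{2}^{\dagger}\) is the \((2,1)\) entry, which equals \(\frac{1}{4}\sin^{2}(\epsilon)\sin^{4}(\frac{\epsilon}{2}) \neq 0\) for \(\epsilon > 0\). Moreover, \(C_{-1} = C_{1}^{\dagger}\) is lower triangular with equal diagonal entries \(-\frac{1}{16}\sin^{4}(\epsilon) \neq 0\).

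\begin{itemize}
    \item The second component of the row block \(x\) (\(2 \leq x \leq N-2\)) contains \((\psi_{x+2})_{1}\), with a nonzero coefficient. No row block with smaller \(x\) contains this variable.
    \item The first component of row block \(x\) involves \(\psi_{x+1}\) only through \((\psi_{x+1})_{1}\), and contains nothing from \(\psi_{x+2}\).
\end{itemize}

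I will order the unknowns as \(\psi_{1}, \psi_{2}, (\psi_{3})_{1}, (\psi_{3})_{2}, \ldots\). I will then look for an assignment of a distinct pivot variable to each of the \(2(N-2)\) rows, so that after suitable row operations the matrix becomes echelon. The natural pivots are:
\begin{itemize}
    \item \((\psi_{x+2})_{1}\) for the second component of row \(x\);
    \item a suitable combination of the first and second components for the variables \((\psi_{x+1})_{2}\), using the off-diagonal entry \(\frac{1}{4}\sin^{2}(\epsilon)\cos^{4}(\frac{\epsilon}{2})\) of \(C_{0}\) and the invertibility of \(C_{\pm 1}\).
\end{itemize}
Equivalently, I expect to show that once \(\psi_{1}\) and \(\psi_{2}\) (four complex parameters) are fixed, the interior equations determine all later \(\psi_{x}\) recursively, cell by cell. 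This gives \(\dim \mathcal{I} \leq 4\). The identity \(C_{2} C_{1}^{-1} C_{2} = 0\), already used in the proof of Theorem~\ref{thm:interval-spectrum}, should be what makes the staggered elimination consistent.

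\emph{Main obstacle.} I expect the hardest step to be the last one or two row blocks, \(x = N-2\) and \(x = N-1\). There the \(C_{-2}\) and \(C_{-1}\) terms are truncated, so the pivots \((\psi_{x+2})_{1}\) no longer exist and the recursive determination has to close up without leaving extra freedom. I will handle these rows by direct inspection using the explicit \(2 \times 2\) blocks, checking that the remaining pivots land on \((\psi_{N})_{1}\) and \((\psi_{N})_{2}\).

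\emph{Fallback.} If the bookkeeping fails, I will use a duality argument. Since \(\Lambda\) is Hermitian, a vanishing linear combination of the rows of \(\Lambda'\) gives a vector \(\Phi = (0, \phi_{2}, \ldots, \phi_{N-1}, 0)\) with \(\Lambda \Phi = 0\) on all cells. Embedding \(\Phi\) in the infinite chain, \((C - \lambda(k))\Phi\) is supported only on cells \(0\) and \(N+1\), with values \(C_{-2}\phi_{2}\) and \(C_{2}\phi_{N-1}\). In Fourier space this reads \((C(q) - \lambda(k))\hat{\phi}(q) = C_{-2}\phi_{2} + e^{iq(N+1)}C_{2}\phi_{N-1}\). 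Multiplying by the adjugate and comparing the trigonometric degree of \(\hat{\phi}\), which is supported on \(2, \ldots, N-1\), with the zeros of \(\det(C(q) - \lambda(k))\) at \(q = \pm k\), I expect to force \(\phi_{2} = \phi_{N-1} = 0\). Iterating this then gives \(\Phi = 0\).
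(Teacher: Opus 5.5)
\textbf{The main route has a genuine gap.} It cannot be repaired by bookkeeping at the right end. The reformulation you call equivalent (fixing $(\psi_1,\psi_2)$ determines every interior vector) is false for every $k\in(0,2\pi)$.

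A counterexample is the vector $\Psi^{(R)}$ from the proof of Theorem~\ref{thm:interval-spectrum}. It is supported on cells $N-1,N$, with $\psi_{N-1}\propto(0,1)^{\mathsf T}$ and $\psi_N=-C_{-1}^{-1}(C_0-\lambda(k))\psi_{N-1}$, and it lies in $\mathcal{I}$:
\begin{itemize}
\item row $N-1$ holds by construction;
\item row $N-3$ reduces to $C_{-2}\psi_{N-1}=0$, which is true because $C_{-2}(0,1)^{\mathsf T}=0$;
\item row $N-2$ reduces to $(\psi_N)_1=\cot^2(\frac{\epsilon}{2})(\psi_{N-1})_2$. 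The construction satisfies this for every $\lambda$, since only the off-diagonal entry of $C_0$ enters $(\psi_N)_1$.
\end{itemize}
For $N\ge4$ this is a nonzero interior vector with $\psi_1=\psi_2=0$. So the square block of $\Lambda'$ on the columns of cells $3,\ldots,N$ is singular. Any echelon scheme whose pivots all lie in cells $\ge 3$ therefore cannot exist, and yours is of that kind, ending at $(\psi_N)_1,(\psi_N)_2$. Full row rank of $\Lambda'$ is strictly weaker than invertibility of that block.

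The failure is also visible locally. Row $(x,1)$ does not contain $(\psi_{x+1})_2$ at all, and its last variable $(\psi_{x+1})_1$ is the pivot you already assigned to row $(x-1,2)$. Structurally, $C_{-2}$ has rank one, so the recurrence does not propagate data from left to right. The identity $C_2C_1^{-1}C_2=0$ is exactly what produces the end-localized interior vectors $\Psi^{(L)},\Psi^{(R)}$; it does not make a staggered elimination consistent.

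\textbf{Your fallback is the right argument.} Its first step matches the paper's. By Hermiticity, a rank deficiency of $\Lambda'$ is the same as a nonzero $\Phi=(0,\phi_2,\ldots,\phi_{N-1},0)^{\mathsf T}$ with $\Lambda\Phi=0$. The paper phrases this as non-injectivity of $\Lambda\colon V\to\mathcal{I}^{\perp}$ when $\dim\mathcal{I}>4$.

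The paper then finishes as follows. Rows $x=1$ and $x=N$, together with $C_{\mp2}C_{\mp1}^{-1}C_{\mp2}=0$, give $C_{-2}\phi_2=C_2\phi_{N-1}=0$. Hence $\Phi$ is an eigenvector of the correlation matrix of a ring of $N+2$ cells. Three independent translates would then give multiplicity at least three, contradicting the exact ring spectrum, whose multiplicities are at most two.

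Your infinite-chain Fourier finish is a valid and more direct alternative, but it needs to be stated precisely. Write $z=e^{iq}$.
\begin{itemize}
\item $\det(C(q)-\lambda(k))\,\hat\phi(q)$ contains only the powers $z^0,\ldots,z^{N+1}$.
\item $\operatorname{adj}(C(q)-\lambda(k))\,C_{-2}\phi_2$ equals $\frac14\sin^2(\epsilon)\sin^4(\frac{\epsilon}{2})(\phi_2)_1$ times $(-b(q),\,a(q)-\lambda(k))^{\mathsf T}$. The $z^{-1}$ coefficient of this vector is $(0,-\frac{1}{16}\sin^4(\epsilon))^{\mathsf T}$.
\item The term carrying $e^{iq(N+1)}$ starts only at $z^{N-1}$. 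The mirror statement holds at $z^{N+2}$.
\end{itemize}
This yields $(\phi_2)_1=(\phi_{N-1})_2=0$, i.e.\ $C_{-2}\phi_2=C_2\phi_{N-1}=0$. These are the same two conditions the paper obtains, not $\phi_2=\phi_{N-1}=0$.

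They are already enough. The right-hand side now vanishes, so $(C(q)-\lambda(k))\hat\phi(q)=0$ for all $q$. Since $\det(C(q)-\lambda(k))$ vanishes only at $q\equiv\pm k \pmod{2\pi}$, the trigonometric polynomial $\hat\phi$ is identically zero.

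No iteration is needed. The zeros at $q=\pm k$ should not be used as constraints: they give only a $k$-dependent $2\times2$ system, which degenerates for some $k$. The extreme Fourier coefficients are what do the work.

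Your version avoids the ring construction and the multiplicity count. The paper's version stays in finite dimensions and reuses the ring spectrum it already has.
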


\begin{proof}
    The equations \((\Lambda \Psi)_{x} = 0\) for \(2 \leq x \leq N - 1\) constitute a homogeneous linear system of \(2(N - 2)\) equations for the \(2N\) components of \(\Psi\).
    Hence, \(\dim \mathcal{I} \geq 4\).

    Now let \(V\) be the space of vectors of the form \(\Phi = (0, \phi_{2}, \ldots, \phi_{N-1}, 0)^{\mathsf{T}}\).
    For any \(\Phi \in V\) and \(\Psi \in \mathcal{I}\), the Hermiticity of \(\Lambda\) gives
    \begin{equation}
        (\Lambda \Phi)^{\dagger} \Psi = \Phi^{\dagger} \Lambda \Psi = \sum_{x = 2}^{N-1} \phi_{x}^{\dagger} (\Lambda \Psi)_{x} = 0.
    \end{equation}
    Thus, \(\Lambda \Phi\) is orthogonal to \(\mathcal{I}\), so \(\Lambda\) maps \(V\) into \(\mathcal{I}^{\perp}\).

    If \(\dim \mathcal{I} > 4\), then
    \begin{equation}
        \dim \mathcal{I}^{\perp} = 2N - \dim \mathcal{I}  < 2N - 4 = \dim V.
    \end{equation}
    The map from \(V\) to \(\mathcal{I}^{\perp}\) therefore cannot be injective.
    Consequently, there exists a nonzero vector \(\Phi = (0, \phi_{2}, \ldots, \phi_{N-1}, 0)^{\mathsf{T}}\) such that \(\Lambda \Phi = 0\).

    Extend the interval by one cell on either side and make the periodic identification to obtain a ring of \(N + 2\) unit cells with correlation matrix \(C_{\mathrm{ring}}\).
    Taking \(\Phi\) to vanish on the two added cells, i.e.,
    \begin{equation}
        \Phi = (0, 0, \phi_{2}, \ldots, \phi_{N-1}, 0, 0)^{\mathsf{T}},
    \end{equation}
    we claim that \(\Phi\) is an eigenvector of \(C_{\mathrm{ring}}\) with eigenvalue \(\lambda(k)\).
    The eigenvalue equation on the ring is
    \begin{equation}
        C_{2} \phi_{x-2} + C_{1} \phi_{x-1} + \bigl(C_{0} - \lambda(k)\bigr) \phi_{x} + C_{-1} \phi_{x+1} + C_{-2} \phi_{x+2} = 0 \qquad (0 \leq x \leq N + 1),
    \end{equation}
    where the components are understood periodically, \(\phi_{x+(N+2)} \defeq \phi_{x}\).
    For \(1 \leq x \leq N\), these equations are precisely the components of \(\Lambda \Phi = 0\), so it remains only to check
    \begin{align}
        \label{eq:ring-extension-left-condition} C_{-2} \phi_{2} &= 0 \qquad (x = 0), \\
        \label{eq:ring-extension-right-condition} C_{2} \phi_{N-1} &= 0 \qquad (x = N + 1).
    \end{align}
    Since \(C_{-1}\) and \(C_{1}\) are invertible, the equations at \(x = 1\) and \(x = N\) give
    \begin{equation}
        \phi_{2} = -C_{-1}^{-1} C_{-2} \phi_{3}, \qquad \phi_{N-1} = -C_{1}^{-1} C_{2} \phi_{N-2}.
    \end{equation}
    The identities \(C_{-2} C_{-1}^{-1} C_{-2} = 0\) and \(C_{2} C_{1}^{-1} C_{2} = 0\) then imply \eqref{eq:ring-extension-left-condition} and \eqref{eq:ring-extension-right-condition}.

    Hence, \(\Phi\) is an eigenvector of \(C_{\mathrm{ring}}\) with eigenvalue \(\lambda(k)\).
    Since \(C_{\mathrm{ring}}\) is translation invariant, the same is true of every translate of \(\Phi\).
    In particular, \(\Phi\) and its first two successive translates are linearly independent: if \(\phi_{r}\) is the first nonzero block of \(\Phi\), then the first nonzero blocks of these three vectors occur at \(r\), \(r + 1\), and \(r + 2\), respectively.
    Thus, \(\lambda(k)\) has multiplicity at least three.

    On the other hand, the exact diagonalization of \(C_{\mathrm{ring}}\) in \(k\)-space gives its \(2N + 4\) eigenvalues as
    \begin{equation}
        \sigma_{j} = \lambda\Bigl(\frac{2\pi j}{N + 2}\Bigr) \qquad (0 \leq j \leq 2N + 3).
    \end{equation}
    Since any value of \(\lambda\) occurs at most twice on \([0, 4\pi)\), every eigenvalue of \(C_{\mathrm{ring}}\) has multiplicity at most two, giving a contradiction.
    Therefore, \(\dim \mathcal{I} = 4\).
\end{proof}

\begin{lemma}\label{lem:boundary-residual-branch-cut}
    The function \(\mathcal{R}(k)\) in \eqref{eq:boundary-residual-function} satisfies \(\mathcal{R}(k) \notin (-\infty, 0]\) for all \(k \in \mathbb{R}\).
\end{lemma}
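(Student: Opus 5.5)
The plan is to prove the stronger, more checkable statement that \(\mathcal{R}(k)\) never touches the closed negative real axis. I would do this by splitting \(k\) according to whether \(\Im\mathcal{R}(k)\) vanishes.

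First, reduce the range of \(k\). The conjugate symmetry \(\mathcal{R}(-k)=\overline{\mathcal{R}(k)}\) is already noted. Since \(\lambda(k)\) and \(e^{ik}\) are \(4\pi\)-periodic and \(\varphi(k+4\pi)=\varphi(k)+2\pi\), the function \(\mathcal{R}\) is \(4\pi\)-periodic. It therefore suffices to treat \(k\in[0,2\pi]\).

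Second, handle the two endpoints, where \(\mathcal{R}\) is real. A useful simplification is the identity \(\cos^2\epsilon+\tfrac14\sin^4\epsilon=(1-\tfrac12\sin^2\epsilon)^2\). It gives \(\lambda(0)=\lambda_-(0)\) and \(\lambda(2\pi)=\lambda_+(0)=\tfrac12\sin^2\epsilon(1+\cos^2\epsilon)\), hence \(1-2\lambda(2\pi)=\cos^4\epsilon\).
\begin{itemize}
\item At \(k=0\) we have \(\varphi=0\) and \(e^{ik}=1\). Every term of \(\mathcal{R}(0)\) is nonnegative and \(1-2\lambda_-(0)>0\), so \(\mathcal{R}(0)>0\).
\item At \(k=2\pi\) we have \(e^{ik}=1\) and \(e^{i\varphi}=-1\). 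This gives
\[
\mathcal{R}(2\pi)=\cos^4\epsilon+\tfrac12\cos\epsilon(1+\cos^2\epsilon)+\tfrac18\sin^4\epsilon-\tfrac12\sin^2\epsilon\cos^4(\tfrac{\epsilon}{2}).
\]
I would show this is positive for \(0\le\epsilon<\pi/2\) by writing everything in terms of \(u=\cos\epsilon\in(0,1]\), using \(\cos^4(\epsilon/2)=(1+u)^2/4\).
\end{itemize}
This yields an explicit polynomial in \(u\). It vanishes at \(u=0\), consistent with \(\epsilon=\pi/2\) being excluded, and I expect its positivity on \((0,1]\) to follow by elementary factoring.

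Third, show that \(\Im\mathcal{R}(k)\neq0\) for \(k\in(0,2\pi)\). The term \(1-2\lambda(k)\) is real, so
\[
\Im\mathcal{R}(k)=\tfrac18\sin^4\epsilon\,\sin k+\tfrac12\sin^2\epsilon\cos^4(\tfrac{\epsilon}{2})\sin\varphi(k).
\]
\begin{itemize}
\item On \((0,\pi]\), note that \(\cos^4(\epsilon/2)>\sin^4(\epsilon/2)\). Hence \(\Arg(\cos^4(\tfrac{\epsilon}{2})+\sin^4(\tfrac{\epsilon}{2})e^{ik})\in[0,\pi/2)\), so \(\varphi(k)\in(0,\pi)\). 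Then \(\sin k\ge0\) and \(\sin\varphi>0\), and the imaginary part is strictly positive.
\item On \((\pi,2\pi)\), the first term becomes negative, and this is the main obstacle. I would try to prove the comparison
\[
\sin\varphi(k)>\frac{\sin^2(\epsilon/2)}{1+\cos\epsilon}\,\abs{\sin k},
\]
using the ratio \(\tfrac18\sin^4\epsilon\big/\tfrac12\sin^2\epsilon\cos^4(\tfrac{\epsilon}{2})\) rewritten in half-angle form.
\end{itemize}
The strategy for that comparison is to write \(\sin\varphi(k)=\Im\bigl(e^{ik/2}(c^4+s^4e^{ik})\bigr)/\abs{c^4+s^4e^{ik}}\) with \(c=\cos(\epsilon/2)\) and \(s=\sin(\epsilon/2)\). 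This gives \(\sin\varphi=(c^4+s^4)\sin(k/2)/\abs{c^4+s^4e^{ik}}\). Combined with \(\sin k=2\sin(k/2)\cos(k/2)\) and \(\sin(k/2)>0\), the comparison reduces to an inequality between \(\abs{\cos(k/2)}\), \(\abs{c^4+s^4e^{ik}}\), and \(c^4+s^4\). I would then verify that inequality by squaring, which gives a polynomial inequality in \(\cos(k/2)\) that should hold uniformly for \(0\le\epsilon<\pi/2\).

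Finally, combine the pieces. On \([0,2\pi]\), \(\mathcal{R}\) is either non-real or, at the endpoints, a positive real. By the symmetry and periodicity reductions, \(\mathcal{R}(k)\notin(-\infty,0]\) for all real \(k\).
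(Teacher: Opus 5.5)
Your overall strategy is the paper's: show that $\Im\mathcal{R}$ vanishes only at $k\equiv 0,2\pi$ and that $\mathcal{R}$ is positive there. Several pieces are correct: the periodicity and conjugation reductions, the endpoint analysis (the polynomial does factor, $\mathcal{R}(2\pi)=\tfrac{u}{4}(1+u)(5u^2-2u+1)>0$), and the sign argument on $(0,\pi]$.

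The gap is in the step you yourself flagged, $k\in(\pi,2\pi)$, where there are two errors.

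First, the ratio is
\[
\frac{\tfrac18\sin^4\epsilon}{\tfrac12\sin^2\epsilon\cos^4(\tfrac{\epsilon}{2})}=\tan^2\Bigl(\frac{\epsilon}{2}\Bigr),
\]
not $\sin^2(\tfrac{\epsilon}{2})/(1+\cos\epsilon)=\tfrac12\tan^2(\tfrac{\epsilon}{2})$. So even a proof of your displayed comparison would not give $\Im\mathcal{R}(k)>0$.

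Second, your closed form for $\sin\varphi$ is false. With your $c=\cos(\tfrac{\epsilon}{2})$ and $s=\sin(\tfrac{\epsilon}{2})$, one has $\Im\bigl(c^4e^{ik/2}+s^4e^{3ik/2}\bigr)=c^4\sin(\tfrac{k}{2})+s^4\sin(\tfrac{3k}{2})$, not $(c^4+s^4)\sin(\tfrac{k}{2})$. At $k=\pi$ your formula gives $\sin\varphi(\pi)=(c^4+s^4)/(c^4-s^4)>1$, whereas in fact $\varphi(\pi)=\tfrac{\pi}{2}$. The polynomial inequality you planned to obtain by squaring would therefore rest on an incorrect identity.

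The repair is short. Use $\sin(\tfrac{3k}{2})=\sin(\tfrac{k}{2})\bigl(4\cos^2(\tfrac{k}{2})-1\bigr)$, and set $\tau=\tan^4(\tfrac{\epsilon}{2})$ and $D=\abs{1+\tau e^{ik}}=\sqrt{(1-\tau)^2+4\tau\cos^2(\tfrac{k}{2})}$. Then
\[
\sin\varphi(k)=\sin\Bigl(\frac{k}{2}\Bigr)\,\frac{1-\tau+4\tau\cos^2(\tfrac{k}{2})}{D}=\sin\Bigl(\frac{k}{2}\Bigr)\Bigl(D+\frac{\tau(1-\tau)}{D}\Bigr).
\]
The correct comparison is
\[
\sin\varphi(k)>\tan^2\Bigl(\frac{\epsilon}{2}\Bigr)\abs{\sin k}=2\sqrt{\tau}\,\sin\Bigl(\frac{k}{2}\Bigr)\abs[\Big]{\cos\Bigl(\frac{k}{2}\Bigr)}.
\]
It follows at once from $D>2\sqrt{\tau}\abs{\cos(\tfrac{k}{2})}$, which holds because $\tau<1$, together with $\tau(1-\tau)/D>0$.

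This is exactly the paper's factorization
\[
\Im\mathcal{R}(k)=\tfrac12\sin^2\epsilon\cos^4(\tfrac{\epsilon}{2})\sin(\tfrac{k}{2})\Bigl(\tfrac{\tau(1-\tau)}{D}+D+2\sqrt{\tau}\cos(\tfrac{k}{2})\Bigr).
\]
It treats all of $(0,2\pi)$ at once, since $\sin(\tfrac{k}{2})>0$ there, so your split at $k=\pi$ becomes unnecessary.
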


\begin{proof}
    It suffices to show that \(\mathcal{R}(k) > 0\) whenever \(\Im(\mathcal{R}(k)) = 0\).
    Since \(\mathcal{R}(k)\) is \(4\pi\)-periodic, we may restrict to \(k \in [0, 4\pi)\).

    Put \(\tau \defeq \tan^{4}(\frac{\epsilon}{2})\) and \(D(k) \defeq \sqrt{(1 - \tau)^{2} + 4\tau \cos^{2}(\frac{k}{2})}\).
    A direct calculation gives
    \begin{equation}\label{eq:boundary-residual-imaginary-part}
        \Im\bigl(\mathcal{R}(k)\bigr) = \frac{1}{2} \sin^{2}(\epsilon) \cos^{4}\Bigl(\frac{\epsilon}{2}\Bigr) \sin\Bigl(\frac{k}{2}\Bigr) \biggl(\frac{\tau(1 - \tau)}{D(k)} + D(k) + 2\sqrt{\tau} \cos\Bigl(\frac{k}{2}\Bigr)\!\biggr).
    \end{equation}
    Since \(0 < \tau < 1\), we have that \(D(k) > 2\sqrt{\tau} \abs{\cos(\frac{k}{2})}\).
    The expression in parentheses is therefore strictly positive for every \(k\).
    Since the remaining prefactors are also nonzero, \(\Im(\mathcal{R}(k)) = 0\) if and only if \(\sin(\frac{k}{2}) = 0\).
    Within \([0, 4\pi)\), this occurs only at \(k = 0\) and \(k = 2\pi\).

    Evaluating \(\mathcal{R}(k)\) at these two points gives
    \begin{align}
        \mathcal{R}(0) &= 2 \cos^{2}\Bigl(\frac{\epsilon}{2}\Bigr) \biggl(1 + \sin^{4}\Bigl(\frac{\epsilon}{2}\Bigr)\!\biggr), \\
        \mathcal{R}(2\pi) &= 2 \cos(\epsilon) \cos^{2}\Bigl(\frac{\epsilon}{2}\Bigr) \biggl(\cos^{2}(\epsilon) + \sin^{4}\Bigl(\frac{\epsilon}{2}\Bigr)\!\biggr).
    \end{align}
    Both expressions are strictly positive.
    Thus, every point at which \(\mathcal{R}(k)\) is real lies on the positive real axis.
\end{proof}

\section{Asymptotics for interval entropies}\label{app:interval-entropy-asymptotics}

In this appendix, we present a rigorous derivation of the asymptotic formula \eqref{eq:interval-entropy-asymptotics} for the von Neumann entropy of an interval.
In contrast to the heuristic derivation given in the main text, the analysis below determines all three terms in \eqref{eq:interval-entropy-asymptotics}, including the constant term.
We also derive the analogous asymptotic formula \eqref{eq:interval-renyi-entropy-asymptotics} for the order-\(n\) R\'enyi entropy.

\subsection{Rigorous derivation of von Neumann entropy \(S(\rho_{\mathrm{int}})\)}

By \eqref{eq:gaussian-entropy} from the main text, the von Neumann entropy of an interval with \(N\) unit cells is given by
\begin{equation}\label{eq:interval-entropy-eigenvalue-sum}
    S(\rho_{\mathrm{int}}) = \Tr h(C_{\mathrm{int}}) = \sum_{j = 1}^{2N} h(\sigma_{j}), \qquad h(x) \defeq -x \log(x) - (1 - x) \log(1 - x),
\end{equation}
where \(\{\sigma_{j}\}_{j = 1}^{2N}\) are the eigenvalues of the restricted correlation matrix \(C_{\mathrm{int}}\).
Our task is to determine the asymptotic behavior of this sum as \(N \to \infty\).
We do this using the following result, proven in Appendix~\ref{sec:direct-eigenvalue-calculation}.

\begin{restatedtheorem}
    For \(N\) sufficiently large, the eigenvalues of \(C_{\mathrm{int}}\) are all distinct and can be written as \(\sigma_{j} = \lambda(k_{j})\) for \(j \in \{1, 2, \ldots, 2N\}\), where \(\lambda(k)\) is the analytic eigenbranch defined in \eqref{eq:analytic-eigenbranch} and \(k_{j} \in (0, 2\pi)\) is the unique solution of
    \begin{equation}\label{eq:interval-quantization-restated}
        (N + 1) k_{j} + \eta(k_{j}) = \pi j.
    \end{equation}
    Here \(\eta(k)\) is an odd real-analytic function satisfying \(\eta(k + 4\pi) = \eta(k) - 2\pi\).
\end{restatedtheorem}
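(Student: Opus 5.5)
Since this is a restatement of Theorem~\ref{thm:interval-spectrum}, the plan is to follow the transfer-matrix/Bethe-ansatz strategy of Appendix~\ref{sec:direct-eigenvalue-calculation}. Throughout, the $2N\times 2N$ matrix $C_{\mathrm{int}}$ is viewed as an $N\times N$ block-Toeplitz matrix with blocks $C_{x-y}$ that vanish for $\abs{x-y}>2$.

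\emph{Step 1: localize the spectrum.} Write $\sigma=\Psi^\dagger C_{\mathrm{int}}\Psi$ as $\frac{1}{2\pi}\int\psi^\dagger(k)C(k)\psi(k)\,\dd{k}$ with $\psi(k)$ a nonzero trigonometric polynomial. The band extrema $\lambda_-(k)\ge\lambda_-(0)$ and $\lambda_+(k)\le\lambda_+(0)$, with equality only at $k=0$, then place every eigenvalue strictly inside $(\lambda_-(0),\lambda_+(0))$; this is Lemma~\ref{lem:interval-eigenvalue-bounds}. Next, check from \eqref{eq:analytic-eigenbranch} that the analytic branch $\lambda(k)$ increases strictly on $[0,2\pi]$ from $\lambda_-(0)$ to $\lambda_+(0)$. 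Together these give a unique $k\in(0,2\pi)$ with $\sigma=\lambda(k)$.

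\emph{Step 2: solve the eigen-equation for fixed $k$.} Fix $k\in(0,2\pi)$, set $\Lambda=C_{\mathrm{int}}-\lambda(k)$, and split the block equations into the interior rows $2\le x\le N-1$ and the two endpoint rows. I would build four explicit interior solutions:
\begin{itemize}
\item two plane waves $v_\pm e^{\mp ikx}$, with their first and last blocks corrected using $C_{\pm1}^{-1}C_{\pm2}$; this works because $C_1$ is invertible and $C_2C_1^{-1}C_2=0$;
\item two edge-localized vectors $\Psi^{(L)}$ and $\Psi^{(R)}$, built from $\ker C_{\pm2}$.
\end{itemize}
Then show that the interior space $\mathcal I$ is exactly four-dimensional. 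Counting equations gives $\dim\mathcal I\ge4$. If $\dim\mathcal I>4$, a vector supported away from both endpoints would lie in $\ker\Lambda$. Embedding it in a ring of $N+2$ cells, it and two of its translates would give eigenvalue $\lambda(k)$ multiplicity at least three. This contradicts the explicit ring spectrum $\lambda(2\pi j/(N+2))$, in which each value occurs at most twice (Lemma~\ref{lem:interior-solution-dimension}).

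\emph{Step 3: impose the endpoint rows.} Substituting the general combination $\kappa_1\Psi^{(+)}+\kappa_2\Psi^{(-)}+\kappa_3\Psi^{(L)}+\kappa_4\Psi^{(R)}$ into the endpoint rows, the components proportional to $1-2\lambda(k)\neq0$ kill $\kappa_3$ and $\kappa_4$. What remains is a $2\times2$ system in $(\kappa_1,\kappa_2)$ with entries built from $\mathcal R(\pm k)$, $e^{\pm i\varphi}$ and $e^{\pm ik(N+1)}$. Its determinant condition is $R(k)^2e^{2ik(N+1)}=1$. Using Lemma~\ref{lem:boundary-residual-branch-cut}, namely $\mathcal R(k)\notin(-\infty,0]$, define the analytic phase $\eta=-\varphi+2\Arg\mathcal R$. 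The determinant condition then becomes $(N+1)k+\eta(k)\in\pi\mathbb Z$. The properties $\mathcal R(-k)=\overline{\mathcal R(k)}$, oddness of $\varphi$, and $\varphi(k+2\pi)=\varphi(k)+\pi$ give that $\eta$ is odd with $\eta(k+4\pi)=\eta(k)-2\pi$. Simplicity of each eigenvalue follows because the $2\times2$ matrix cannot vanish, since $\mathcal R\neq0$.

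\emph{Step 4: count roots.} $\eta'$ is periodic, hence bounded, so for large $N$ the map $\phi_N(k)=k+\eta(k)/(N+1)$ is strictly increasing and onto $\mathbb R$. Each $j$ therefore gives a unique root $k_j$. Using $\eta(0)=0$ and $\eta(2\pi)=-\pi$, the roots with $k_j\in(0,2\pi)$ are exactly those with $1\le j\le 2N$. This yields $2N$ distinct eigenvalues, which exhausts the spectrum.

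\emph{Main obstacle.} I expect the hardest parts to be Step~2's dimension count and verifying Lemma~\ref{lem:boundary-residual-branch-cut}. The latter needs the explicit imaginary-part formula for $\mathcal R$ to confirm that $\Arg\mathcal R$ never crosses the branch cut, which is what makes $\eta$ globally analytic. I would attack it first by computing $\Im\mathcal R$ in closed form and factoring out $\sin(k/2)$.
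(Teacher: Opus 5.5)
Your proposal is correct and follows essentially the same route as the paper's proof of Theorem~\ref{thm:interval-spectrum}: you localize the spectrum via Lemma~\ref{lem:interval-eigenvalue-bounds} and the monotonicity of $\lambda$ on $[0,2\pi]$, use four explicit interior solutions plus the ring-multiplicity argument to get $\dim\mathcal{I}=4$, reduce the endpoint rows to $R(k)^{2}e^{2ik(N+1)}=1$ with $\eta=-\varphi+2\Arg\mathcal{R}$, and count roots through the monotone map $\phi_{N}$. One step worth making explicit, which the paper also leaves implicit, is that the four interior vectors are linearly independent: for $N\geq5$ this follows because $v_{+}$ and $v_{-}$ are independent on any cell $3\leq x\leq N-2$ (since $\varphi(k)\in(0,\pi)$ for $k\in(0,2\pi)$), and because $\Psi^{(L)}$ and $\Psi^{(R)}$ have disjoint supports.
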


First, we rewrite the quantization condition \eqref{eq:interval-quantization-restated} in a more explicit form.
Define \(\phi_{N}(k) \defeq k + \frac{\eta(k)}{N + 1}\).
We claim that \(\phi_{N}\) is strictly increasing for large \(N\).
To see this, notice that the derivative \(\eta'(k)\) is \(4\pi\)-periodic and therefore bounded.
Hence \(\phi_{N}'(k) = 1 + \frac{\eta'(k)}{N + 1} > 0\) for sufficiently large \(N\).

Since \(\phi_{N}\) is strictly increasing, it is invertible.
Set \(q_{N} \defeq \phi_{N}^{-1}\).
The quantization condition \eqref{eq:interval-quantization-restated} can then be rewritten as \(k_{j} = q_{N}(\frac{\pi j}{N + 1})\).
Substituting this expression into the entropy formula \eqref{eq:interval-entropy-eigenvalue-sum}, we obtain
\begin{equation}\label{eq:interval-entropy-rescaled-sum}
    S(\rho_{\mathrm{int}}) = \sum_{j = 1}^{2N} h\biggl(\lambda_{N}\Bigl(\frac{j}{2N + 1}\Bigr)\biggr), \qquad \lambda_{N}(u) \defeq \lambda\biggl(q_{N}\Bigl(\frac{\pi(2N + 1)}{N + 1} u\Bigr)\biggr).
\end{equation}

Next, we evaluate the sum \eqref{eq:interval-entropy-rescaled-sum}.
To do this, we need a few properties of \(\lambda_{N}\), which are summarized by the following lemma.

\begin{lemma}\label{lem:rescaled-eigenbranch-properties}~
    \begin{enumerate}[label=(\roman*),ref=\roman*]
        \item \label{item:rescaled-eigenbranch-endpoints} \(\lambda_{N}(0) = 0\) and \(\lambda_{N}(1) = \lambda(2\pi)\).
        \item \label{item:rescaled-eigenbranch-symmetry} \(\lambda_{N}(u)\) is even about both \(u = 0\) and \(u = 1\); that is, \(\lambda_{N}(-u) = \lambda_{N}(u)\) and \(\lambda_{N}(1 - u) = \lambda_{N}(1 + u)\).
        \item \label{item:rescaled-eigenbranch-uniform-bounds} \(\lambda_{N}\) and its derivatives are uniformly bounded in \(N\).
        More precisely, for every nonnegative integer \(r\), there exists a constant \(\Lambda_{r} > 0\), independent of \(N\), such that \(\abs{\lambda_{N}^{(r)}(u)} \leq \Lambda_{r}\) for all \(u \in \mathbb{R}\) and all sufficiently large \(N\).
    \end{enumerate}
\end{lemma}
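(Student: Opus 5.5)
The three properties follow from the symmetries of $\eta$ and $\lambda$, transported through the inverse map $q_N=\phi_N^{-1}$. Throughout, write $\beta_N \defeq \frac{\pi(2N+1)}{N+1}$, so that $\lambda_N(u)=\lambda(q_N(\beta_N u))$. Note that $\beta_N\to 2\pi$ and $\beta_N\le 2\pi$.

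\emph{Endpoints and symmetry of $q_N$.} First, $\eta$ is odd, so $\phi_N$ is odd, and hence $q_N$ is odd with $q_N(0)=0$. Since $\lambda(0)=\lambda_-(0)=0$, this gives $\lambda_N(0)=0$ and $\lambda_N(-u)=\lambda_N(u)$, because $\lambda$ is even.

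For the endpoint $u=1$, we use $\eta(2\pi)=-\pi$. This follows from oddness together with $\eta(k+4\pi)=\eta(k)-2\pi$: evaluating at $k=-2\pi$ gives $\eta(2\pi)=-\eta(2\pi)-2\pi$. Hence
\[
\phi_N(2\pi)=2\pi-\frac{\pi}{N+1}=\beta_N,
\]
so $q_N(\beta_N)=2\pi$ and $\lambda_N(1)=\lambda(2\pi)$.

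For evenness about $u=1$, combine oddness with the quasi-periodicity to get
\[
\eta(4\pi-k)=-\eta(k)-2\pi .
\]
Therefore $\phi_N(4\pi-k)=2\beta_N-\phi_N(k)$, and so $q_N(2\beta_N-s)=4\pi-q_N(s)$. Taking $s=\beta_N(1-u)$ gives
\[
q_N\bigl(\beta_N(1+u)\bigr)=4\pi-q_N\bigl(\beta_N(1-u)\bigr).
\]
It remains to check that $\lambda(4\pi-k)=\lambda(k)$. This holds because $\lambda$ is even and $4\pi$-periodic, which is visible from \eqref{eq:analytic-eigenbranch}: it depends on $k$ only through $\cos(k/2)$. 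This establishes (\ref{item:rescaled-eigenbranch-symmetry}).

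\emph{Uniform derivative bounds.} By the chain rule, $\lambda_N^{(r)}$ is a polynomial in $\beta_N\le 2\pi$, in the derivatives $\lambda^{(s)}$ for $s\le r$, and in the derivatives $q_N^{(s)}$ for $s\le r$. Since $\lambda$ is real-analytic and $4\pi$-periodic, each $\lambda^{(s)}$ is bounded on $\mathbb{R}$. So it suffices to bound $q_N^{(s)}$ uniformly in $N$.

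For $s=1$ we have $q_N'=1/\phi_N'(q_N)$. Because $\eta'$ is $4\pi$-periodic and bounded, $\phi_N'\ge\frac12$ for large $N$, which gives $0<q_N'\le 2$. For higher $s$, differentiating the identity $\phi_N(q_N(s))=s$ repeatedly (the Fa\`a di Bruno / inverse-function formula) expresses $q_N^{(s)}$ as a polynomial in the quantities $\eta^{(t)}(q_N)/(N+1)$ for $t\le s$, divided by a power of $\phi_N'(q_N)\ge\frac12$. Each $\eta^{(t)}$ with $t\ge1$ is periodic and hence bounded. Induction on $s$ then gives bounds independent of $N$. In fact $q_N^{(s)}=O(1/N)$ for $s\ge2$, though we do not need this.

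The main point needing care is the inverse-function induction, and specifically keeping the lower bound $\phi_N'\ge\frac12$ uniform. Both rest only on the boundedness of $\eta'$, which follows from the stated periodicity of $\eta'$.
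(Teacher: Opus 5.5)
Your proposal is correct and follows essentially the same route as the paper. You use oddness of $q_N$ for the behavior at $u=0$, and the reflection identity $q_N(2\beta_N - s) = 4\pi - q_N(s)$ for the symmetry at $u=1$; you derive this directly from $\phi_N(4\pi-k)=2\beta_N-\phi_N(k)$, whereas the paper goes through the translation identity $q_N(x+2\kappa_N)=q_N(x)+4\pi$. For the bounds you differentiate the inverse function with $\phi_N'$ uniformly bounded below, as the paper does, and your direct computation $\eta(2\pi)=-\pi$ giving $q_N(\beta_N)=2\pi$ is equivalent to the paper's evaluation of the reflection identity at its midpoint.
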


\begin{proof}
    Set \(\kappa_{N} \defeq \frac{\pi(2N + 1)}{N + 1}\), so that \(\lambda_{N}(u) = \lambda(q_{N}(\kappa_{N} u))\).
    Since \(\phi_{N}\) is odd, \(q_{N}\) is also odd.
    Moreover, the property \(\eta(k + 4\pi) = \eta(k) - 2\pi\) implies \(q_{N}(x + 2\kappa_{N}) = q_{N}(x) + 4\pi\).
    Combining these two properties gives
    \begin{equation}\label{eq:inverse-quantization-reflection}
        q_{N}(2\kappa_{N} - x) = 4\pi - q_{N}(x).
    \end{equation}

    The oddness of \(q_{N}\) gives \(q_{N}(0) = 0\), while setting \(x = \kappa_{N}\) in \eqref{eq:inverse-quantization-reflection} gives \(q_{N}(\kappa_{N}) = 2\pi\).
    Therefore, \(\lambda_{N}(0) = \lambda(0) = 0\) and \(\lambda_{N}(1) = \lambda(2\pi)\), proving (\ref{item:rescaled-eigenbranch-endpoints}).

    Since \(\lambda(k)\) is even, the oddness of \(q_{N}\) implies \(\lambda_{N}(-u) = \lambda_{N}(u)\).
    Similarly, \eqref{eq:inverse-quantization-reflection} gives \(q_{N}(\kappa_{N}(1 - u)) = 4\pi - q_{N}(\kappa_{N}(1 + u))\).
    The evenness and \(4\pi\)-periodicity of \(\lambda(k)\) then imply \(\lambda_{N}(1 - u) = \lambda_{N}(1 + u)\), proving (\ref{item:rescaled-eigenbranch-symmetry}).

    To prove (\ref{item:rescaled-eigenbranch-uniform-bounds}), note that all positive-order derivatives of \(\phi_{N}\) are bounded uniformly in \(N\), while \(\phi_{N}'\) is uniformly bounded away from zero.
    Differentiating the identity \(\phi_{N}\bigl(q_{N}(x)\bigr) = x\) gives
    \begin{equation}
        q_{N}'(x) = \frac{1}{\phi_{N}'\bigl(q_{N}(x)\bigr)},
    \end{equation}
    so \(q_{N}'\) is uniformly bounded.
    Repeated differentiation of the same identity expresses each higher derivative of \(q_{N}\) in terms of bounded derivatives of \(\phi_{N}\), lower-order derivatives of \(q_{N}\), and inverse powers of \(\phi_{N}'(q_{N}(x))\).
    It follows inductively that all derivatives of \(q_{N}\) are uniformly bounded in \(N\).
    Finally, \(\kappa_{N} < 2\pi\), while \(\lambda\) and all of its derivatives are bounded because \(\lambda\) is smooth and \(4\pi\)-periodic.
    Repeated application of the chain rule to \(\lambda_{N}(u) = \lambda(q_{N}(\kappa_{N} u))\) yields the desired \(N\)-independent bounds.
\end{proof}

We are now ready to evaluate \eqref{eq:interval-entropy-rescaled-sum}.
Since \(\lambda_{N}(1) = \lambda(2\pi)\), we first extend the sum to include \(j = 2N + 1\):
\begin{equation}
    S(\rho_{\mathrm{int}}) = \sum_{j = 1}^{2N+1} h\biggl(\lambda_{N}\Bigl(\frac{j}{2N + 1}\Bigr)\biggr) - h\bigl(\lambda(2\pi)\bigr).
\end{equation}
The resulting sum is naturally analyzed using the Euler--Maclaurin formula:
\begin{equation}\label{eq:interval-entropy-euler-maclaurin}
    S(\rho_{\mathrm{int}}) = (2N + 1) \int_{0}^{1} h\bigl(\lambda_{N}(u)\bigr) \dd{u} - h\bigl(\lambda(2\pi)\bigr) + E_{2N+1}[h \circ \lambda_{N}],
\end{equation}
where \(E_{M}[f] \defeq \sum_{j = 1}^{M} f(\frac{j}{M}) - M \int_{0}^{1} f(u) \dd{u}\) is the Euler--Maclaurin error term.

Our task is to estimate \(E_{2N+1}[h \circ \lambda_{N}]\).
The key feature is that \(h(\lambda_{N}(u))\) has a logarithmic singularity at \(u = 0\), arising from a quadratic zero of \(\lambda_{N}(u)\):
\begin{equation}\label{eq:rescaled-eigenbranch-quadratic-zero}
    \lambda_{N}(u) = \gamma_{N} u^{2} + O(u^4), \qquad \gamma_{N} \defeq \pi^{2} \gamma \Bigl(\frac{2N + 1}{N + 1 + \eta'(0)}\Bigr)^{2}.
\end{equation}
To isolate this singularity, we decompose \(h(\lambda_{N}(u)) = F_{N}(u) + R_{N}(u)\), where \(F_{N}\) contains the singularity at \(u = 0\) but vanishes near the upper endpoint \(u = 1\), while \(R_{N}\) is smooth.
Specifically, we define
\begin{align}
    F_{N}(u) &\defeq -\chi(u) \lambda_{N}(u) \log(u^{2}), \\[1ex]
    R_{N}(u) &\defeq -\chi(u) \lambda_{N}(u) \log\Bigl(\frac{\lambda_{N}(u)}{u^{2}}\Bigr) - \bigl(1 - \chi(u)\bigr) \lambda_{N}(u) \log\bigl(\lambda_{N}(u)\bigr) - \bigl(1 - \lambda_{N}(u)\bigr) \log\bigl(1 - \lambda_{N}(u)\bigr),
\end{align}
where \(\chi(u)\) is a smooth function compactly supported in \((-1, 1)\) that equals \(1\) in a neighborhood of \(u = 0\).

We now use standard results to compute the Euler--Maclaurin error terms for \(F_{N}\) and \(R_{N}\).
The total error term will then be given by their sum:
\begin{equation}
    E_{2N+1}[h \circ \lambda_{N}] = E_{2N+1}[F_{N}] + E_{2N+1}[R_{N}].
\end{equation}

We start with the error term for \(R_{N}\).
By construction, \(R_{N}(u)\) is smooth and satisfies \(R_{N}(0) = 0\) and \(R_{N}(1) = h(\lambda(2\pi))\).
Moreover, Lemma~\ref{lem:rescaled-eigenbranch-properties}(\ref{item:rescaled-eigenbranch-symmetry}) implies that \(R_{N}^{(2r+1)}(0) = R_{N}^{(2r+1)}(1) = 0\) for every integer \(r \geq 0\).
Therefore, applying the regular Euler--Maclaurin estimate in Eq.~(1) of Ref.~\cite{Navot:FurtherExtension:1962}, we obtain
\begin{equation}\label{eq:entropy-regular-euler-maclaurin}
    E_{M}[R_{N}] = \frac{1}{2} h\bigl(\lambda(2\pi)\bigr) + o(M^{-\infty}).
\end{equation}

To compute the error term for \(F_{N}\), we note that \(\lambda_{N}(0) = 0\) and, since \(\lambda_{N}\) is even, \(\lambda_{N}^{(2r+1)}(0) = 0\) for every integer \(r \geq 0\).
Therefore, applying the singular Euler--Maclaurin estimate in Eq.~(7) of Ref.~\cite{Navot:FurtherExtension:1962}, we obtain
\begin{equation}\label{eq:entropy-singular-euler-maclaurin}
    E_{M}[F_{N}] = \frac{\zeta'(-2) \lambda_{N}^{(2)}(0)}{M^{2}} + O(M^{-4}).
\end{equation}

The formulas above describe the asymptotic behavior as \(M \to \infty\), with \(N\) held fixed.
In our application, however, \(M = 2N + 1\), so \(M\) and \(N\) become large together.
Since \(F_{N}\) and \(R_{N}\) depend on \(N\), the implied constants in the remainders \(O(M^{-4})\) and \(o(M^{-\infty})\) could, in principle, grow with \(N\) and alter the resulting order estimates.
Lemma~\ref{lem:rescaled-eigenbranch-properties}(\ref{item:rescaled-eigenbranch-uniform-bounds}) addresses this issue.
At any fixed order of the Euler--Maclaurin expansion, the remainder in \(E_{M}[R_{N}]\) is controlled by finitely many derivatives of \(R_{N}\), while the remainder in \(E_{M}[F_{N}]\) is controlled by finitely many derivatives of the smooth factor \(\chi(u) \lambda_{N}(u)\) multiplying \(\log(u^{2})\).
Together with the quadratic behavior of \(\lambda_{N}\) established in \eqref{eq:rescaled-eigenbranch-quadratic-zero}, Lemma~\ref{lem:rescaled-eigenbranch-properties}(\ref{item:rescaled-eigenbranch-uniform-bounds}) gives uniform bounds on these derivatives, so the implied constants can be chosen independently of \(N\).

Combining \eqref{eq:entropy-regular-euler-maclaurin} and \eqref{eq:entropy-singular-euler-maclaurin}, setting \(M = 2N + 1\), and using \(\zeta'(-2) = -\frac{\zeta(3)}{4\pi^{2}}\) together with \(\lambda_{N}^{(2)}(0) = 2\gamma_{N}\), we obtain
\begin{equation}
    E_{2N+1}[h \circ \lambda_{N}] = \frac{1}{2} h\bigl(\lambda(2\pi)\bigr) - \frac{\gamma \zeta(3)}{2N^{2}} + O(N^{-3}).
\end{equation}
Substituting this result into \eqref{eq:interval-entropy-euler-maclaurin} gives
\begin{equation}\label{eq:interval-entropy-integral-asymptotics}
    S(\rho_{\mathrm{int}}) = (2N + 1) \int_{0}^{1} h\bigl(\lambda_{N}(u)\bigr) \dd{u} - \frac{1}{2} h\bigl(\lambda(2\pi)\bigr) - \frac{\gamma \zeta(3)}{2N^{2}} + O(N^{-3}).
\end{equation}

It remains to evaluate the integral.
Making the change of variables \(k = q_{N}\bigl(\frac{\pi(2N + 1)}{N + 1} u\bigr)\), or equivalently \(u = \frac{(N + 1) k + \eta(k)}{\pi(2N + 1)}\), gives \(\dd{u} = \frac{N + 1 + \eta'(k)}{\pi(2N + 1)} \dd{k}\).
Substituting this into \eqref{eq:interval-entropy-integral-asymptotics} yields
\begin{equation}\label{eq:interval-entropy-asymptotics-appendix}
    S(\rho_{\mathrm{int}}) = N \bar{s} + \Delta_{h} - \frac{\gamma \zeta(3)}{2N^{2}} + O(N^{-3}),
\end{equation}
where
\begin{align}
    \bar{s} &\defeq \frac{1}{\pi} \int_{0}^{2\pi} h\bigl(\lambda(k)\bigr) \dd{k} = \frac{1}{2\pi} \int_{-\pi}^{\pi} g(k) \dd{k}, \\[1ex]
    \Delta_{h} &\defeq -\frac{1}{2} h\bigl(\lambda(2\pi)\bigr) + \frac{1}{\pi} \int_{0}^{2\pi} h\bigl(\lambda(k)\bigr) \bigl(1 + \eta'(k)\bigr) \dd{k}.
\end{align}
This is precisely the asymptotic formula \eqref{eq:interval-entropy-asymptotics} from the main text, with the \(O(1)\) boundary correction identified as \(\Delta_{h}\).

\subsection{Derivation of R\'enyi entropy \(S_{n}(\rho_{\mathrm{int}})\)}

We now derive the asymptotic formula \eqref{eq:interval-renyi-entropy-asymptotics} for the order-\(n\) R\'enyi entropy \(S_{n}(\rho_{\mathrm{int}})\) of an interval with \(N\) unit cells.
Here the R\'enyi parameter \(n\) can be any positive real number other than \(1\).
The calculation starts from \eqref{eq:interval-entropy-eigenvalue-sum}, with \(h(x)\) replaced by \(h_{n}(x) \defeq \frac{1}{1 - n} \log(x^{n} + (1 - x)^{n})\).

We first consider the simplest case, where \(n \geq 2\) is an \emph{integer}.
In this case, \(h_{n}\) is analytic in a neighborhood of the limiting spectrum of \(C_{\mathrm{int}}\), including at \(0\).
We can then apply known results from Toeplitz theory, briefly reviewed below in Appendix~\ref{sec:toeplitz-asymptotics}, to obtain
\begin{equation}\label{eq:integer-renyi-entropy-asymptotics}
    S_{n}(\rho_{\mathrm{int}}) = \Tr h_{n}(C_{\mathrm{int}}) = N \bar{s}_{n} + \Delta_{h_{n}} + o(e^{-\mu N})
\end{equation}
for some \(\mu > 0\).
In particular, the interval entropy has no algebraic finite-size corrections for integer \(n \geq 2\).

For noninteger \(n\), this simplification is lost: \(h_{n}\) is no longer analytic at \(0\), and the analysis instead parallels the von Neumann case.
Repeating the steps leading to \eqref{eq:interval-entropy-euler-maclaurin} gives
\begin{equation}\label{eq:interval-renyi-entropy-euler-maclaurin}
    S_{n}(\rho_{\mathrm{int}}) = (2N + 1) \int_{0}^{1} h_{n}\bigl(\lambda_{N}(u)\bigr) \dd{u} - h_{n}\bigl(\lambda(2\pi)\bigr) + E_{2N+1}[h_{n} \circ \lambda_{N}].
\end{equation}
As before, it remains to analyze the Euler--Maclaurin term \(E_{2N+1}[h_{n} \circ \lambda_{N}]\).

First, we write \(\lambda_{N}(u) = u^{2} \xi_{N}(u)\), where \(\xi_{N}(u)\) is smooth and \(\xi_{N}(0) = \gamma_{N}\).
Then, using the expansion \eqref{eq:renyi-entropy-series}, we decompose
\begin{equation}
    h_{n}\bigl(\lambda_{N}(u)\bigr) = \sum_{p = 1}^{\infty} F_{p,n,N}(u) + R_{n,N}(u),
\end{equation}
where
\begin{align}
    F_{p,n,N}(u) &\defeq \frac{1}{1 - n} \frac{(-1)^{p+1}}{p} \chi(u) \abs{u}^{2np} \Bigl(\frac{\xi_{N}(u)}{1 - \lambda_{N}(u)}\Bigr)^{np}, \\[1ex]
    R_{n,N}(u) &\defeq \frac{n}{1 - n} \log\bigl(1 - \lambda_{N}(u)\bigr) + \frac{1 - \chi(u)}{1 - n} \log\biggl(1 + \Bigl(\frac{\lambda_{N}(u)}{1 - \lambda_{N}(u)}\Bigr)^{n}\biggr).
\end{align}
Here \(\chi(u)\) is a smooth function compactly supported in \((-1, 1)\) that equals \(1\) in a neighborhood of \(u = 0\).
Its support is chosen sufficiently small that \(\lambda_{N}(u) < \frac{1}{4}\) whenever \(\chi(u) \neq 0\), for all sufficiently large \(N\).
This ensures that the expansion \eqref{eq:renyi-entropy-series} converges uniformly throughout the support of \(\chi\).

The function \(R_{n,N}(u)\) is smooth and satisfies \(R_{n,N}(0) = 0\) and \(R_{n,N}(1) = h_{n}(\lambda(2\pi))\).
Moreover, Lemma~\ref{lem:rescaled-eigenbranch-properties}(\ref{item:rescaled-eigenbranch-symmetry}) implies that \(R_{n,N}^{(2r+1)}(0) = R_{n,N}^{(2r+1)}(1) = 0\) for every integer \(r \geq 0\).
Applying the regular Euler--Maclaurin estimate given by Eq.~(1) of Ref.~\cite{Navot:FurtherExtension:1962} therefore gives
\begin{equation}
    E_{M}[R_{n,N}] = \frac{1}{2} h_{n}\bigl(\lambda(2\pi)\bigr) + o(M^{-\infty}).
\end{equation}
The singular Euler--Maclaurin estimate given by Eq.~(3) of Ref.~\cite{Navot:FurtherExtension:1962} similarly gives
\begin{equation}
    E_{M}[F_{p,n,N}] = \frac{1}{1 - n} \frac{(-1)^{p+1}}{p} \frac{\zeta(-2np) \bigl(\xi_{N}(0)\bigr)^{np}}{M^{2np}} + O(M^{-2np-2}).
\end{equation}
As in the von Neumann case, Lemma~\ref{lem:rescaled-eigenbranch-properties}(\ref{item:rescaled-eigenbranch-uniform-bounds}) ensures that these estimates are uniform in \(N\).

For noninteger \(n\), the \(p = 1\) term gives the leading algebraic contribution, while the terms with \(p \geq 2\) are \(o(M^{-2n})\).
Setting \(M = 2N + 1\) and using \(\xi_{N}(0) = \gamma_{N}\), we obtain
\begin{equation}
    E_{2N+1}[h_{n} \circ \lambda_{N}] = \frac{1}{2} h_{n}\bigl(\lambda(2\pi)\bigr) - \frac{(\pi^{2} \gamma)^{n} \zeta(-2n)}{n - 1} \frac{1}{N^{2n}} + o(N^{-2n}).
\end{equation}
Substituting this result into \eqref{eq:interval-renyi-entropy-euler-maclaurin} and evaluating the integral as in the von Neumann case gives
\begin{equation}\label{eq:interval-renyi-entropy-asymptotics-appendix}
    S_{n}(\rho_{\mathrm{int}}) = N \bar{s}_{n} + \Delta_{h_{n}} - \frac{(\pi^{2} \gamma)^{n} \zeta(-2n)}{n - 1} \frac{1}{N^{2n}} + o(N^{-2n}),
\end{equation}
where
\begin{align}
    \bar{s}_{n} &\defeq \frac{1}{\pi} \int_{0}^{2\pi} h_{n}\bigl(\lambda(k)\bigr) \dd{k} = \frac{1}{2\pi} \int_{-\pi}^{\pi} g_{n}(k) \dd{k}, \\[1ex]
    \Delta_{h_{n}} &\defeq -\frac{1}{2} h_{n}\bigl(\lambda(2\pi)\bigr) + \frac{1}{\pi} \int_{0}^{2\pi} h_{n}\bigl(\lambda(k)\bigr) \bigl(1 + \eta'(k)\bigr) \dd{k}.
\end{align}
This is precisely the asymptotic formula \eqref{eq:interval-renyi-entropy-asymptotics} from the main text, with the \(O(1)\) boundary correction identified as \(\Delta_{h_{n}}\).

Recall that \(\zeta(-2n) = 0\) for every integer \(n \geq 2\), consistent with the analyticity of \(h_{n}(x)\) and the absence of an algebraic correction.
Also, note that the limit \(n \to 1\) recovers the coefficient \(-\frac{\gamma \zeta(3)}{2N^{2}}\) found for the von Neumann entropy.

\subsection{Toeplitz asymptotics}\label{sec:toeplitz-asymptotics}

In this subsection, we show that if \(f\) is analytic in a neighborhood containing the ranges of the eigenvalue bands \(\lambda_{\pm}(k)\), then
\begin{equation}
    \Tr f(C_{\mathrm{int}}) = N \bar{f} + \Delta_{f} + O(e^{-\mu N})
\end{equation}
for some \(\mu > 0\), where \(\bar{f}\) and \(\Delta_{f}\) are constants independent of \(N\).
In particular, \(\Tr f(C_{\mathrm{int}})\) has no algebraic finite-size corrections.
This result gives the exponential remainder in the integer-order R\'enyi entropy expansion \eqref{eq:integer-renyi-entropy-asymptotics}, as well as in the von Neumann entropy expansions for the modified correlation matrices considered in Appendix~\ref{app:fine-tuning}.

To connect this statement with standard Toeplitz theory, recall that \(C_{\mathrm{int}}\) is an \(N \times N\) block matrix with block entries
\begin{equation}\label{eq:toeplitz-symbol-representation}
    (C_{\mathrm{int}})_{xy} = C_{x-y} = \frac{1}{2\pi} \int_{-\pi}^{\pi} C(k) e^{-i k (x-y)} \dd{k}, \qquad 1 \leq x, y \leq N.
\end{equation}
Since these entries depend only on the difference \(x - y\), \(C_{\mathrm{int}}\) is a \emph{block Toeplitz matrix}.
The matrix-valued function \(C(k)\) is conventionally called its \emph{symbol}, and we write \(T_{N}(C) = C_{\mathrm{int}}\).

Toeplitz matrices have been studied extensively, and we refer the reader to Refs.~\cite{Bottcher:IntroductionLarge:1999,Bottcher:AnalysisToeplitz:2006} for comprehensive mathematical treatments and to Ref.~\cite{Deift:ToeplitzMatrices:2013} for a survey of applications in statistical physics.
Here, we briefly review the results most relevant to our entropy calculation and clarify what can already be inferred from the existing theory.
We restrict our discussion to Hermitian symbols.

Of particular relevance here is the large-\(N\) behavior of trace formulas of the form \(\Tr f\bigl(T_{N}(C)\bigr)\).
The available results depend on the regularity of both \(f\) and \(C(k)\).
At the most basic level, a version of the first Szeg\H{o} limit theorem \cite[Corollary~6.26]{Bottcher:IntroductionLarge:1999} gives
\begin{equation}
    \Tr f\bigl(T_{N}(C)\bigr) = \frac{N}{2\pi} \int_{-\pi}^{\pi} \Tr f\bigl(C(k)\bigr) \dd{k} + o(N),
\end{equation}
when \(f\) and \(C(k)\) are merely continuous.
This is already enough to determine the leading extensive term in either \eqref{eq:interval-entropy-asymptotics} or \eqref{eq:integer-renyi-entropy-asymptotics}.

If \(f\) is analytic in a neighborhood containing the spectrum of \(C(k)\) and \(C(k)\) satisfies mild regularity conditions, Widom's trace formula \cite[Theorem~6.2]{Widom:AsymptoticBehavior:1976} gives
\begin{equation}\label{eq:widom-trace-asymptotics}
    \Tr f\bigl(T_{N}(C)\bigr) = \frac{N}{2\pi} \int_{-\pi}^{\pi} \Tr f\bigl(C(k)\bigr) \dd{k} + \Delta_{f} + o(1),
\end{equation}
where \(\Delta_{f}\) is an \(N\)-independent constant.
Stronger regularity assumptions on \(C(k)\) give correspondingly stronger control of the remainder.
In particular, the higher-order trace asymptotics of Section~10.91 of Ref.~\cite{Bottcher:AnalysisToeplitz:2006} imply that sufficiently rapid decay of the Fourier coefficients of \(C(k)\) yields correspondingly rapid algebraic decay of the remainder.
Since \(C(k)\) in our case has a finite Fourier series, these results imply that the remainder is \(o(N^{-\infty})\).
Theorem~5 of Ref.~\cite{Gutierrez-Gutierrez:FunctionsBanded:2007} further improves this to \(o(e^{-\mu N})\) for some \(\mu > 0\), establishing the claimed exponential remainder.

When \(f\) is not analytic in a neighborhood of the spectrum of \(C(k)\), the available results appear to be considerably more limited.
Although the first Szeg\H{o} limit theorem still determines the leading extensive term, we are not aware of general results that determine the subextensive terms relevant to the CMI.
For the von Neumann entropy in particular, related work has circumvented this nonanalyticity by introducing a regulator that makes the entropy function analytic in a neighborhood of the spectrum~\cite{Its:EntanglementEntropy:2008,Its:FisherHartwigFormula:2009}.
Toeplitz asymptotic results can then be applied to the regularized problem, after which the regulator is removed.
Even with this regularization, however, it is again not clear how to obtain the subextensive terms that enter the CMI in our setting.
We therefore instead exploit the explicit eigenvalue structure of \(C_{\mathrm{int}}\) to extract the subextensive terms directly using the Euler--Maclaurin analysis above.

\section{Fine-tuning}\label{app:fine-tuning}

In this appendix, we show how the correlation matrix changes when the two-site modes in \eqref{eq:two-site-mode} are replaced by the more general choice \(s_{j} = \cos(\theta) c_{j} + \sin(\theta) c_{j+1}\), with \(0 < \theta < \frac{\pi}{2}\).
In particular, we show that the zero of the lower eigenvalue band \(\lambda_{-}(k)\) is lifted unless \(\theta = \frac{\pi}{4}\).

The derivation of the correlation matrix follows the same steps as in Appendix~\ref{app:correlation-matrix}, with \(X\) and \(L\) replaced by
\begin{equation}
    X = \begin{pmatrix}
        \cos^{2}(\frac{\epsilon}{2}) - \sin^{2}(\frac{\epsilon}{2}) \cos(2\theta) & -\sin^{2}(\frac{\epsilon}{2}) \sin(2\theta) \\[2ex]
        -\sin^{2}(\frac{\epsilon}{2}) \sin(2\theta) & \cos^{2}(\frac{\epsilon}{2}) + \sin^{2}(\frac{\epsilon}{2}) \cos(2\theta)
    \end{pmatrix}, \qquad L = \sin(\epsilon) \begin{pmatrix}
        \cos(\theta) \\[1ex]
        \sin(\theta)
    \end{pmatrix}.
\end{equation}
The resulting momentum-space correlation matrix is
\begin{equation}\label{eq:detuned-correlation-symbol}
    C(k) = \begin{pmatrix}
        a_{+}(k) & b(k) \\[1ex]
        \bar{b}(k) & a_{-}(k)
    \end{pmatrix},
\end{equation}
where
\begin{align}
    \label{eq:detuned-correlation-symbol-diagonal} a_{\pm}(k) &= \frac{1}{2} \sin^{2}(\epsilon) - \frac{1}{4} \sin^{4}(\epsilon) \sin^{2}(2\theta) \Bigl(1 \pm \tan^{2}\Bigl(\frac{\epsilon}{2}\Bigr) \cos(2\theta)\Bigr) \cos^{2}\Bigl(\frac{k}{2}\Bigr), \\[0.5ex]
    \label{eq:detuned-correlation-symbol-offdiagonal} b(k) &= \frac{1}{4} \sin^{2}(\epsilon) \sin(2\theta) (1 + e^{i k}) \Bigl(\cos(\epsilon) + \sin^{4}\Bigl(\frac{\epsilon}{2}\Bigr) \sin^{2}(2\theta) (1 + e^{i k})\Bigr).
\end{align}

The two eigenvalue bands of \(C(k)\) are
\begin{equation}\label{eq:detuned-eigenvalue-bands}
    \lambda_{\pm}(k) = \frac{1}{2} \sin^{2}(\epsilon) \Biggl(1 - \frac{1}{2} \sin^{2}(\epsilon) \, r(k)^{2} \pm r(k) \sqrt{\cos^{2}(\epsilon) + \frac{1}{4} \sin^{4}(\epsilon) r(k)^{2}}\Biggr), \qquad r(k) \defeq \sin(2\theta) \abs[\Big]{\cos\Bigl(\frac{k}{2}\Bigr)}.
\end{equation}
To bound these eigenvalues, write \(\lambda_{\pm}(k) = \Lambda(\pm r(k))\), where
\begin{equation}
    \Lambda(x) \defeq \frac{1}{2} \sin^{2}(\epsilon) \Biggl(1 - \frac{1}{2} \sin^{2}(\epsilon) \, x^{2} + x \sqrt{\cos^{2}(\epsilon) + \frac{1}{4} \sin^{4}(\epsilon) x^{2}}\Biggr).
\end{equation}
A direct calculation shows that \(\Lambda\) is strictly increasing on the whole real line, and that \(0 < \Lambda(x) < \frac{1}{2}\) for \(\abs{x} < 1\).
If \(\theta \neq \frac{\pi}{4}\), then \(\abs{r(k)} \leq \sin(2\theta) < 1\), and therefore \(0 < \Lambda(-\sin(2\theta)) \leq \lambda_{\pm}(k) < \frac{1}{2}\).
In particular, both eigenvalue bands are (uniformly in \(k\)) bounded away from zero.

\section{Alternative representations of \(\mathcal{E}_{j}\)}\label{app:channel-representations}

In this appendix, we derive two alternative representations of the two-site channel
\begin{equation}\label{eq:local-channel-appendix}
    \mathcal{E}_{j}(\rho) = \Tr_{d_{j}}\Bigl[U_{j} \bigl(\rho \otimes I_{d_{j}}/2\bigr) U_{j}^{\dagger}\Bigr], \qquad U_{j} = e^{\epsilon(s_{j}^{\dagger} d_{j} - d_{j}^{\dagger} s_{j})},
\end{equation}
which will be useful in later discussions.
These representations are summarized by the following two propositions.
For the first, it is necessary to introduce the two Majorana operators associated with \(s_{j}\):
\begin{equation}
    \gamma_{j}^{x} \defeq s_{j}^{\dagger} + s_{j}, \qquad \gamma_{j}^{y} \defeq i(s_{j}^{\dagger} - s_{j}).
\end{equation}

\begin{proposition}\label{prop:channel-random-unitary}
    For \(0 \leq \epsilon < \frac{\pi}{2}\), define \(p \defeq \sin^{2}(\frac{\epsilon}{2})\).
    Then the channel \eqref{eq:local-channel-appendix} can be written in random-unitary form,
    \begin{equation}\label{eq:channel-random-unitary}
        \mathcal{E}_{j}(\rho) = \sum_{a,b = 0}^{1} p_{ab} \, \gamma_{j}^{ab} \rho (\gamma_{j}^{ab})^{\dagger}, \qquad p_{ab} \defeq p^{a+b} (1 - p)^{2-a-b}, \qquad \gamma_{j}^{ab} \defeq (\gamma_{j}^{y})^{b} (\gamma_{j}^{x})^{a}.
    \end{equation}
\end{proposition}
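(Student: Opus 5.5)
The plan is to reduce the channel to a single-mode problem and then compute it explicitly. The ancilla $d_{j}$ couples only to $s_{j}$. Write $\rho$ in terms of the mode $s_{j}$ together with its orthogonal complement, which includes the other mode $s_{j}' = (c_{j} - c_{j+1})/\sqrt{2}$ and all other sites. Both sides of \eqref{eq:channel-random-unitary} are linear in $\rho$. So it suffices to verify the identity on a spanning set of operators of the form $\rho = M\, O$, where $M$ is one of the monomials $1, s_{j}, s_{j}^{\dagger}, s_{j}^{\dagger}s_{j}$, and $O$ is built from modes anticommuting with $s_{j}$ and $d_{j}$.

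The first step is the Heisenberg action of $U_{j}$. Because $U_{j}$ is a Gaussian rotation in the $(s_{j}, d_{j})$ plane, we have $U_{j}^{\dagger} s_{j} U_{j} = \cos(\epsilon) s_{j} + \sin(\epsilon) d_{j}$, with the sign fixed by the generator, and similarly for $d_{j}$. The operators $O$ commute or anticommute consistently with the even generator, so $U_{j} O U_{j}^{\dagger} = O$.

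The second step evaluates the left side, $\Tr_{d}[U(M O \otimes I/2)U^{\dagger}]$, for each monomial. We expand $U M U^{\dagger}$ in the rotated modes and take the partial trace against the maximally mixed ancilla, using $\Tr_{d}(d)=\Tr_{d}(d^{\dagger})=0$ and $\Tr_{d}(d^{\dagger}d)/2=1/2$. Some care with fermionic partial-trace ordering and signs is needed here. The results should be:
\begin{itemize}
\item $1 \mapsto 1$;
\item $s_{j} \mapsto \cos(\epsilon)\, s_{j}$, and likewise for $s_{j}^{\dagger}$;
\item $s_{j}^{\dagger}s_{j} \mapsto \cos^{2}(\epsilon)\, s_{j}^{\dagger}s_{j} + \tfrac{1}{2}\sin^{2}(\epsilon)$.
\end{itemize}
Equivalently, in Majorana language, $\gamma^{x}$ and $\gamma^{y}$ are each damped by $\cos(\epsilon)$ and $i\gamma^{x}\gamma^{y}$ by $\cos^{2}(\epsilon)$. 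The operator $O$ just rides along, because odd $O$ anticommutes with everything relevant.

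The third step evaluates the right side. Conjugation by $\gamma^{x}$ leaves $\gamma^{x}$ fixed, flips the sign of $\gamma^{y}$, and flips the sign of every other mode in $O$. The same holds for $\gamma^{y}$ with the roles of $x$ and $y$ exchanged. Hence the parity sign on $O$ appears whenever $a+b$ is odd. With weights $(1-p)^{2}, p(1-p), p(1-p), p^{2}$, we then check the following:
\begin{itemize}
\item Terms built from the other modes: the factor is $[(1-p) + p(-1)^{|O|}]^{2}$, which equals $1$ for even $O$. For odd $O$ the factor is $\cos^{2}(\epsilon)$, so the check must be combined with the $M$ factor; this is consistent because only even overall $\rho$ is physical, so odd $O$ always pairs with odd $M$.
\item $\gamma^{x}$: the factor is $(1-2p)=\cos(\epsilon)$ from the $\gamma^{y}$ conjugation. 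Combined with the parity-of-$O$ factor this matches.
\end{itemize}
The main obstacle is the bookkeeping in this last check. Total parity superselection, i.e.\ restricting to even $\rho$, is what makes the signs for odd $O$ combine correctly. I would organize it by tracking the total fermion parity of $MO$ and verifying the multiplier for each of the four parity classes of $(M,O)$.
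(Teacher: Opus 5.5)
Your argument is correct, but it takes a different route from the paper's. The paper never decomposes $\rho$. It introduces the ancilla Majoranas $\eta_{j}^{x}=d_{j}^{\dagger}+d_{j}$ and $\eta_{j}^{y}=i(d_{j}^{\dagger}-d_{j})$ and rewrites $U_{j}=e^{\frac{\epsilon}{2}(\gamma_{j}^{x}\eta_{j}^{x}+\gamma_{j}^{y}\eta_{j}^{y})}$. Because the two bilinears commute and square to $-1$, this factorizes as $U_{j}=\bigl(\sqrt{1-p}+\sqrt{p}\,\gamma_{j}^{x}\eta_{j}^{x}\bigr)\bigl(\sqrt{1-p}+\sqrt{p}\,\gamma_{j}^{y}\eta_{j}^{y}\bigr)$. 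Expanding $U_{j}(\rho\otimes I/2)U_{j}^{\dagger}$, the cross terms vanish under the ancilla trace, and the Kraus form drops out in a few lines; the weights $\cos^{2}(\epsilon/2)$ and $\sin^{2}(\epsilon/2)$ appear directly rather than being checked through $1-2p=\cos\epsilon$. You instead diagonalize both maps on an operator basis and match the multipliers: $1,\cos\epsilon,\cos\epsilon,\cos^{2}\epsilon$ on $1,\gamma_{j}^{x},\gamma_{j}^{y},i\gamma_{j}^{x}\gamma_{j}^{y}$, with spectator modes riding along. This needs more bookkeeping, but it buys two things. First, your eigen-decomposition is exactly what underlies the Lindbladian form in Proposition~\ref{prop:channel-markovian}, with $e^{-t/2}=\cos\epsilon$. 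Second, it makes explicit that the identity holds only on the even-parity sector. For $\rho=O$ with $O$ an odd spectator, the left side gives $O$ while the right side gives $\cos^{2}(\epsilon)\,O$. The paper relies on the same evenness without comment, when it moves $\eta_{j}^{x}$ and $\eta_{j}^{y}$ through $\rho$. Two small improvements to your write-up. State at the outset that you verify the identity on the span of $MO$ with $M$ and $O$ of equal parity, rather than on all $MO$. Also use the factorization of the right side as $\bigl((1-p)+p\Ad_{\gamma_{j}^{y}}\bigr)\bigl((1-p)+p\Ad_{\gamma_{j}^{x}}\bigr)$. Each multiplier is then a product of two factors $(1-p)\pm p$. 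For odd $M$ this gives $1-2p=\cos\epsilon$ regardless of the parity of $O$, which replaces your somewhat garbled $\gamma^{x}$ check and settles all four parity classes at once.
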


\begin{proof}
    Define the ancilla Majorana operators \(\eta_{j}^{x} \defeq d_{j}^{\dagger} + d_{j}\) and \(\eta_{j}^{y} \defeq i(d_{j}^{\dagger} - d_{j})\).
    Then
    \begin{equation}
        U_{j} = e^{\frac{\epsilon}{2}(\gamma_{j}^{x} \eta_{j}^{x} + \gamma_{j}^{y} \eta_{j}^{y})} = \bigl(\sqrt{1 - p} + \sqrt{p} \, \gamma_{j}^{x} \eta_{j}^{x}\bigr) \bigl(\sqrt{1 - p} + \sqrt{p} \, \gamma_{j}^{y} \eta_{j}^{y}\bigr).
    \end{equation}
    Inserting this into \eqref{eq:local-channel-appendix} and tracing out the ancilla gives
    \begin{equation}
        \mathcal{E}_{j}(\rho) = (1 - p)^{2} \rho + p(1 - p) \bigl(\gamma_{j}^{x} \rho \gamma_{j}^{x} + \gamma_{j}^{y} \rho \gamma_{j}^{y}\bigr) + p^{2} \gamma_{j}^{y} \gamma_{j}^{x} \rho \gamma_{j}^{x} \gamma_{j}^{y},
    \end{equation}
    which is exactly \eqref{eq:channel-random-unitary}.
\end{proof}

\begin{proposition}\label{prop:channel-markovian}
    For \(0 \leq \epsilon < \frac{\pi}{2}\), define \(t \defeq -\log\bigl(\cos^{2}(\epsilon)\bigr)\).
    Then the channel \eqref{eq:local-channel-appendix} can be written in Markovian form,
    \begin{equation}\label{eq:channel-markovian}
        \mathcal{E}_{j}(\rho) = e^{t \mathcal{L}_{j}}(\rho), \qquad \mathcal{L}_{j}(\rho) \defeq \frac{1}{2} \bigl(s_{j} \rho s_{j}^{\dagger} + s_{j}^{\dagger} \rho s_{j} - \rho\bigr).
    \end{equation}
\end{proposition}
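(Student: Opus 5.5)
The plan is to show that the generator \(\mathcal{L}_{j}\) reproduces the same two-parameter family of channels as in Proposition~\ref{prop:channel-random-unitary}. First I will rewrite \(\mathcal{L}_{j}\) in Majorana form. Using \(s_{j} = \frac{1}{2}(\gamma_{j}^{x} + i\gamma_{j}^{y})\) and \(s_{j}^{\dagger} = \frac{1}{2}(\gamma_{j}^{x} - i\gamma_{j}^{y})\), the cross terms cancel, giving
\begin{equation}
    s_{j} \rho s_{j}^{\dagger} + s_{j}^{\dagger} \rho s_{j} = \frac{1}{2}\bigl(\gamma_{j}^{x} \rho \gamma_{j}^{x} + \gamma_{j}^{y} \rho \gamma_{j}^{y}\bigr).
\end{equation}
Hence \(\mathcal{L}_{j} = \frac{1}{4}(\mathcal{D}_{x} + \mathcal{D}_{y})\), where \(\mathcal{D}_{a}(\rho) \defeq \gamma_{j}^{a} \rho \gamma_{j}^{a} - \rho\).

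Next I will exploit the algebraic structure of these superoperators. The map \(\mathcal{P}_{a}(\rho) = \gamma_{j}^{a} \rho \gamma_{j}^{a}\) is an involution because \((\gamma_{j}^{a})^{2} = 1\). Moreover \(\mathcal{P}_{x}\) and \(\mathcal{P}_{y}\) commute: the anticommutation signs from \(\gamma^{x}\gamma^{y} = -\gamma^{y}\gamma^{x}\) appear once on each side of \(\rho\) and cancel. Consequently \(\mathcal{D}_{x}\) and \(\mathcal{D}_{y}\) commute, and
\begin{equation}
    e^{t\mathcal{L}_{j}} = e^{\frac{t}{4}\mathcal{D}_{x}}\, e^{\frac{t}{4}\mathcal{D}_{y}}.
\end{equation}
Since \(\mathcal{D}_{a} = \mathcal{P}_{a} - 1\) with \(\mathcal{P}_{a}^{2} = 1\), each exponential is elementary:
\begin{equation}
    e^{s\mathcal{D}_{a}} = \frac{1 + e^{-2s}}{2} + \frac{1 - e^{-2s}}{2}\mathcal{P}_{a}.
\end{equation}

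With \(s = t/4\) we have \(e^{-2s} = e^{-t/2} = \cos(\epsilon)\), using \(e^{-t} = \cos^{2}(\epsilon)\) and \(\cos(\epsilon) > 0\). Then
\begin{equation}
    \frac{1 - \cos\epsilon}{2} = \sin^{2}\Bigl(\frac{\epsilon}{2}\Bigr) = p, \qquad \frac{1 + \cos\epsilon}{2} = 1 - p.
\end{equation}
The product of the two factors is therefore \((1-p)^{2} + p(1-p)(\mathcal{P}_{x} + \mathcal{P}_{y}) + p^{2}\mathcal{P}_{y}\mathcal{P}_{x}\). Since \(\mathcal{P}_{y}\mathcal{P}_{x}(\rho) = \gamma^{y}\gamma^{x}\rho\gamma^{x}\gamma^{y} = \gamma^{yx}\rho(\gamma^{yx})^{\dagger}\), this matches \eqref{eq:channel-random-unitary} term by term. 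Invoking Proposition~\ref{prop:channel-random-unitary} then completes the argument.

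The only delicate point I expect is the sign bookkeeping. Two checks are needed: that the cross terms truly cancel in the Majorana rewriting, and that the commutation of \(\mathcal{P}_{x}\) and \(\mathcal{P}_{y}\) holds for arbitrary \(\rho\), including fermion-parity-odd operators. Both follow from the fact that the superoperators sandwich \(\rho\) between equal numbers of Majoranas on each side. The restriction \(\epsilon < \pi/2\) is used only to take the positive square root, so that \(\cos(\epsilon) = e^{-t/2}\).
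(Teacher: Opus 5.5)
Your proposal is correct and follows essentially the same route as the paper. You rewrite \(\mathcal{L}_{j} = \frac{1}{4}\Ad_{\gamma_{j}^{x}} + \frac{1}{4}\Ad_{\gamma_{j}^{y}} - \frac{1}{2}\) in Majorana form, exponentiate the two commuting involutions to get \(\bigl((1-p) + p\Ad_{\gamma_{j}^{x}}\bigr)\bigl((1-p) + p\Ad_{\gamma_{j}^{y}}\bigr)\) with \(p = \frac{1}{2}(1 - e^{-t/2}) = \sin^{2}(\frac{\epsilon}{2})\), and match the expansion term by term with Proposition~\ref{prop:channel-random-unitary}; your explicit checks of the cross-term cancellation and of the commutation of the two conjugation maps fill in details the paper states without comment.
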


\begin{proof}
    We show that \eqref{eq:channel-markovian} is equivalent to \eqref{eq:channel-random-unitary}.
    Using the Majorana operators introduced above, the Lindbladian \(\mathcal{L}_{j}\) can be written as
    \begin{equation}
        \mathcal{L}_{j}(\rho) = \frac{1}{4} \bigl(\gamma_{j}^{x} \rho \gamma_{j}^{x} + \gamma_{j}^{y} \rho \gamma_{j}^{y} - 2\rho\bigr),
    \end{equation}
    or equivalently,
    \begin{equation}
        \mathcal{L}_{j} = \frac{1}{4} \Ad_{\gamma_{j}^{x}} + \frac{1}{4} \Ad_{\gamma_{j}^{y}} - \frac{1}{2}.
    \end{equation}
    Since \(\Ad_{\gamma_{j}^{x}}^{2} = \Ad_{\gamma_{j}^{y}}^{2} = 1\) and \([\Ad_{\gamma_{j}^{x}}, \Ad_{\gamma_{j}^{y}}] = 0\), we can explicitly evaluate the exponential as
    \begin{equation}
        e^{t \mathcal{L}_{j}} = \biggl(\frac{1 + e^{-t/2}}{2} + \frac{1 - e^{-t/2}}{2} \Ad_{\gamma_{j}^{x}}\biggr) \biggl(\frac{1 + e^{-t/2}}{2} + \frac{1 - e^{-t/2}}{2} \Ad_{\gamma_{j}^{y}}\biggr).
    \end{equation}
    Since \(e^{-t} = \cos^{2}(\epsilon)\), we have \(p = \frac{1 - e^{-t/2}}{2}\), and hence
    \begin{equation}
        e^{t \mathcal{L}_{j}} = \Bigl((1 - p) + p \Ad_{\gamma_{j}^{x}}\Bigr) \Bigl((1 - p) + p \Ad_{\gamma_{j}^{y}}\Bigr),
    \end{equation}
    which, upon expansion, is precisely \eqref{eq:channel-random-unitary}.
\end{proof}

\section{Lindbladian analog}\label{app:lindbladian-model}

In this appendix, we provide the details of the Lindbladian analog discussed briefly in the main text.
Let \(\mathcal{N} = e^{t \mathcal{L}}\), where \(\mathcal{L}\) is the time-independent Lindbladian
\begin{equation}
    \mathcal{L}(\rho) = \sum_{j \in \mathbb{Z}} \mathcal{L}_{j}(\rho), \qquad \mathcal{L}_{j}(\rho) = \frac{1}{2} \bigl(s_{j} \rho s_{j}^{\dagger} + s_{j}^{\dagger} \rho s_{j} - \rho\bigr).
\end{equation}
Starting from the vacuum \(\rho_{0}\), define \(\rho(t) = \mathcal{N}(\rho_{0})\).
The state \(\rho(t)\) is invariant under single-site translations.
Its correlation matrix therefore depends only on the separation, \(C_{j-k} \defeq \Tr[\rho(t) c_{j}^{\dagger} c_{k}]\).
Differentiating with respect to time gives
\begin{equation}
    \dot{C}_{j-k} = \Tr\bigl[\mathcal{L}\bigl(\rho(t)\bigr) c_{j}^{\dagger} c_{k}\bigr] = X_{j-k} - \sum_{\ell \in \mathbb{Z}} \bigl(X_{j-\ell} C_{\ell-k} + C_{j-\ell} X_{\ell-k}\bigr),
\end{equation}
where \(X_{j-k} \defeq \frac{1}{2} \delta_{j,k} + \frac{1}{4} (\delta_{j+1,k} + \delta_{j,k+1})\).

This equation is most easily solved in momentum space.
Define \(C(k) \defeq \sum_{j \in \mathbb{Z}} e^{i k j} C_{j}\) and \(X(k) \defeq \sum_{j \in \mathbb{Z}} e^{i k j} X_{j}\).
The preceding equation reduces to the scalar equation \(\dot{C}(k) = -2X(k) C(k) + X(k)\).
Since \(C(k) = 0\) initially and \(X(k) = \cos^{2}(\frac{k}{2})\), the solution is
\begin{equation}
    C(k) = \frac{1}{2} \bigl(1 - e^{-2t X(k)}\bigr) = \frac{1}{2} \bigl(1 - e^{-2t \cos^{2}(\frac{k}{2})}\bigr).
\end{equation}
Fourier transforming back to real space gives \(C_{j-k} = \frac{1}{2} (\delta_{j,k} - e^{-t} I_{j-k}(-t))\), where \(I_{\nu}\) is the modified Bessel function of the first kind.
Unlike the finite-depth example in \eqref{eq:real-space-correlation-blocks}, \(C_{j-k}\) is not \emph{strictly} finite-range, though it does decay exponentially as \(\abs{j-k} \to \infty\).

We next consider the entropy of an interval of \(N\) two-site unit cells, or equivalently \(2N\) sites.
This requires the spectrum of the \(2N \times 2N\) \emph{scalar} Toeplitz matrix \(C_{\mathrm{int}}\) with entries \((C_{\mathrm{int}})_{jk} = C_{j-k}\) for \(1 \leq j, k \leq 2N\).
The following lemma gives the spectral structure needed below.

\begin{lemma}\label{lem:lindbladian-spectrum}
    For \(N\) sufficiently large, the eigenvalues of \(C_{\mathrm{int}}\) are all distinct and can be written as \(\sigma_{j} = \lambda(k_{j})\) for \(j \in \{1, 2, \ldots, 2N\}\), where
    \begin{equation}
        \lambda(k) \defeq \frac{1}{2} \bigl(1 - e^{-2t \sin^{2}(\frac{k}{2})}\bigr)
    \end{equation}
    and \(k_{j} \in (0, \pi)\) is the unique solution of
    \begin{equation}\label{eq:lindbladian-quantization}
        (2N + 1) k_{j} + \eta(k_{j}) = \pi j + \delta_{N}(j).
    \end{equation}
    Here, \(\eta(k)\) is odd, analytic, and \(2\pi\)-periodic, while \(\delta_{N}(j) = o(N^{-\infty})\) uniformly with respect to \(j\).
\end{lemma}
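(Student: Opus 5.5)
The plan is to reduce the statement to a known eigenvalue theorem for Hermitian Toeplitz matrices with smooth simple-loop symbols, namely the one of Ref.~\cite{Bogoya:EigenvaluesHermitian:2015} that Theorem~\ref{thm:interval-spectrum} already refers to. This route is forced by range: \(C_{j-k}\) is not finite-range, so the exact ``interior vector'' construction used in the proof of Theorem~\ref{thm:interval-spectrum} is not available.

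\emph{Step 1: shift the symbol.} I would first conjugate \(C_{\mathrm{int}}\) by the diagonal unitary \(\operatorname{diag}((-1)^{j})\). This leaves the spectrum unchanged and replaces the symbol \(C(k)\) by \(C(k+\pi)=\frac{1}{2}\bigl(1-e^{-2t\sin^{2}(k/2)}\bigr)=\lambda(k)\). The new symbol \(\lambda\) is real, even, entire and \(2\pi\)-periodic. It is strictly increasing on \([0,\pi]\), with a nondegenerate minimum \(\lambda(0)=0\) where \(\lambda''(0)=t>0\), and a nondegenerate maximum at \(k=\pi\) where \(\lambda''(\pi)<0\). So \(\lambda\) is a simple-loop symbol in the sense of Ref.~\cite{Bogoya:EigenvaluesHermitian:2015}.

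\emph{Step 2: localize the spectrum and quantize.} The Parseval argument of Lemma~\ref{lem:interval-eigenvalue-bounds}, applied verbatim to the scalar symbol, shows that every eigenvalue lies strictly in \((0,\lambda(\pi))\). Each eigenvalue can therefore be written uniquely as \(\lambda(s)\) with \(s\in(0,\pi)\). Following Ref.~\cite{Bogoya:EigenvaluesHermitian:2015}, for fixed \(s\) I would factor
\[
\lambda(k)-\lambda(s)=\bigl(\cos s-\cos k\bigr)\,b_{s}(k),
\]
where \(b_{s}>0\) is analytic in \(k\) and jointly analytic in \((k,s)\). I would then take the Wiener--Hopf factorization \(b_{s}=b_{s,+}b_{s,-}\), obtained by exponentiating the Riesz projections of \(\log b_{s}\). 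The function \(\eta(s)\) is defined as the corresponding phase of \(b_{s,+}(e^{is})\), up to the normalization conventions of that reference. Because \(\log b_{s}\) is analytic in \((k,s)\), \(\eta\) is real-analytic. Because \(\lambda\) is even, \(\eta\) is odd, and because \(\lambda\) is \(2\pi\)-periodic, so is \(\eta\).

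Their theorem, applied with matrix size \(n=2N\), then gives the eigenvalues as \(\lambda(s_{j})\) with \((2N+1)s_{j}+\eta(s_{j})=\pi j+\delta_{N}(j)\). Their remainder has an asymptotic expansion in powers of \(1/n\) whose coefficients are built from derivatives of the symbol beyond what enters \(\eta\). I would argue that for a symbol analytic in a strip, the determinant identity behind their proof has a correction controlled by Fourier coefficients of \(\log b_{s,\pm}\) beyond order \(n\). Those coefficients decay exponentially, so the remainder is in fact \(o(N^{-\infty})\).

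\emph{Step 3: distinctness and counting.} The map \(s\mapsto s+\eta(s)/(2N+1)\) has derivative at least \(\frac12\) for large \(N\), so each \(j\) gives exactly one root. As in item~(\ref{item:interval-spectrum-root-count}) of Theorem~\ref{thm:interval-spectrum}, the roots lying in \((0,\pi)\) are exactly \(j=1,\dots,2N\), using \(\eta(0)=\eta(\pi)=0\), which follows from oddness plus periodicity. These \(2N\) distinct eigenvalues exhaust the \(2N\times 2N\) spectrum.

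\emph{Main obstacle.} I expect the hardest part to be uniformity of \(\delta_{N}(j)\) in \(j\), particularly near the band edges \(s\to0\) and \(s\to\pi\). There \(\cos s-\cos k\) degenerates, and the published estimates in Ref.~\cite{Bogoya:EigenvaluesHermitian:2015} are stated as \(O(\cdot)\) bounds whose constants need checking at the extremes. I would first try to show that \(b_{s}\) stays bounded away from zero and analytic in a fixed complex strip uniformly in \(s\in[0,\pi]\); this is plausible because \(\lambda''\neq0\) at both extrema. Then every error term in their argument would be bounded by \(e^{-\mu N}\) with \(\mu\) independent of \(s\). If that fails, the fallback is to replace the symbol by its degree-\(m\) Fourier truncation with \(m\sim N/2\). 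The truncation error is superexponentially small because the Fourier coefficients are modified Bessel functions, so it perturbs the eigenvalues only by \(o(N^{-\infty})\) via Weyl's inequality. The remaining work would be to redo the exact banded analysis of Theorem~\ref{thm:interval-spectrum} while tracking the dependence on \(m\).
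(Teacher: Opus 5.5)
Your proposal follows the paper's proof. Both conjugate by $\operatorname{diag}((-1)^{j})$ to replace $C(k)$ by the smooth simple-loop symbol $\lambda(k)$, then specialize Theorem~2.1 of Ref.~\cite{Bogoya:EigenvaluesHermitian:2015} with $n=2N$; that theorem already gives distinctness, the quantization condition, and a remainder bound uniform in $j$, and applying it with arbitrarily large smoothness index gives $\delta_{N}(j)=o(N^{-\infty})$, so the band-edge uniformity you flag as the main obstacle (and your fallback) is not needed. (A harmless slip: $\lambda(k)\approx \tfrac{t}{4}k^{2}$ near $k=0$, so $\lambda''(0)=t/2$ rather than $t$.)
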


\begin{proof}
    We first observe that \(\lambda(k) = C(k - \pi)\).
    If \(\widetilde{C}_{\mathrm{int}}\) is the Toeplitz matrix generated by \(\lambda(k)\) in place of \(C(k)\), then \(\widetilde{C}_{\mathrm{int}} = U C_{\mathrm{int}} U^{-1}\), where \(U = \operatorname{diag}(-1, 1, -1, \ldots)\).
    Thus, \(\widetilde{C}_{\mathrm{int}}\) and \(C_{\mathrm{int}}\) have the same spectrum.

    The symbol \(\lambda(k)\) belongs to the class of ``smooth simple-loop'' symbols considered in Ref.~\cite{Bogoya:EigenvaluesHermitian:2015}.
    Our lemma is therefore a direct specialization of Theorem~2.1 of that reference, which gives the distinctness of the eigenvalues, the quantization condition \eqref{eq:lindbladian-quantization}, and the uniqueness of each \(k_{j} \in (0, \pi)\).
    Since \(\lambda(k)\) is smooth, the theorem may be applied with arbitrarily large regularity index, giving \(\delta_{N}(j) = o(N^{-\infty})\) uniformly in \(j\).

    For the present symbol, the phase function appearing in the theorem is
    \begin{equation}
        \eta(k) = \frac{t \sin(k)}{\pi} \int_{0}^{\pi} F\bigl(t \cos(u) - t \cos(k)\bigr) \dd{u}, \qquad F(x) \defeq \frac{1}{x} \log\biggl(\frac{e^{x} - 1}{x}\biggr).
    \end{equation}
    This representation shows that \(\eta(k)\) is odd, analytic, and \(2\pi\)-periodic.
\end{proof}

We can now proceed as in Appendix~\ref{app:interval-entropy-asymptotics}.
Define \(\phi_{N}(k) \defeq k + \frac{\eta(k)}{2N + 1}\) and invert the quantization condition to obtain
\begin{equation}
    S(\rho_{\mathrm{int}}) = \sum_{j = 1}^{2N} h\biggl(\lambda_{N}\biggl(\frac{j}{2N + 1} + \frac{\delta_{N}(j)}{\pi(2N + 1)}\biggr)\biggr), \qquad \lambda_{N}(u) \defeq \lambda\bigl(\phi_{N}^{-1}(\pi u)\bigr).
\end{equation}
Since \(h \circ \lambda_{N}\) has a continuous first derivative bounded uniformly in \(N\), say by \(K > 0\), we have
\begin{equation}
    \abs[\Bigg]{S(\rho_{\mathrm{int}}) - \sum_{j = 1}^{2N} h\biggl(\lambda_{N}\Bigl(\frac{j}{2N + 1}\Bigr)\biggr)} \leq \frac{K}{\pi(2N + 1)} \sum_{j = 1}^{2N} \abs{\delta_{N}(j)} \leq \frac{K}{\pi} \max_{j} \abs{\delta_{N}(j)} = o(N^{-\infty}).
\end{equation}
Thus, the error in the quantization condition contributes only a superpolynomially small correction to the entropy.

The remaining sum can be evaluated by adapting the methods of Appendix~\ref{app:interval-entropy-asymptotics}.
The result is
\begin{equation}
    S(\rho_{\mathrm{int}}) = N \bar{s} + \Delta_{h} - \frac{t \zeta(3)}{32N^{2}} + O(N^{-3}),
\end{equation}
where \(\bar{s} \defeq \frac{2}{\pi} \int_{0}^{\pi} h(\lambda(k)) \dd{k}\) and \(\Delta_{h} \defeq -\frac{1}{2} h(\lambda(\pi)) + \frac{1}{\pi} \int_{0}^{\pi} h(\lambda(k)) (1 + \eta'(k)) \dd{k}\).

\section{Qubit model}\label{app:qubit-model}

In this appendix, we explicitly construct the qubit example described in the main text and prove its equivalence to the fermionic model.
More precisely, we construct a depth-two local channel \(\mathcal{E}^{\mathrm{q}}\) on an infinite chain of qubits, obtained by replacing each fermionic gate \(\mathcal{E}_{j}\) by a corresponding qubit gate \(\mathcal{E}_{j}^{\mathrm{q}}\), such that for any fixed contiguous regions \(A\), \(B\), and \(C\),
\begin{equation}\label{eq:qubit-fermion-cmi-equivalence}
    \CMI{A}{C}{B}_{\mathcal{E}^{\mathrm{q}}(\rho_{0}^{\mathrm{q}})} = \CMI{A}{C}{B}_{\mathcal{E}(\rho_{0})},
\end{equation}
where \(\rho_{0}\) and \(\rho_{0}^{\mathrm{q}}\) are the fermionic vacuum and qubit product state, respectively.

Because the channels have fixed finite depth, the only gates that affect \(\CMI{A}{C}{B}\) are those within the causal cone of \(ABC\).
Therefore, for the purposes of showing \eqref{eq:qubit-fermion-cmi-equivalence}, we need only retain these gates and the sites on which they act.
This amounts to truncating the infinite chain to a \emph{finite} open chain surrounding \(ABC\), whose gate structure is shown in Fig.~\ref{fig:finite-chain-causal-cone}.

\begin{figure}
    \centering
    \input{Figures/FDLC-Restricted}
    \caption{\label{fig:finite-chain-causal-cone}
        Channel structure of a finite open chain containing the causal cone of \(ABC\).
    }
\end{figure}

For concreteness, we take this chain to consist of \(2N\) sites indexed by \(\{0, 1, \ldots, 2N - 1\}\).
Let \(\rho_{0} = \ketbra{\Omega}{\Omega}\) denote the fermionic vacuum on the open chain.
The fermionic channel \(\mathcal{E}\) now consists of \(2N - 1\) \emph{distinct} gates,
\begin{equation}\label{eq:finite-chain-channel}
    \mathcal{E} = \mathcal{E}_{1} \circ \mathcal{E}_{3} \circ \cdots \circ \mathcal{E}_{2N-3} \circ \mathcal{E}_{0} \circ \mathcal{E}_{2} \circ \cdots \circ \mathcal{E}_{2N-2}.
\end{equation}
By Proposition~\ref{prop:channel-random-unitary}, these local gates have the random-unitary representation
\begin{equation}\label{eq:channel-random-unitary-qubit-derivation}
    \mathcal{E}_{j}(\rho) = \sum_{a,b = 0}^{1} p_{ab} \, \gamma_{j}^{ab} \rho (\gamma_{j}^{ab})^{\dagger}, \qquad p_{ab} = p^{a+b} (1 - p)^{2-a-b},
\end{equation}
where \(p = \sin^{2}(\frac{\epsilon}{2})\) and \(\gamma_{j}^{ab} = (\gamma_{j}^{y})^{b} (\gamma_{j}^{x})^{a}\).
Here \(\gamma_{j}^{x} = s_{j}^{\dagger} + s_{j}\) and \(\gamma_{j}^{y} = i(s_{j}^{\dagger} - s_{j})\) are the Majorana operators associated with the mode \(s_{j}\).

We introduce \(2N\) qubits on the same chain with Pauli operators \(X_{j}\), \(Y_{j}\), and \(Z_{j}\), and use the local basis \(Z \ket{0} = \ket{0}\), \(Z \ket{1} = -\ket{1}\).
The initial qubit state is \(\rho_{0}^{\mathrm{q}} = \bigotimes_{j = 0}^{2N-1} \ketbra{0}{0}\).
The Jordan--Wigner transformation is the unitary map \(J\) from the fermionic Fock space to the qubit Hilbert space defined by
\begin{equation}
    J\Bigl[(c_{0}^{\dagger})^{n_{0}} (c_{1}^{\dagger})^{n_{1}} \cdots (c_{2N-1}^{\dagger})^{n_{2N-1}} \ket{\Omega}\Bigr] = \ket{n_{0}} \otimes \ket{n_{1}} \otimes \cdots \otimes \ket{n_{2N-1}}.
\end{equation}
For our purposes, the relevant identities are
\begin{equation}
    J \rho_{0} J^{\dagger} = \rho_{0}^{\mathrm{q}}, \qquad J c_{j} J^{\dagger} = \frac{1}{2} (X_{j} + iY_{j}) P_{j-1},
\end{equation}
where \(P_{j} \defeq \prod_{k = 0}^{j} Z_{k}\) are the Jordan--Wigner parity strings.

We now construct the local qubit gates \(\mathcal{E}_{j}^{\mathrm{q}}\).
Consider the Jordan--Wigner images of the Majorana operators \(\gamma_{j}^{x}\) and \(\gamma_{j}^{y}\).
Each consists of a local two-site operator together with the same parity string \(P_{j-1}\):
\begin{alignat}{2}
    J \gamma_{j}^{x} J^{\dagger} &= G_{j}^{x} P_{j-1}, &\qquad G_{j}^{x} &\defeq \frac{1}{\sqrt{2}} \bigl(X_{j} + Z_{j} X_{j+1}\bigr), \\[1ex]
    J \gamma_{j}^{y} J^{\dagger} &= G_{j}^{y} P_{j-1}, &\qquad G_{j}^{y} &\defeq \frac{1}{\sqrt{2}} \bigl(Y_{j} + Z_{j} Y_{j+1}\bigr).
\end{alignat}
It follows that the composite unitary \(\gamma_{j}^{ab}\) has Jordan--Wigner image
\begin{equation}\label{eq:qubit-string-compensated-unitaries}
    J \gamma_{j}^{ab} J^{\dagger} = R_{j}^{ab} P_{j}^{a+b}, \qquad R_{j}^{ab} \defeq (G_{j}^{y})^{b} (G_{j}^{x})^{a} Z_{j}^{a+b}.
\end{equation}
Here, we included the factor \(Z_{j}^{a+b}\) in \(R_{j}^{ab}\) to compensate for extending the parity string from \(P_{j-1}\) to \(P_{j}\).
This convention is useful because the parity string associated with \(\mathcal{E}_{j}\) can now be commuted through all previously applied gates.
Then, since the parity string acts trivially on \(\rho_{0}^{\mathrm{q}}\), it does not affect the action of the circuit, as we will see below.
Motivated by this, we define the corresponding qubit gate \(\mathcal{E}_{j}^{\mathrm{q}}\) by omitting these strings, leaving only the local unitaries \(R_{j}^{ab}\):
\begin{equation}
    \mathcal{E}_{j}^{\mathrm{q}}(\rho) \defeq \sum_{a,b = 0}^{1} p_{ab} \, R_{j}^{ab} \rho (R_{j}^{ab})^{\dagger}.
\end{equation}
The full qubit channel \(\mathcal{E}^{\mathrm{q}}\) is obtained by replacing each gate \(\mathcal{E}_{j}\) in \eqref{eq:finite-chain-channel} by \(\mathcal{E}_{j}^{\mathrm{q}}\).
We can now state our main equivalence result.

\begin{proposition}
    Let \(\mathcal{E}_{j_{1}}, \mathcal{E}_{j_{2}}, \ldots, \mathcal{E}_{j_{k}}\) be a sequence of distinct fermionic gates.
    Then
    \begin{equation}\label{eq:distinct-gate-qubit-equivalence}
        \mathcal{E}_{j_{k}}^{\mathrm{q}} \circ \cdots \circ \mathcal{E}_{j_{1}}^{\mathrm{q}}(\rho_{0}^{\mathrm{q}}) = J\Bigl(\mathcal{E}_{j_{k}} \circ \cdots \circ \mathcal{E}_{j_{1}}(\rho_{0})\Bigr) J^{\dagger}.
    \end{equation}
\end{proposition}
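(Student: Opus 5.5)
We would prove \eqref{eq:distinct-gate-qubit-equivalence} by a stronger inductive statement that tracks the parity strings explicitly. Conjugating each fermionic gate by \(J\) and using \eqref{eq:qubit-string-compensated-unitaries}, the Jordan--Wigner image of \(\mathcal{E}_{j}\) is
\begin{equation*}
    \widetilde{\mathcal{E}}_{j}(\sigma) \defeq J \mathcal{E}_{j}(J^{\dagger} \sigma J) J^{\dagger} = \sum_{a,b = 0}^{1} p_{ab} \, R_{j}^{ab} P_{j}^{a+b} \sigma P_{j}^{a+b} (R_{j}^{ab})^{\dagger}.
\end{equation*}
The right-hand side of \eqref{eq:distinct-gate-qubit-equivalence} is therefore \(\widetilde{\mathcal{E}}_{j_{k}} \circ \cdots \circ \widetilde{\mathcal{E}}_{j_{1}}(\rho_{0}^{\mathrm{q}})\), so it suffices to show that each \(P_{j}^{a+b}\) can be removed. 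Each Kraus operator of \(\widetilde{\mathcal{E}}_{j}\) is a product \(R_{j}^{ab} P_{j}^{a+b}\), and \(\rho_{0}^{\mathrm{q}}\) is invariant under conjugation by any parity string, since \(Z_{k} \ket{0} = \ket{0}\). The plan is to commute every string to the right through all earlier Kraus operators until it hits \(\rho_{0}^{\mathrm{q}}\).

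The key commutation fact concerns \(P_{j} = \prod_{k=0}^{j} Z_{k}\) and a local unitary \(R_{i}^{ab}\) supported on sites \(i, i+1\). The string \(P_{j}\) anticommutes with \(X_{k}\) and \(Y_{k}\) for \(k \leq j\) and commutes with all \(Z\)'s. When \(i + 1 \leq j\), the support of \(R_{i}^{ab}\) lies entirely under the string. Each \(G_{i}^{x}\) or \(G_{i}^{y}\) then contains exactly one \(X\) or \(Y\) factor per term, so it anticommutes with \(P_{j}\), and \(Z_{i}\) commutes. Hence \(P_{j} R_{i}^{ab} = (-1)^{a+b} R_{i}^{ab} P_{j}\). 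When \(i \geq j+1\), the supports are disjoint and the operators commute. The only problematic case is \(i = j\), where the string ends in the middle of the support; it does not arise when commuting past a different gate. It also does not arise within a gate, because the string \(P_{j}\) sits to the right of \(R_{j}^{ab}\) in the Kraus operator. The case \(i = j\) is precisely why the gates must be distinct, and it explains the choice of \(Z_{j}^{a+b}\) compensation that extends the string to \(P_{j}\).

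Given this, we expand the composite channel as a sum over Kraus sequences. In each term the string \(P_{j_{m}}^{a_m+b_m}\) appears on both the left and the right of \(\sigma\). Moving it past the earlier \(R_{j_{l}}^{a_l b_l}\) with \(l < m\) produces the same sign \(\pm 1\) on both sides, since the factor on the right is the adjoint. These signs cancel in \(K \sigma K^{\dagger}\). Passing strings past other strings is trivial because all \(P\)'s commute. Each string thus reaches \(\rho_{0}^{\mathrm{q}}\) from both sides, where \(P \rho_{0}^{\mathrm{q}} P = \rho_{0}^{\mathrm{q}}\) absorbs it. What remains is exactly \(\mathcal{E}^{\mathrm{q}}_{j_{k}} \circ \cdots \circ \mathcal{E}^{\mathrm{q}}_{j_{1}}(\rho_{0}^{\mathrm{q}})\). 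An induction on \(k\) packages this cleanly. The inductive hypothesis would state that the partial state equals \(\sum_{\text{terms}} K \rho_{0}^{\mathrm{q}} K^{\dagger}\), with \(K\) products of the \(R\)'s of distinct earlier gates.

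The main obstacle we expect is getting the local commutation bookkeeping exactly right. This includes verifying the Jordan--Wigner images \(G^{x,y}_{j} P_{j-1}\) and checking that the compensating \(Z_{j}^{a+b}\) correctly converts \(P_{j-1}^{a+b}\) to \(P_{j}^{a+b}\) with the stated ordering. In particular, \(Z_{j}\) must be placed to the right of the \(G\)'s, and it must also be handled consistently when \(a = b = 1\), where \(P_{j}^{2} = 1\). The sign analysis of the \(i = j\) configuration also needs care. Distinctness of the gates excludes that configuration between different gates, and the ordering inside a single Kraus operator excludes it within a gate.
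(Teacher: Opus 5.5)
Your proposal is correct and is essentially the paper's proof. The paper uses the same commutation rule ($P_{j}$ commutes with $R_{\ell}^{ab}$ for $j<\ell$ and picks up $(-1)^{a+b}$ for $j>\ell$), packages it as the covariance $P_{j}\mathcal{E}_{\ell}^{\mathrm{q}}(\cdot)P_{j}=\mathcal{E}_{\ell}^{\mathrm{q}}(P_{j}(\cdot)P_{j})$ for $j\neq\ell$, and inducts on $k$ to push each string $P_{j_k}^{a+b}$ through the earlier qubit gates onto $\rho_{0}^{\mathrm{q}}$; your Kraus-level cancellation of the signs on $K$ and $K^{\dagger}$ is just that covariance identity unpacked.
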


\begin{proof}
    First, we note that
    \begin{equation}
        P_{j} R_{\ell}^{ab} = \begin{cases}
            R_{\ell}^{ab} P_{j}, & j < \ell, \\[1ex]
            (-1)^{a+b} R_{\ell}^{ab} P_{j}, & j > \ell.
        \end{cases}
    \end{equation}
    Consequently, for \(j \neq \ell\),
    \begin{equation}\label{eq:parity-channel-covariance}
        P_{j} \mathcal{E}_{\ell}^{\mathrm{q}}(\,\cdot\,) P_{j} = \mathcal{E}_{\ell}^{\mathrm{q}}\bigl(P_{j} (\,\cdot\,) P_{j}\bigr).
    \end{equation}

    We now prove \eqref{eq:distinct-gate-qubit-equivalence} by induction on \(k\).
    For any single gate \(\mathcal{E}_{j}\), using \eqref{eq:qubit-string-compensated-unitaries} gives
    \begin{equation}
        J \mathcal{E}_{j}(\rho_{0}) J^{\dagger} = \sum_{a,b = 0}^{1} p_{ab} \, R_{j}^{ab} P_{j}^{a+b} \rho_{0}^{\mathrm{q}} P_{j}^{a+b} (R_{j}^{ab})^{\dagger}.
    \end{equation}
    Since \(P_{j} \rho_{0}^{\mathrm{q}} P_{j} = \rho_{0}^{\mathrm{q}}\), the parity strings drop out, so \(J \mathcal{E}_{j}(\rho_{0}) J^{\dagger} = \mathcal{E}_{j}^{\mathrm{q}}(\rho_{0}^{\mathrm{q}})\).
    Thus, \eqref{eq:distinct-gate-qubit-equivalence} holds for \(k = 1\).

    Now suppose inductively that \eqref{eq:distinct-gate-qubit-equivalence} holds for every sequence of \(k - 1\) distinct gates.
    Using \eqref{eq:qubit-string-compensated-unitaries} and the inductive hypothesis gives
    \begin{equation}\label{eq:distinct-gate-equivalence-induction}
        J\Bigl(\mathcal{E}_{j_{k}} \circ \cdots \circ \mathcal{E}_{j_{1}}(\rho_{0})\Bigr) J^{\dagger} = \sum_{a,b = 0}^{1} p_{ab} \, R_{j_{k}}^{ab} P_{j_{k}}^{a+b} \Bigl(\mathcal{E}_{j_{k-1}}^{\mathrm{q}} \circ \cdots \circ \mathcal{E}_{j_{1}}^{\mathrm{q}}(\rho_{0}^{\mathrm{q}})\Bigr) P_{j_{k}}^{a+b} (R_{j_{k}}^{ab})^{\dagger}.
    \end{equation}
    Since the gates are distinct, \(j_{k} \notin \{j_{1}, \ldots, j_{k-1}\}\).
    Repeated use of \eqref{eq:parity-channel-covariance}, together with the invariance of \(\rho_{0}^{\mathrm{q}}\) under \(P_{j_{k}}\), therefore gives
    \begin{equation}
        P_{j_{k}}^{a+b} \Bigl(\mathcal{E}_{j_{k-1}}^{\mathrm{q}} \circ \cdots \circ \mathcal{E}_{j_{1}}^{\mathrm{q}}(\rho_{0}^{\mathrm{q}})\Bigr) P_{j_{k}}^{a+b} = \mathcal{E}_{j_{k-1}}^{\mathrm{q}} \circ \cdots \circ \mathcal{E}_{j_{1}}^{\mathrm{q}}(\rho_{0}^{\mathrm{q}}).
    \end{equation}
    Substituting this into \eqref{eq:distinct-gate-equivalence-induction} yields
    \begin{equation}
        J\Bigl(\mathcal{E}_{j_{k}} \circ \cdots \circ \mathcal{E}_{j_{1}}(\rho_{0})\Bigr) J^{\dagger} = \mathcal{E}_{j_{k}}^{\mathrm{q}} \circ \cdots \circ \mathcal{E}_{j_{1}}^{\mathrm{q}}(\rho_{0}^{\mathrm{q}}),
    \end{equation}
    which closes the induction.
\end{proof}

Applying the proposition to the complete sequence of gates in \eqref{eq:finite-chain-channel} gives \(J \mathcal{E}(\rho_{0}) J^{\dagger} = \mathcal{E}^{\mathrm{q}}(\rho_{0}^{\mathrm{q}})\).
Since \(AB\), \(BC\), \(B\), and \(ABC\) are all contiguous intervals, the Jordan--Wigner equivalence descends to the corresponding reduced states, which therefore have identical entropies.
It follows that \eqref{eq:qubit-fermion-cmi-equivalence} holds.

\end{document}